\colorlet{DarkRed}{red!50!black}
\colorlet{DarkGreen}{green!50!black}
\colorlet{DarkBlue}{blue!50!black}
\DeclareMathOperator{\gr}{gr}
\DeclareMathOperator{\opt}{opt}
\DeclareMathOperator{\argmax}{argmax}
\DeclareMathOperator{\E}{\mathbb{E}}
\let\epsilon\varepsilon
\let\eps\varepsilon
\newcommand{\ones}{\mathds{1}}
\newcommand{\ZZ}{\ensuremath{\mathbb{Z}}}
\newcommand{\RR}{\ensuremath{\mathbb{R}}}
\newcommand{\CCC}{\mathcal{C}}
\newcommand{\SSS}{\mathcal{S}}
\newcommand{\XXX}{\mathcal{X}}
\newcommand{\LLL}{\mathcal{L}}
\newcommand{\PPP}{\mathcal{P}}
\newcommand{\OOO}{\mathcal{O}}
\newcommand{\QQQ}{\mathcal{Q}}
\definecolor{orange}{RGB}{235,90,0}
\definecolor{darkorange}{RGB}{175,30,0}
\definecolor{turkis}{RGB}{131,182,182}
\definecolor{darkturkis}{RGB}{31,82,82}
\definecolor{green}{RGB}{102,180,0}
\definecolor{darkgreen}{RGB}{51,90,0}
\definecolor{myblue}{RGB}{0,0,213}
\definecolor{mydarkblue}{RGB}{0,0,100}
\definecolor{mybrightblue}{HTML}{74B0E4}
\definecolor{mybrighterblue}{HTML}{B3EAFA}
\definecolor{lila}{RGB}{102,0,102}
\definecolor{darkred}{RGB}{139,0,0}
\definecolor{darkyellow}{RGB}{188,135,2}
\definecolor{brightgray}{RGB}{200,200,200}
\definecolor{darkgray}{RGB}{50,50,50}
\definecolor{amaranth}{rgb}{0.9, 0.17, 0.31}
\definecolor{alizarin}{rgb}{0.82, 0.1, 0.26}
\definecolor{amber}{rgb}{1.0, 0.75, 0.0}
\definecolor{green(ryb)}{rgb}{0.4, 0.69, 0.2}
\definecolor{hanblue}{rgb}{0.27, 0.42, 0.81}
\definecolor{grannysmithapple}{rgb}{0.66, 0.89, 0.63}
\newtheorem{theorem}{Theorem}[section]
\newtheorem{lemma}[theorem]{Lemma}
\newtheorem{observation}[theorem]{Observation}
\newtheorem{lemma-rstbl}[theorem]{Lemma}
\newtheorem{obs-rstbl}[theorem]{Observation}
\newtheorem{theorem-rstbl}[theorem]{Theorem}
\newcommand{\nw}{\omega}
\DeclareMathOperator{\pof}{PoF}
\title{Fairness in Influence Maximization through Randomization\footnote{An earlier version of this article has been published in the Thirty-Fifth AAAI Conference on Artificial Intelligence (AAAI 2021) with the same title~\cite{BeckerDGG21}.}}
\author[1]{Ruben Becker}
\author[1]{Gianlorenzo D'Angelo}
\author[1]{Sajjad Ghobadi}
\author[2]{Hugo Gilbert}
\affil[1]{\normalsize Gran Sasso Science Institute, 67100 L'Aquila, Italy}
\affil[2]{Université Paris-Dauphine, Université PSL, CNRS, LAMSADE, 75016 Paris, France}
\date{}
\begin{document}
\maketitle
\begin{abstract}
The influence maximization paradigm has been used by researchers in various fields in order to study how information spreads in social networks. While previously the attention was mostly on efficiency, more recently fairness issues have been taken into account in this scope. In the present paper, we propose to use randomization as a mean for achieving fairness. While this general idea is not new, it has not been applied in this area.

Similar to previous works like Fish et al.~(WWW '19) and Tsang et al.~(IJCAI '19), we study the maximin criterion for (group) fairness. In contrast to their work however, we model the problem in such a way that, when choosing the seed sets, probabilistic strategies are possible rather than only deterministic ones. We introduce two different variants of this probabilistic problem, one that entails probabilistic strategies over nodes (node-based problem) and a second one that entails probabilistic strategies over sets of nodes (set-based problem). After analyzing the relation between the two probabilistic problems, we show that, while the original deterministic maxmin problem was inapproximable, both probabilistic variants permit approximation algorithms that achieve a constant multiplicative factor of $1-1/e$ minus an additive arbitrarily small error that is due to the simulation of the information spread. For the node-based problem, the approximation is achieved by observing that a polynomial-sized linear program approximates the problem well. For the set-based problem, we show that a multiplicative-weight routine can yield the approximation result.

For an experimental study, we provide implementations of multiplicative-weight routines for both the set-based and the node-based problem and compare the achieved fairness values to existing methods. Maybe non-surprisingly, we show that the ex-ante values, i.e., minimum expected value of an individual (or group) to obtain the information, of the computed probabilistic strategies are significantly larger than the (ex-post) fairness values of previous methods. This indicates that studying fairness via randomization is a worthwhile path to follow. Interestingly and maybe more surprisingly, we observe that even the ex-post fairness values, i.e., fairness values of sets sampled according to the probabilistic strategies computed by our routines, dominate over the fairness achieved by previous methods on many of the instances tested.
\end{abstract}

\section{Introduction}
The internet has revolutionized the way information spreads through the population.
One positive consequence is that important and valuable campaigns can be spread at little cost quite efficiently thanks to news platforms and social media.
Examples of such valuable campaigns are HIV prevention~\cite{wilder2018end,yadav2018bridging}, public health awareness~\cite{valente2007identifying} or financial inclusion~\cite{banerjee2013diffusion}. The information spreading process can be notably optimized by algorithms that identify key people in the network to act as seed users to initiate an efficient spread of the campaign.
The well known influence maximization problem formalizes this objective~\cite{kempe}: given a network and a probabilistic diffusion model, the task is to find a set of $k$ seed nodes from which the campaign will start to spread, in order to maximize the expected number of reached nodes. The problem has received a tremendous amount of attention~\cite{balancing,BorgsBCL14,borodin2017strategyproof,budak2011limiting,chen2017interplay,CohenDPW14,TangSX15,TangXS14}.

In the influence maximization problem, the objective is only concerned with the efficiency of the diffusion process, it does not take into account any fairness criteria. In order to underline the need of studying such fairness criteria in this scope, we start with the following motivating example:
Consider a simple random graph modeling a network similar to a core-periphery structure~\cite{BorgattiE00}. The network consists of two communities, the core $C$ and the periphery $D$. The probability of intra-community edges are $p_{C}$ and $p_{D}$ respectively, while the probability of inter-community edges is $q$. For concreteness, assume that $|C|=50$, $|D|=150$ and $p_{C}=0.5$, $p_{D}=0.1$ and $q=0.1$. I.e., we obtain a random network consisting of a well-connected rather small core and worse connected larger part of the graph that we refer to as the periphery of the network. Assume now that we use a state-of-the-art algorithm for influence maximization, e.g., the TIM implementation of the greedy algorithm due to Tang et al.~\cite{TangXS14}, in order to compute a seed set of size $k$ with large expected coverage in the graph. As we can see in Figure~\ref{fig:motivation-block-stochastic} on the left, this can lead to a significant discrepancy in the probability of nodes being reached in the two communities. For concreteness, if $k=5$, an average node in the core is reached with probability larger than $0.25$, while the average node in the periphery is reached only with a probability of around $0.05$.\footnote{In this example, we use the Independent Cascade model in order to model the information spread, edge weights are chosen to be $0.05$. We refer the reader to Section~\ref{sec : preliminaries} for the precise definition of the model used.} In the right plot in Figure~\ref{fig:motivation-block-stochastic}, we observe that the algorithm selects most of the seed nodes in the core. This is clearly because nodes in the core are better connected and thus choosing such nodes as seeds results in better coverage.
In summary, we observe that maximizing expected spread without considering any fairness criteria can lead to unfair coverage with respect to communities or groups in the network. Such observations have motivated researchers more recently, to take fairness issues in influence maximization into account. We continue by giving a brief review of the related literature.

\begin{figure}[ht]
    \centering
    \includegraphics[width=.49\linewidth]{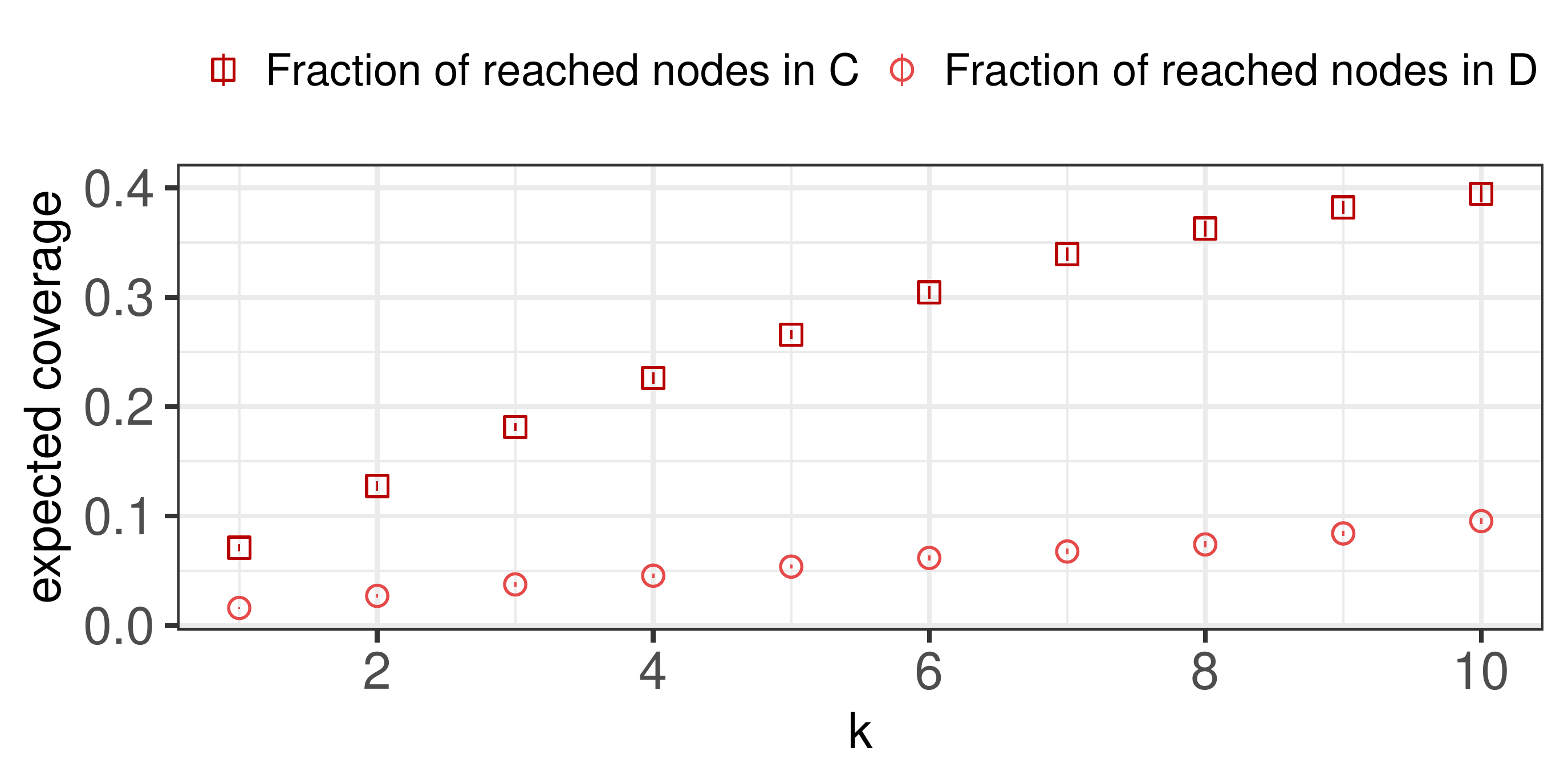}
    \includegraphics[width=.49\linewidth]{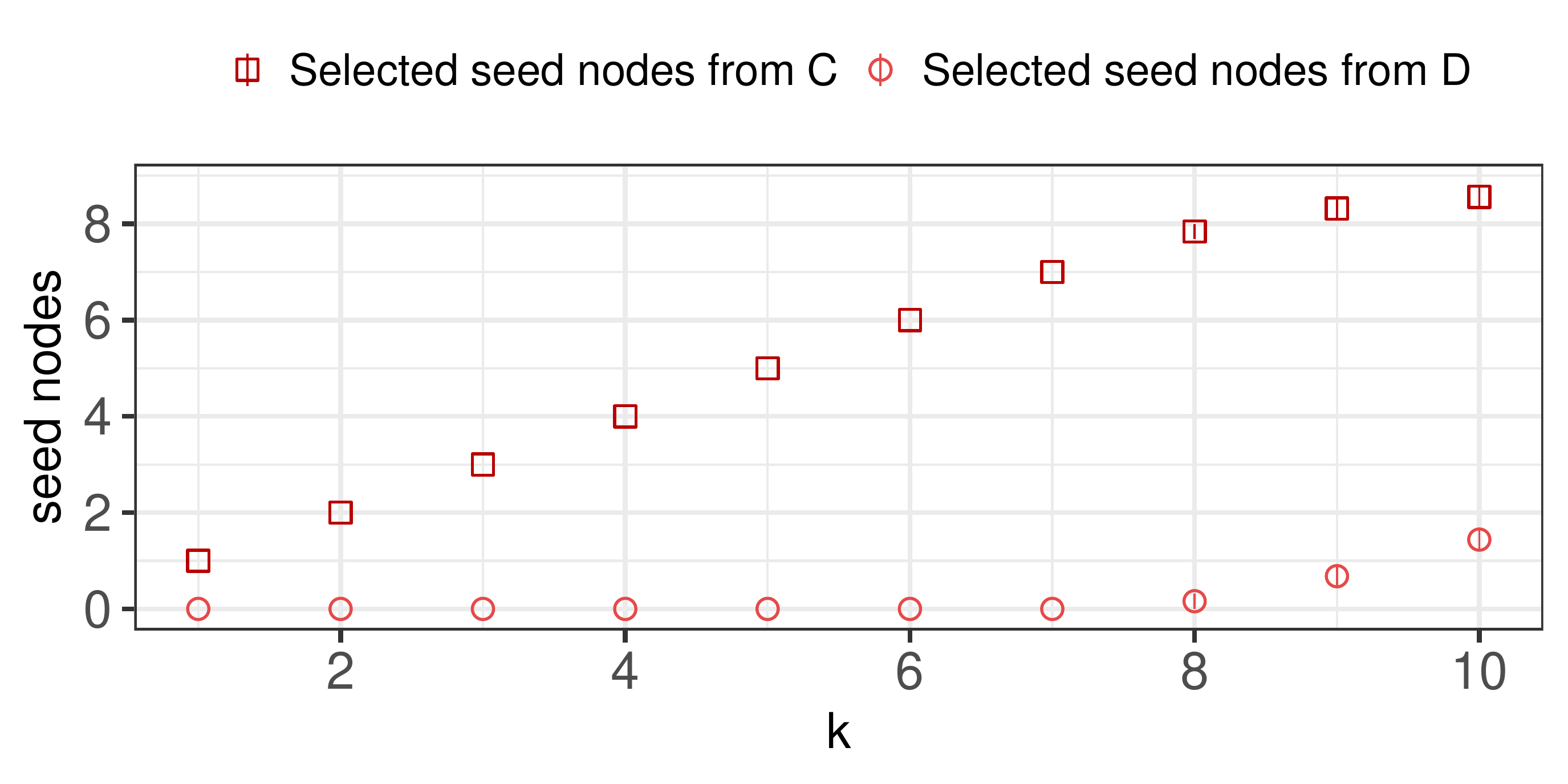}

    \caption{Results for the core-periphery model with a core of $50$ nodes and a periphery of $150$ nodes. The budget $k$ is increasing from $1$ to $10$.}%
    \label{fig:motivation-block-stochastic}
\end{figure}

A first sequence of papers has investigated a setting in which several competing players are paying the network's host to influence users in their favor.
The goal in these works is to ensure that the host picks seed nodes in a fair way with respect to the different players~\cite{chen2020maximizing,lu2013bang,yu2017fair}.
Another line of research has investigated the fairness of the diffusion process with respect to the vertices, i.e., the users in the network.
Indeed when only efficiency is being optimized, some users, or communities, i.e., groups of users, might get an unfairly low coverage~\cite{ali2019fairness,fish2019gaps,farnad2020unifying,khajehnejad2020adversarial,rahmattalabi2020fair,stoicaseeding,tsang2019group}.
An intuitive criterion to consider here is the maximin criterion. Here, the goal is to choose at most $k$ seed nodes to maximize the minimum probability of a user being reached. When generalized to groups of users or communities, the goal becomes to maximize the minimum expected fraction of users reached per community. The first problem has been considered by Fish et al.~\cite{fish2019gaps}, who showed that the problem is hard to approximate to any constant approximation factor, unless $P=NP$. The second problem has been considered by Tsang et al.~\cite{tsang2019group}. Building on previous work by \cite{ChekuriVZ10} and \cite{Udwani18}, the authors designed an algorithm with an asymptotic approximation ratio of $1-1/e$ provided that the number of communities is not much larger than $k$.

In the present paper, we extend these works by studying the impact of randomization on fairness.
Our approach is to allow randomized strategies for choosing the seed nodes rather than to restrict to deterministic strategies. 
Indeed, we introduce two randomized versions of the maximin problem. In our first problem, we consider randomized strategies that pick nodes as seeds independently. In contrast, in the second problem, we allow any probabilistic strategies that choose seed sets of expected size $k$, i.e., not restricting to independent distributions.
It is easy to envision that such randomized strategies provide certain advantages over deterministic ones. In fact, the use of randomization is a longstanding idea in computational social choice, where it often leads to more tractable results and more expressive solutions via for instance time-sharing mechanisms~\cite{david2013algorithmics}. It can also be used to incentivize participation~\cite{aziz2018rank} or to workaround impossibility results~\cite{brandl2016consistent}.
Lastly and closer to our work, using randomization is frequently used to obtain fairer solutions~\cite{aziz2013popular,bogomolnaia2001new,katta2006solution}.
Indeed, there may be optimization problems for which any deterministic solution is unfair. This was famously illustrated by Machina's mom example in which a mother should decide which of her two children will receive an indivisible treat~\cite{machina1989dynamic}.
In such cases, randomization may help evening things out by considering fairness in expectation, i.e., \textit{ex-ante fairness} rather than \textit{ex-post fairness}.
Randomization is both useful for one-shot and for repeated problems. In the former, it provides fairness over opportunities and in the latter it achieves fairness in the long run in a natural way. Lastly, randomization can be used to satisfy the fairness principle of \textit{equal treatment of equals}~\cite{moulin1991axioms}.
Despite being an old research topic, the study of randomized solutions is still a hot topic where many open problems remain to be solved~\cite{aziz2019probabilistic,brandt2019collective}.

\paragraph{Our Contribution.}
After recalling the necessary technical background related to influence maximization, in Section~\ref{sec : preliminaries}, we introduce the two randomized versions of the maximin problem. In the first one, we consider randomized strategies that pick nodes as seeds independently with some probability such that the expected size of the resulting seed set is bounded by $k$, we call this the \emph{node-based problem}. In the second problem, we study a more general feasible set. That is, we consider strategies that consist of probability distributions over seed sets of expected size $k$, i.e., not restricting to the special case of distributions that pick nodes independently but allowing for correlation.
We then analyze the relation between the two probabilistic problems.
In Section~\ref{sec: pof and hardness}, we quantify the loss in efficiency that can be incurred by following our fairness criteria, i.e., we show bounds on the price of fairness.
We continue by proving that both randomized variants of the maximin influence problems are NP-hard. For the node-based problem, we in addition show that, unless $P=NP$, there is no algorithm with approximation ratio better than $1-1/e$.  
Thereafter we show that still, in this setting of fairness in influence maximization, randomization leads to a number of advantages. In fact, in Section~\ref{sec: algorithms}, we prove that the resulting problems can be approximated to within a factor of $1-1/e$ (plus an additive $-\eps$ term that is also inherent in the work of Tsang et al.~\cite{tsang2019group}) even in the case when the number of communities exceeds the number of seed nodes $k$. For the node-based problem (up to the additive error term) we thus give a tight approximation result.
Furthermore, our work shows that the inapproximability result of Fish et al.~\cite{fish2019gaps} can be circumvent by introducing randomization to the problem. Our algorithms are comparatively simple.
For the node-based problem, the feasible set is of dimension $n$. After approximating (to within an additive $\pm\eps$ term) all functions $\sigma_v$ using concentration bounds, we still face the problem that the resulting optimization problem is not linear. We show however that the non-linear optimization problem is approximated to within a factor of $1-1/e$ by a linear program of the same size. Thus we obtain a polynomial time algorithm with multiplicative approximation ratio $1-1/e$ (plus the additive $-\eps$ term).
For the set-based problem, the situation is different. Here, by introducing a variable for every possible seed set, the problem can be approximated (to within an additive $\eps$ term) by a linear program. The downside of course is that this program is of dimension $\Theta(2^n)$. As the linear program is a covering linear program however, we are able to show that a multiplicative weights routine that is essentially a black-box application of a method by Young~\cite{Young95} can be used to obtain the described approximation. This method, as a subroutine, requires an algorithm for an oracle problem. We observe that the oracle problem in our case can be solved using standard (weighted) influence maximization and thus can be approximated to within a factor of $1-1/e$ efficiently both in theory and practice. Although the feasible set to the set-based problem is of exponential dimension, the computed solution that is guaranteed to be a multiplicative $1-1/e$ approximation (plus the additive $-\eps$ term) has only a linear support in $n$.
In Section~\ref{sec : numerical tests}, we evaluate implementations of multiplicative weight routines for both node and set-based problems on random instances, synthetic instances from the work of Tsang et al.~\cite{tsang2019group}, and a wide range of real world networks. We compare both the ex-ante and ex-post performance of our techniques with standard greedy techniques, as well as with the routines proposed by Tsang et al.~\cite{tsang2019group} and Fish et al.~\cite{fish2019gaps}. We observe that our ex-ante values are superior to the ex-post values of all other algorithms and, maybe surprisingly, our experiments indicate that even the ex-post values of our algorithms are competitive or even improve over the ex-post values achieved by the other techniques. We also experimentally evaluate the loss in efficiency, i.e., in total information spread resulting from using our algorithm over a standard IM algorithm that does not consider any fairness criteria. We conclude that our algorithms lead to much fairer solutions while incurring at most a small loss in total spread on all instances tested.


 \paragraph{Further Related Work.}
We review the works that have considered fairness issues in the context of influence maximization in more detail.

The line of research that investigates the fairness of the diffusion process with respect to the vertices (i.e., users) in the network is closest to our setting. Fish et al.~\cite{fish2019gaps}, to the best of our knowledge, are the first to study the maximin objective in order to maximize the minimum probability of nodes to be reached by the information spread. They show that this objective leads to an NP-hard optimization problem, and even more, is hard to approximate to within any constant factor unless $P=NP$. Even worse, the authors show that various greedy strategies have asymptotically worst-possible approximation ratios. 

In the work of Tsang et al.~\cite{tsang2019group}, the authors introduced the problem of maximizing the spread of a campaign while respecting a \emph{group}-fairness constraint. 
In their setting, each user of the network belongs to one or several communities and several criteria to guarantee that each community gets its fair share of information are considered.  
For each of these criteria, maximizing influence while respecting the related fairness constraint can be solved via a multi-objective submodular optimization problem. 
The authors design an algorithm to tackle such multi-objective submodular optimization problems that provides an asymptotic approximation guarantee of $1-1/e$. Their work cannot be directly extended to the case where fairness is considered with respect to individuals instead of communities. Indeed, their result requires that $m = o(k \log^3(k))$ where $m$ is the number of communities and $k$ is the seed set cardinality constraint. 

The above two works are the most closely related to the present paper. We proceed by reviewing more distant works that still treat fairness issues in influence maximization. 
Rahmattalabi et al.~\cite{rahmattalabi2020fair} further extend the group-fairness approach of Tsang et al.~\cite{tsang2019group} by following a different path. From the expected fraction of vertices reached in each community, the authors define a utility vector over the entire population of vertices, and then
take a welfare optimization approach by optimizing a decision criterion which is a function of this utility vector.  
Stoica, Han and Chaintreau~\cite{stoicaseeding} study how improving the diversity of nodes in the seed set can influence efficiency and fairness of the information diffusion process. 
In a rather specific setting, where the network is generated using a biased preferential attachment model yielding two unequal communities, the authors show that, under certain conditions, seeding strategies that take into account the diversity of nodes in the seed set are more efficient and equitable. 
Ali et al.~\cite{ali2019fairness} 
address fairness of the diffusion process with respect to different communities considering both the number of people influenced and the time step at which they are influenced. 
After illustrating that both, maximizing the expected number of nodes reached by choosing a seed set of fixed cardinality, and minimizing the number of seeds required to influence a given portion of the network may lead to unfair solutions, 
the authors propose an objective function which balances two objectives: the expected number of nodes reached which should be maximized, and the maximum disparity in influence between any two communities which should be minimized. 
More recently, Farnad, Babaki and Gendreau~\cite{farnad2020unifying} review the different notions of group-fairness criteria used in the influence maximization literature and show how influence maximization problems under these fairness criteria can be expressed as mixed integer linear programs. The authors then provide numerical tests to measure the price of fairness of different fairness criteria as well as the increase in fairness with respect to vanilla influence maximization.
Lastly, fair influence maximization was approached by Khajehnejad et al.~\cite{khajehnejad2020adversarial} based on machine learning techniques. The authors use an adversarial graph embedding approach to choose a seed set which both makes it possible to achieve high influence propagation and fairness between different communities.  

An even more distant line of work~\cite{chen2020maximizing,lu2013bang,yu2017fair} considers a setting in which several players compete with similar products on the same networks. 
These players pay a given budget to the network's host in order to influence as many users as possible. 
The question investigated is how to ensure that the host picks solutions in a fair way with respect to the different players. 
Stated otherwise, the average influence obtained per seed, also called amplification factor, should be similar for all players. We omit further details on this line of research as it is not too closely related to our work.
\section{Preliminaries} \label{sec : preliminaries}

\subsection{Influence Maximization}\label{subsec:inflmax}
We consider the classical influence maximization setting where we are given a directed arc-weighted graph \(G=(V, A, w)\) with $V$ being the set of $n$ nodes, $A$ the set of arcs, and $w:A\rightarrow [0,1]$ an arc-weight function. In addition we are given an information diffusion model.
A broad variety of models can be used as information diffusion model. Two of the most popular models are the \emph{Independent Cascade} (IC) and \emph{Linear Threshold} (LT) models~\cite{kempe}. In both these models, given an initial node set \(S\subseteq V\) called \emph{seed nodes}, a spread of influence from the set \(S\) is defined as a randomly generated sequence of node sets $(S_t)_{t\in \mathbb{N}}$, where \(S_0=S\) and \(S_{t-1}\subseteq S_{t}\).\ These sets represent active users, i.e., we say that a node \(v\) is \emph{active} at time step \(t\) if \(v\in S_t\).\ The sequence converges as soon as \(S_{t^*}=S_{t^*+1}\), for some time step \(t^*\ge 0\) called the time of quiescence.
For a set \(S\), we use the standard notation \(\sigma(S) = \E[|S_{t^*}|]\) to denote the expected number of nodes activated at the time of quiescence when running the process with seed nodes \(S\), here the expectation is over the random process of information diffusion that depends on the weights $w$ and moreover on the information diffusion model at hand.
\paragraph{Information Diffusion Models.}
In the \emph{Independent Cascade} (IC) model, the values \(w_{a} \in [0,1]\) for \(a \in A\) are probabilities. The sequence of node sets $(S_t)_{t\in \mathbb{N}}$, is randomly generated as follows.  If \(u\) is active at time step \(t\ge 0\) but was not active at time step \(t-1\), i.e., \(u\in S_t\setminus S_{t-1}\) (with \(S_{-1}=\emptyset\)), node $u$ tries to activate each of its neighbors $v$, independently, and succeeds with probability \(w_{uv}\). In case of success, \(v\) becomes active at time step \(t+1\), i.e., \(v\in S_{t+1}\).

In the \emph{Linear Threshold} (LT) model, the values \(w_a \in [0,1]\) for \(a \in A\) are such that, for each node $v$, it holds that $\sum_{(u,v)\in A} w_{uv} \le 1$. The sequence of node sets $(S_t)_{t\in \mathbb{N}}$, is randomly generated as follows. At time step \(t+1\), every inactive node $v$ such that $\sum_{(u,v)\in A, u\in S_t} w_{uv} \ge \theta_v$ becomes active, i.e., \(v\in S_{t+1}\), where the thresholds $\theta_v$ are chosen independently and uniformly at random from the interval $[0, 1]$ for all nodes $v\in V$.

Both the IC and LT models can be generalized to what is known as the \emph{Triggering Model}, see~\cite[Proofs of Theorem 4.5 and 4.6]{kempe}.
For a node \(v\in V\), let \(N_v\) denote all in-neighbors of \(v\). In the Triggering model, every node \(v\in V\) independently picks a \emph{triggering set} \(T_v \subseteq N_v\) according to some distribution over subsets of its in-neighbors.
For a possible outcome \(L = (T_v)_{v\in V}\) of triggering sets for the nodes in \(V\), let \(G_L = (V,A_L)\) denote the sub-graph of \(G\) where \(A_L = \{(u,v)|v \in V, u\in T_v \}\). The graph $G_L$ is frequently referred to as live-edge graph and the edges $A_L$ are referred to as live edges. We denote with $\LLL$ the random variable that describes this process of generating outcomes or live-edge graphs, and with $L$ we mean a possible outcome, i.e., value taken by $\LLL$. We let \(\rho_\LLL(S)\) be the set of nodes reachable from \(S\) in \(G_\LLL\), then \(\sigma(S)=\E_\LLL[|\rho_\LLL(S)|]\).
The IC model is obtained from the Triggering model, if for each arc $(u,v)$, node $u$ is added to $T_v$ with probability $w_{uv}$. Differently, the LT model is obtained if each node $v$ picks at most one of its in-neighbors to be in her triggering set, selecting a node $u$ with probability $w_{uv}$ and selecting no one with probability $1 - \sum_{u \in N_v} w_{uv}$.

\paragraph{Further Notation.}
In what follows, we assume the triggering model to be the underlying model describing the information spread.\ We define \(\sigma_v(S):=\Pr_\LLL[v \in \rho_\LLL(S)]\) to be the probability that node $v$ is reached from seed nodes \(S\).\ Clearly, \(\sigma(S)=\E_\LLL[|\rho_\LLL(S)|] = \sum_{v\in V}\Pr_\LLL[v \in \rho_\LLL(S)] = \sum_{v\in V} \sigma_v(S)\).
We extend this notation in a natural way, that is, for $C\subseteq V$, we denote by $\sigma_C(S) = \frac{1}{|C|}\cdot\sum_{v\in C} \sigma_v(S)$ the average probability of being reached of nodes in $C$. Note that $\sigma_v(S)=\sigma_{\{v\}}(S)$ for nodes $v\in V$ and $\sigma(S) = |V|\cdot \sigma_V(S)$.

We use $\ones$ for the all-ones vector (of suitable dimension) and $\ones_i$ for the $i$'th unit vector. Furthermore, with a slight abuse of notation, we use $\ones_P$ to be the indicator function that equals 1 if $P$ holds and 0 otherwise.

For a maximization problem $\max\{F(x):x\in R\}$ with feasibility region $R$, objective function $F:R\rightarrow \RR_{\ge 0}$ and optimum value $\opt$. We say that $x$ is an $(\alpha, \beta)$-approximation for real values $\alpha\in (0,1]$ and $\beta\in[0, \infty)$, if $F(x)\ge \alpha \cdot \opt - \beta$.

\paragraph{Maximin optimization.}
The standard objective studied in influence maximization is finding a set $S$ maximizing \(\sigma(S)\) under a cardinality constraint $|S|\le k$ for some integer $k$. As this objective function does not take into account the fairness of the diffusion process with respect to nodes or communities, Fish et al.~\cite{fish2019gaps} and Tsang et al.~\cite{tsang2019group}, have investigated maximin variants of this objective that can be written as
\begin{align*}
\label{eq:maximin definition}
    \max_{S \in \binom{V}{k}} \min_{C\in \CCC} \sigma_{C}(S),
\end{align*}
where $\CCC$ is a set of $m\ge 1$ different communities $\emptyset \neq C\subseteq V$ that may not be disjoint and $\binom{V}{k}$ denotes the set of subsets of $V$ of size $k$.
If each node is its own community, this amounts to finding a set of $k$ seed nodes maximizing the minimum probability that a node is reached, which is the problem considered by Fish et al.~\cite{fish2019gaps}. We note that this is actually one instance of a broader class of optimization problems that ask to maximize a social welfare function, being the $-\infty$-mean here.
Fish et al.~\cite{fish2019gaps} considered the special case where the diffusion model is the Independent Cascade model and in which all arcs have the same probability of diffusion $\alpha$. They proved that the problem of choosing $k$ seeds $S$ such as to maximize $\min_{v \in V}\sigma_v(S)$ is NP-hard to be approximated within a factor better than $O(\alpha)$ and that minimizing the number of seeds to obtain the optimal solution cannot be approximated within a factor $O(\ln n)$. Furthermore, they analysed several natural heuristics which unfortunately exhibit worst-case approximation ratio exponentially small in~$n$.

\subsection{Fairness via Randomization} \label{subsec:probmaximin}
We initiate studying the impact of randomization to increase fairness for influence maximization.
We start with a simple example of an influence maximization problem to illustrate the impact of randomization.
Let us assume that we are using the IC model. Consider the graph in Figure~\ref{fig:pro-strategy} consisting of two nodes $u,v$, each forming their own community, connected in both directions by edges $(u,v), (v,u)$ with probabilities $1/2$. Assume that $k=1$. 
Then (due to symmetry) the
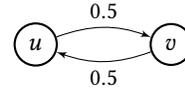
\begin{wrapfigure}{r}{7.5cm}
	\centering
	 \begin{tikzpicture}[scale=0.9, transform shape]
		\tikzset{vertex/.style = {shape=circle,draw = black, thick, fill = white, minimum size = 4mm}
			}
		\tikzset{edge/.style = {->,> = latex'}}

		\node[vertex] (1) at  (-1,0) {$u$};
		\node[vertex] (2) at  (1,0) {$v$};

        \draw[edge] (1) to[bend left=20, above] node {\small $0.5$} (2);

		\draw[edge] (2) to[bend left=20, below] node  {\small $0.5$} (1);
		\end{tikzpicture}%
		\caption{Simple instance showing that randomization allows to increase fairness in influence maximization.}
		\label{fig:pro-strategy}
\end{wrapfigure}
optimal deterministic strategy is to choose any of the two nodes achieving a minimum probability of being reached of $1/2$ for the non-chosen node. A probabilistic strategy however would be allowed to assign probabilities $1/2$ to both the sets $\{u\}$ and $\{v\}$. For each of the two nodes, this strategy achieves an expected probability of being reached of $1/2 + 1/4 = 3/4$, the $1/2$ being due to the fact that the node is a seed himself with probability $1/2$ and the $1/4$ being due to the probability of being reached (with probability $1/2$) from the other node if she is a seed (happens with probability $1/2$). While the example seems simplistic and artificial, it shows that the probabilistic strategy may in fact achieve a higher degree of fairness.
We consider two different ways of introducing randomness, either via distributions over sets or via distributions over nodes.

\paragraph{Probabilistically Choosing Sets.}
We relax the maximin problem by allowing for randomized strategies, i.e., feasible solutions in our \emph{set-based probabilistic maximin problem} are not simply sets of size at most $k$, but rather distributions over sets. Let $\PPP$ be the set of distributions over sets of expected size at most $k$, i.e., $\PPP:=\{p\in [0,1]^{2^{V}} :\ones^Tp =1, \sum_{S\subset V} p_S |S| \le k\}$ and let $S\sim p$ denote the random process of sampling $S$ according to the distribution $p$.
We now consider the optimization problem
\begin{align*}
    \opt_\PPP(G, \CCC, k) = \max_{p \in \PPP} \min_{C\in\CCC}\E_{S\sim p}[\sigma_{C}(S)],
\end{align*}
i.e., we are searching for the probability distribution that maximizes the minimum expected probability of the $m$ communities to be reached.
This notion 
is frequently referred to as ex-ante fairness in the literature~\cite{machina1989dynamic}.

We note that in the conference version~\cite{BeckerDGG21} of this article, we studied the problem where the probability distributions are restricted to be over sets of size \emph{exactly} $k$. Here, we explicitly allow sets of size different from $k$, the only restriction on the size is in expectation. This new problem constitutes a relaxation of the set-based problem studied in~\cite{BeckerDGG21}. We emphasize that all of our results hold for both versions of the problem. The main reason why we further relaxed the studied set-based problem is that this allows us to obtain a clean relationship (see the last paragraph of this section) between the set-based and the node-based problem that we introduce next.

\paragraph{Probabilistically Choosing Nodes.}
An alternative intuitive way of introducing randomness is obtained by considering a maximin problem where feasible solutions are not distributions over sets, but are characterized by probability values for nodes. In this setting, which we call the \emph{node-based probabilistic maximin problem}, we let $\XXX:=\{x\in [0,1]^n : \ones^Tx\le k\}$ be the feasible set and consider the process of randomly generating a set $S$ from $x$, denoted by $S\sim x$, by letting $i$ be in $S$ independently with probability $x_i$. In this setting we are thus interested in finding $x\in \XXX$ that maximizes the minimum expected coverage from $S$ of any community, when $S$ is generated from $x$ as described and the expectation is over this generation. We write this problem as
\[
    \opt_\XXX(G, \CCC, k) = \max_{x \in \XXX} \min_{C\in \CCC}\E_{S\sim x}[\sigma_{C}(S)].
\]


\paragraph{Extending Set Functions to Vectors.}
In what follows, we extend set functions to vectors in $\PPP$ and $\XXX$ in a straightforward way, i.e., for a set function $f$, for $p\in \PPP$, we let $f(p):=\E_{S\sim p} [f(S)]$ and, for $x\in \XXX$, we let $f(x):=\E_{S\sim x} [f(S)]$.

\paragraph{Relationship between Problems.}
We first observe that, for $x\in\XXX$, the vector $p^x$ defined as $p^x_S:=\prod_{i\in S} x_i \prod_{j\in V\setminus S} (1 - x_j)$, for $S\subseteq V$, is in $\PPP$ and furthermore $\sigma_C(x)=\sigma_C(p^x)$ for any \(C\in \CCC\). Hence, we obtain the following lemma.
\begin{lemma} \label{lemma:rel_set_node}
    For any $G$, $\CCC$, and $k$, it holds that
    $\opt_\XXX(G, \CCC, k) \le \opt_\PPP(G, \CCC, k)$.
\end{lemma}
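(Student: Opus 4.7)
The plan is to use the product distribution $p^x$ explicitly suggested in the paragraph preceding the lemma, and check that it certifies the inequality. Let $x^{\ast}\in\XXX$ be any feasible point for the node-based problem, and define $p^{x^{\ast}}_S := \prod_{i\in S} x^{\ast}_i \prod_{j\in V\setminus S}(1-x^{\ast}_j)$ for every $S\subseteq V$. I will show that $p^{x^{\ast}}\in\PPP$ and that it attains at least the value $\min_{C\in\CCC}\E_{S\sim x^{\ast}}[\sigma_C(S)]$ in the set-based problem, which will give the lemma after taking $x^{\ast}$ to be optimal (or by a supremum argument).

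First I would verify feasibility of $p^{x^{\ast}}$: the entries are clearly non-negative, and $\sum_{S\subseteq V} p^{x^{\ast}}_S = \prod_{i\in V}(x^{\ast}_i + (1-x^{\ast}_i)) = 1$, so it is a bona fide distribution. For the expected-size constraint, by linearity of expectation,
\[
\sum_{S\subseteq V} p^{x^{\ast}}_S\, |S| \;=\; \E_{S\sim p^{x^{\ast}}}[|S|] \;=\; \sum_{i\in V} \Pr_{S\sim p^{x^{\ast}}}[i\in S] \;=\; \sum_{i\in V} x^{\ast}_i \;\le\; k,
\]
where the last inequality holds because $x^{\ast}\in\XXX$. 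Hence $p^{x^{\ast}}\in\PPP$.

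Next I would argue that the two sampling processes $S\sim x^{\ast}$ and $S\sim p^{x^{\ast}}$ induce the same distribution on subsets of $V$. In both cases, each node $i$ is independently included in $S$ with probability $x^{\ast}_i$: for $S\sim x^{\ast}$ this is by definition, and for $S\sim p^{x^{\ast}}$ this is precisely the product form of $p^{x^{\ast}}_S$. Consequently, for every community $C\in\CCC$, $\E_{S\sim x^{\ast}}[\sigma_C(S)] = \E_{S\sim p^{x^{\ast}}}[\sigma_C(S)] = \sigma_C(p^{x^{\ast}})$, using the extension convention declared just above the lemma.

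Putting these pieces together, for any $x\in\XXX$,
\[
\min_{C\in\CCC}\E_{S\sim x}[\sigma_C(S)] \;=\; \min_{C\in\CCC}\E_{S\sim p^{x}}[\sigma_C(S)] \;\le\; \max_{p\in\PPP}\min_{C\in\CCC}\E_{S\sim p}[\sigma_C(S)] \;=\; \opt_\PPP(G,\CCC,k).
\]
Taking the supremum over $x\in\XXX$ on the left yields $\opt_\XXX(G,\CCC,k)\le\opt_\PPP(G,\CCC,k)$. I do not foresee a real obstacle here; the only mildly non-trivial step is the expected-size check, and it follows immediately from linearity of expectation together with the independence built into the definition of $p^x$.
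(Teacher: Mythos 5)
Your proposal is correct and is exactly the argument the paper uses: the paper's proof consists of the observation, stated just before the lemma, that the product distribution $p^x$ lies in $\PPP$ and satisfies $\sigma_C(x)=\sigma_C(p^x)$ for all $C\in\CCC$. You merely spell out the feasibility check $\sum_S p^x_S|S|=\sum_i x_i\le k$ and the distributional identity in more detail, which the paper leaves implicit.
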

We proceed by measuring the reverse relation. In fact, the concept of correlation gap can be used in order to upper bound $\opt_\PPP(G, \CCC, k)$ in terms of $\opt_\XXX(G, \CCC, k)$ incurring only a constant loss.
\begin{lemma}
    For any $G$, $\CCC$, and $k$, it holds that  $\opt_\PPP(G, \CCC, k) \le \frac{e}{e-1} \cdot\opt_\XXX(G, \CCC, k)$.
\end{lemma}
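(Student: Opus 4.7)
The plan is to invoke the well-known correlation gap bound for monotone submodular functions (Calinescu--Chekuri--Pál--Vondrák, or Agrawal--Ding--Saberi--Ye), which states that for any monotone submodular $f:2^V\to\RR_{\ge 0}$, any distribution $q$ over subsets of $V$, and $y\in[0,1]^n$ defined by the marginals $y_i:=\Pr_{S\sim q}[i\in S]$, one has $\E_{S\sim q}[f(S)] \le \frac{e}{e-1}\,\E_{S\sim y}[f(S)]$, where in the right-hand expectation each $i$ is included in $S$ independently with probability $y_i$. I would fix an optimal (or near-optimal) $p\in\PPP$ for $\opt_\PPP(G,\CCC,k)$ and define $x\in[0,1]^V$ by the singleton marginals $x_i := \sum_{S\ni i} p_S = \Pr_{S\sim p}[i\in S]$.

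The first step is to verify feasibility $x\in\XXX$: by linearity of expectation, $\ones^T x = \sum_i \Pr_{S\sim p}[i\in S] = \E_{S\sim p}[|S|] \le k$ because $p\in\PPP$. Next, for every community $C\in\CCC$, the set function $\sigma_C$ is a nonnegative average of the functions $\sigma_v$, each of which is monotone and submodular under the triggering model (a standard fact, e.g.\ \cite{kempe}); hence $\sigma_C$ itself is monotone and submodular. Applying the correlation gap theorem to each $\sigma_C$ with the distribution $p$ (which has marginals $x$ by construction) yields
\begin{equation*}
\E_{S\sim p}[\sigma_C(S)] \;\le\; \tfrac{e}{e-1}\,\E_{S\sim x}[\sigma_C(S)] \qquad \text{for every } C\in\CCC.
\end{equation*}

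To transfer this pointwise bound across the minimum, I would let $C^\star\in\argmin_{C\in\CCC}\E_{S\sim x}[\sigma_C(S)]$ and chain
\begin{equation*}
\min_{C\in\CCC}\E_{S\sim p}[\sigma_C(S)] \;\le\; \E_{S\sim p}[\sigma_{C^\star}(S)] \;\le\; \tfrac{e}{e-1}\,\E_{S\sim x}[\sigma_{C^\star}(S)] \;=\; \tfrac{e}{e-1}\min_{C\in\CCC}\E_{S\sim x}[\sigma_C(S)].
\end{equation*}
Since $x\in\XXX$, the right-hand side is at most $\tfrac{e}{e-1}\,\opt_\XXX(G,\CCC,k)$, and since $p$ was an optimizer for $\opt_\PPP(G,\CCC,k)$, the left-hand side equals $\opt_\PPP(G,\CCC,k)$ (or one argues via a supremum if the max is not attained), giving the claim.

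The only real subtlety is the direction in which the minimum is taken: a naive attempt to upper-bound $\min_C a_C$ by $\min_C b_C$ from pointwise inequalities $a_C\le \tfrac{e}{e-1}b_C$ fails in general, so it is essential to evaluate both distributions at the community $C^\star$ that minimizes the \emph{node-based} expectation rather than the set-based one. Beyond this, the proof is a clean black-box application of the correlation gap, with feasibility of the marginal vector following immediately from the expected-cardinality relaxation that motivated the new definition of $\PPP$ discussed right before the lemma.
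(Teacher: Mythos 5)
Your proof is correct and follows essentially the same route as the paper: define the marginal vector $x_i=\Pr_{S\sim p}[i\in S]$, check $x\in\XXX$ from the expected-cardinality constraint, apply the $\frac{e}{e-1}$ correlation-gap bound to each monotone submodular $\sigma_C$, and pass the pointwise bound through the minimum. (Your cautionary remark about the minimum is slightly overstated --- pointwise domination $a_C\le \frac{e}{e-1}b_C$ for all $C$ does directly give $\min_C a_C\le \frac{e}{e-1}\min_C b_C$, which is exactly the one-line step the paper takes --- but your $C^\star$ argument is just the explicit form of that step and is fine.)
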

\begin{proof}
    Let $G$, $\CCC$, and $k$ be arbitrary.
    For a distribution $p\in\PPP$ over $2^V$, define the \emph{marginal probabilities $y^p$ w.r.t.\ $p$} by $y^p_i:= \Pr_{S\sim p}[i\in S] = \sum_{S\subseteq V: i\in S} p_S$. The correlation gap~\cite{AgrawalDSY10,Yan11} of $f:2^V\to \RR_{\ge 0}$ is defined as
    \[
        \gamma_f:=\sup_{p\in [0,1]^{2^V}} \frac{\E_{S\sim p} [f(S)]}{\E_{S\sim y^p}[f(S)]}
    \]
    and it is well-known that the correlation gap of a monotone submodular function is bounded from above by $\frac{e}{e-1}$, see~\cite[Corollary 1.2]{AgrawalDSY10} or \cite[Theorem 2.1]{Yan11}. We may thus conclude that, for all $C\in\CCC$, $\gamma_{\sigma_C}\le \frac{e}{e-1}$. Now, let $p\in \PPP$ be an optimal solution, i.e., $\opt_\PPP(G,\CCC, k) = \min_{C\in\CCC} \E_{S\sim p}[\sigma_C(S)]$. We obtain
    \begin{align*}
       \opt_\PPP(G,\CCC, k)
       &= \min_{C\in\CCC} \E_{S\sim p}[\sigma_C(S)]
       \le \min_{C\in\CCC} \Big \{\frac{e}{e-1} \cdot \E_{S\sim y^p}[\sigma_C(S)] \Big \}\\
       &= \frac{e}{e-1} \cdot \min_{C\in\CCC} \E_{S\sim y^p}[\sigma_C(S)]
       \le \frac{e}{e-1} \cdot \opt_\XXX(G,\CCC, k),
    \end{align*}
    where the last step uses that $\sum_{i\in V} y^p_i = \sum_{i\in V}\sum_{S\subseteq V: i\in S} p_S = \sum_{S\subseteq V} p_S \cdot |S| \le k$ and thus $y^p\in \XXX$.
\end{proof}
It remains to ask if the bound predicted by the above lemma is tight. We give the following simple example.
\begin{lemma}
    There exists a graph $G$, community structure $\CCC$, and integer $k$, such that $\opt_\PPP(G, \CCC, k) \ge \frac{5}{4} \cdot\opt_\XXX(G, \CCC, k)$ when using the IC model.
\end{lemma}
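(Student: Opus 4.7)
I propose proving the statement by exhibiting an extremal variant of the motivating two-node graph from Figure~\ref{fig:pro-strategy}. Consider $G$ on $V=\{u,v\}$ with reciprocal arcs $(u,v)$ and $(v,u)$, each of weight $2/3$, under the IC model; let $\CCC=\{\{u\},\{v\}\}$ so that every node is its own singleton community, and set $k=1$. The plan is to lower-bound $\opt_\PPP(G,\CCC,k)$ by exhibiting a concrete feasible distribution, and to match this with a $2/3$ upper bound on $\opt_\XXX(G,\CCC,k)$ by a short case analysis that exploits the symmetry between $u$ and $v$.

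For the lower bound on $\opt_\PPP$, I would use the distribution $p$ supported on the two singletons with $p_{\{u\}}=p_{\{v\}}=1/2$. Its expected size is $1=k$, so $p\in\PPP$. Using $\sigma_u(\{u\})=1$ and $\sigma_u(\{v\})=2/3$ (the probability that the single incoming arc $(v,u)$ is live), and symmetrically for $v$, this yields $\E_{S\sim p}[\sigma_u(S)]=\E_{S\sim p}[\sigma_v(S)]=\tfrac{1}{2}+\tfrac{1}{2}\cdot\tfrac{2}{3}=\tfrac{5}{6}$, so $\opt_\PPP(G,\CCC,k)\ge 5/6$.

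For the upper bound on $\opt_\XXX$, fix $x=(x_u,x_v)\in[0,1]^2$ with $x_u+x_v\le 1$. A direct computation in the IC model gives $\sigma_u(x)=x_u+(1-x_u)x_v\cdot\tfrac{2}{3}$ and symmetrically for $\sigma_v(x)$, so that $\sigma_u(x)-\sigma_v(x)=(x_u-x_v)/3$. By symmetry I may assume $x_u\ge x_v$, and then $\min\{\sigma_u(x),\sigma_v(x)\}=\sigma_v(x)$. Since $\sigma_v(x)$ is non-decreasing in $x_u$, the size constraint can be taken tight, i.e., $x_u=1-x_v$, reducing the problem to maximizing $g(t)=t+\tfrac{2}{3}(1-t)^2$ on $t\in[0,1/2]$. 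The function $g$ is convex, so its maximum on $[0,1/2]$ is attained at an endpoint; one verifies $g(0)=g(1/2)=2/3$. Thus $\opt_\XXX(G,\CCC,k)\le 2/3$, and combined with the previous paragraph one obtains $\opt_\PPP/\opt_\XXX\ge (5/6)/(2/3)=5/4$. The only delicate point is the reduction to a single-variable boundary problem; this is justified by the symmetry argument and the monotonicity in $x_u$ noted above, and it is also easy to check afterwards that the choice $p=2/3$ is in fact the worst case within this two-node family, so no further optimization over the instance is needed.
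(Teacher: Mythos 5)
Your proposal is correct and uses exactly the same instance and witness distribution as the paper's proof (the two-node graph with reciprocal arcs of weight $2/3$, singleton communities, $k=1$, and the set-based solution putting mass $1/2$ on each singleton). The only difference is that you carefully verify the claim $\opt_\XXX(G,\CCC,1)=2/3$ via the symmetry, monotonicity, and convexity argument, whereas the paper simply asserts it; your verification is sound.
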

\begin{proof}
    Consider the graph $G$ consisting of two nodes $u$ and $v$ connected back and forth by two edges of weight $2/3$. Let $C$ be the singleton community structure, and $k=1$, i.e., the same instance as in Figure~\ref{fig:pro-strategy} with the difference that the edge weights are $2/3$. Then the best node-based solution achieves a value of $2/3$ (either by choosing one of the two nodes with probability $1$ or by choosing both with equal probability $1/2$). The optimal set-based solution that chooses the sets $\{u\}$ and $\{v\}$ both with probability $1/2$ however achieves a value of $1/2 + 1/2 \cdot 2/3 = 5/6$.
\end{proof}
We note that $e/(e-1)\approx 1.58$, while $5/4=1.25$. We consider tightening this gap to be an interesting open problem. 

\section{Price of Fairness and Hardness}
\label{sec: pof and hardness}

\subsection{Price of Group Fairness}
The price of group fairness is a quantitative loss measuring the decrease in efficiency that is incurred when we restrict ourselves to solutions respecting a group fairness requirement.
In the following, we denote the maximizing solutions to the node and general set-based problems by $F_{\XXX}(G, \CCC, k)= \argmax_{x \in \XXX} \min_{C\in \CCC}\E_{S\sim x}[\sigma_{C}(S)]$ and  $F_{\PPP}(G, \CCC, k)= \argmax_{p \in \PPP} \min_{C\in \CCC}\E_{S\sim p}[\sigma_{C}(S)]$, respectively. Then, the respective prices of fairness $\pof_{\XXX}(G, \CCC, k)$ and $\pof_{\PPP}(G, \CCC, k)$ incurred by restricting to strategies in $F_{\XXX}(G, \CCC, k)$ and $F_{\PPP}(G, \CCC, k)$ are given by
\[
    \pof_{\XXX}(G, \CCC, k)
    = \frac{\max_{S \in {\binom{V}{k}}} \sigma(S)}{\max_{x\in F_{\XXX}(G, \CCC, k)}\sigma(x)}
    \quad\text{ and }\quad
    \pof_{\PPP}(G, \CCC, k)
    = \frac{\max_{S \in {\binom{V}{k}}} \sigma(S)}{\max_{p\in F_{\PPP}(G, \CCC, k)}\sigma(p)}.
\]
We obtain that for both problems, the price of group fairness can be linear in the graph size.
\begin{restatable}{lemma-rstbl}{poflem}\label{lem:pof lower}
    For any even $n>0$, there is a graph $G$ with $n$ nodes and a community structure $\CCC$ such that $\pof_\XXX(G, \CCC, 1)=\pof_\PPP(G, \CCC, 1)=(n+2)/4$, when using the IC model.
\end{restatable}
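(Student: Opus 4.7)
The plan is to exhibit a graph in which the unconstrained influence optimum is dominated by a single high-degree ``hub,'' while any fair solution is forced to spend almost its entire budget on useless isolated nodes. Concretely, for $n$ even, I would take $G$ to consist of a hub $u$, leaves $v_1,\ldots,v_{n/2}$ with arcs $(u,v_i)$ of weight $1$, and $n/2-1$ isolated nodes $w_1,\ldots,w_{n/2-1}$, together with the singleton community structure $\CCC=\{\{v\}:v\in V\}$. Seeding $u$ activates every leaf deterministically, so the numerator in both $\pof$ ratios is $\sigma(\{u\})=1+n/2=(n+2)/2$ (any isolated or leaf seed yields spread $1$).

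To handle the two problems uniformly, I would work with marginal probabilities $y_i$ --- these coincide with $x_i$ in the node-based case and with $y^p_i:=\Pr_{S\sim p}[i\in S]$ in the set-based case --- observing that in both feasibility regions $\sum_i y_i\le 1$. The key structural property is that $u$ and the $w_j$'s have no incoming arcs under IC, so the probability any such node is reached under $x$ or $p$ equals its marginal $y_i$. Hence the min-value $\alpha$ over the singleton communities satisfies the $n/2$ inequalities $\alpha\le y_u$ and $\alpha\le y_{w_j}$, whose sum gives $(n/2)\,\alpha\le 1$, i.e.\ $\alpha\le 2/n$. This upper bound is matched by the node-based solution $x_u=x_{w_j}=2/n$, $x_{v_i}=0$, and by the distribution $p$ placing mass $2/n$ on each singleton in $\{\{u\},\{w_1\},\ldots,\{w_{n/2-1}\}\}$; thus $\opt_\XXX(G,\CCC,1)=\opt_\PPP(G,\CCC,1)=2/n$.

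For the denominator I would exploit the fact that the chain of inequalities above is tight for \emph{every} fair-optimal solution. Indeed the two-sided estimate $1\ge\sum_i y_i\ge y_u+\sum_j y_{w_j}\ge (n/2)\cdot(2/n)=1$ forces all inequalities to be equalities, so any $x\in F_\XXX$ or any $p\in F_\PPP$ must have $y_u=y_{w_j}=2/n$ and $y_{v_i}=0$ for all $i,j$. Consequently no leaf is ever seeded, every sampled $S$ lies inside $A:=\{u,w_1,\ldots,w_{n/2-1}\}$, and on such $S$ the spread takes the simple form $\sigma(S)=|S|+(n/2)\,\ones_{u\in S}$. Taking expectations yields $\sigma(x)=\sum_i x_i+(n/2)\,x_u=1+1=2$ and $\sigma(p)=\E_{S\sim p}|S|+(n/2)\,y^p_u\le 1+1=2$, with equality realized by the explicit optima constructed above. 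Dividing the numerator $(n+2)/2$ by the common denominator $2$ yields $\pof_\XXX(G,\CCC,1)=\pof_\PPP(G,\CCC,1)=(n+2)/4$. I expect the only delicate point to be the uniqueness-of-marginals step, which, as noted, drops out of the two-sided tightness of the budget chain.
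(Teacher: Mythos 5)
Your proof is correct and follows essentially the same approach as the paper's: both constructions pair a single weight-$1$ hub (whose seeding gives the unconstrained optimum) with enough source-free nodes that any maximin-optimal solution must spread the unit budget uniformly over $n/2$ (resp.\ $n/2+1$) in-degree-zero vertices, and both then compute the resulting ratio to be exactly $(n+2)/4$. The only differences are cosmetic bookkeeping (your hub has $n/2$ leaves and $n/2-1$ isolated nodes versus the paper's $n/2-1$ and $n/2$), and your explicit tightness chain forcing the marginals of every fair-optimal solution is a slightly more careful version of a step the paper asserts directly.
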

\begin{proof}
	Let $G$ be composed of two disjoint sets $J$ and $I$ of $n/2$ vertices each. The only edges present in $G$ are the edges from one specific vertex $w\in J$ to all other vertices in $J$. Let the weight of these edges be $1$ and let $\CCC$ be the community structure consisting of singletons.
	Note that for all nodes $v\in I\cup \{w\}$, the probability of being reached is equal to the probability of being a seed as these nodes have no incoming edges. In other words, for any strategy $x\in \XXX$, it holds that $\sigma_v(x)=x_v$. Similarly, for any $p\in \PPP$, it holds that $\sigma_v(p) = y^p_v$, where $y^p_v:=\sum_{S:v\in S} p_S$ are the marginal probabilities with respect to $p$.
	Hence, it follows that the probabilistic solutions that maximize fairness split the budget $1$ equally among the nodes in  $I\cup \{w\}$. More precisely, when defining $\rho:=1/(\frac{n}{2} + 1)$, we get $F_\XXX(G,\CCC, 1) = \{\rho\cdot \ones_{I\cup \{w\}}\}$ and $F_\PPP(G,\CCC, 1) = \{p\in\PPP : y^p_v =\rho \text{ for all } v\in I\cup\{w\} \}$ and in both cases the achieved objective value is $\opt_\XXX(G,\CCC, 1) = \opt_\PPP(G, \CCC, 1) = \rho$. Furthermore
	\[
	    \max_{x\in F_{\XXX}(G,\CCC, 1)}\sigma(x) = \max_{p\in F_{\PPP}(G,\CCC, 1)}\sigma(p) = n \cdot \rho.
	\]
	The set $S$ of size 1 that maximizes the expected number of reached nodes however, selects the node $w$ yielding $\max_{S \in {\binom{V}{1}}}\sigma (S) = n/2$. Hence, we get a price of fairness that is of linear order. More precisely, $\pof_{\XXX}(G,\CCC, 1) = \pof_{\PPP}(G,\CCC, 1) = (\frac{n}{2} + 1)/2 = (n+2)/4$.
\end{proof}

On the positive side we obtain that the price of group fairness is never larger than $n/k$.
\begin{restatable}{lemma-rstbl}{pofuplem}\label{lem:pof up}
    For any graph $G$, community structure $\CCC$ and number $k$, it holds that $\pof_\XXX(G, \CCC, k)\le n/k$ and $\pof_\PPP(G, \CCC, k)\le n/k$.
\end{restatable}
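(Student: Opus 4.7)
The plan is to bound numerator and denominator separately by the trivial bounds $\sigma \le n$ and a lower bound of $k$ on the total spread of any optimal-fairness solution.

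First I would bound the numerator. Since $\sigma(S) = \sum_{v \in V} \sigma_v(S) \le |V| = n$ holds for any seed set, we immediately have $\max_{S\in\binom{V}{k}} \sigma(S) \le n$ in both $\pof_\XXX$ and $\pof_\PPP$.

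For the denominators, the core observation is that each seed is activated deterministically, so $\sigma_v(S) \ge \ones_{v \in S}$, and hence $\sigma(S) \ge |S|$ for any $S \subseteq V$. Lifting this to vectors gives $\sigma(x) \ge \sum_v x_v$ for $x \in \XXX$ and $\sigma(p) \ge \sum_S p_S |S|$ for $p \in \PPP$. What I still need is that the budget constraint can be assumed tight at an optimal-fairness solution. For the node-based problem, I would take any $x^* \in F_\XXX(G,\CCC,k)$ and note that $\sigma_C$ is monotone (adding seeds can only increase the probability that any node is reached, hence the expected fraction reached in $C$); consequently, increasing any coordinate of $x^*$ coordinate-wise preserves membership in $F_\XXX(G,\CCC,k)$, so I can replace $x^*$ by a coordinate-wise larger point with $\ones^T x^* = k$. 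Then $\sigma(x^*) \ge \ones^T x^* = k$, yielding $\max_{x \in F_\XXX(G,\CCC,k)} \sigma(x) \ge k$ and therefore $\pof_\XXX(G,\CCC,k) \le n/k$.

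The set-based case is completely analogous: starting from any $p^* \in F_\PPP(G,\CCC,k)$, I would shift probability mass from any $S$ in its support to some superset $S \cup \{v\}$ (for $v \notin S$) as long as the budget constraint $\sum_S p^*_S |S| \le k$ is slack; monotonicity of each $\sigma_C$ ensures that the fairness objective cannot decrease under this operation, so we may assume $\sum_S p^*_S |S| = k$. Then $\sigma(p^*) = \E_{S\sim p^*}[\sigma(S)] \ge \E_{S\sim p^*}[|S|] = k$, giving $\pof_\PPP(G,\CCC,k) \le n/k$.

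The only substantive point is justifying that tightening the budget keeps us in the argmax set, which relies purely on monotonicity of $\sigma_C$ (for every $C$); everything else is a direct application of $\sigma(S) \le n$ and $\sigma(S) \ge |S|$. No deeper submodularity or correlation-gap tools are required.
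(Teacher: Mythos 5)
Your proof is correct and follows essentially the same route as the paper: bound the numerator by $n$, and bound the denominator by $k$ by observing that $\sigma(\cdot)$ is at least the expected seed-set size and that some optimal fair solution saturates the budget. The only difference is that the paper simply asserts the existence of a budget-tight optimal solution, whereas you justify it explicitly via monotonicity of the $\sigma_C$ --- a welcome addition, not a deviation.
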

\begin{proof}
    Note that, for both problems, there exist some optimal solution $x$ and $p$, such that the expected size of the seed set is exactly $k$, i.e., $\ones^Tx=k$ and $\sum_{S\subseteq V} p_S |S| = k$. Furthermore, the expected size of the spread ($\sigma(x)$ and $\sigma(p)$) is at least as large as the expected size of the seed set, i.e., for any $x\in \XXX$ and $p\in\PPP$, it holds that $\sigma(x)\ge k$ and $\sigma(p)\ge k$.
    Thus $\max_{p\in F_{\PPP}(G, \CCC, k)}\sigma(p)\ge k$ and $\max_{x\in F_{\XXX}(G, \CCC, k)}\sigma(x)\ge k$.
	Together with $\sigma(S)\le n$ for any set $S$, we obtain an upper bound of $n/k$ for the price of (group) fairness.
\end{proof}

\paragraph{Pessimistic Price of Fairness.}
A more pessimistic point of view leads to a different definition of the price of fairness. Indeed, the reader might have wondered if it is the correct choice to define $\pof_{\XXX}$ and $\pof_{\PPP}$ using the maximum spread over all fair solutions in the denominator. In fact, if we just compute any fair solution, the loss in terms of efficiency that we may incur could be as large as
\[
    \overline{\pof}_{\XXX}(G, \CCC, k):=
    \frac{\max_{S \in {\binom{V}{k}}} \sigma(S)}{\min_{x\in F_{\XXX}(G, \CCC, k)}\sigma(x)}
    \quad\text{and}\quad
    \overline{\pof}_{\PPP}(G, \CCC, k):=
    \frac{\max_{S \in {\binom{V}{k}}} \sigma(S)}{\min_{p\in F_{\SSS}(G, \CCC, k)}\sigma(p)},
\]
for the node-based problem and the set-based problem, respectively. We call this alternative definition the \emph{pessimistic price of fairness} for a graph $G$, community structure $\CCC$ and budget $k$.
We note that clearly $\overline{\pof}_{\XXX}(G, \CCC, k) \ge \pof_{\XXX}(G, \CCC, k)$ and $\overline{\pof}_{\PPP}(G, \CCC, k) \ge \pof_{\PPP}(G, \CCC, k)$. Moreover, we note that Lemma~\ref{lem:pof lower} holds still for this alternative definition as $\sigma(x)=\sigma(p)=n\cdot \rho$ for all $x\in F_\XXX(G,\CCC, 1)$ and $p\in F_\PPP(G,\CCC, 1)$.
In contrast, we observe that Lemma~\ref{lem:pof up} does not transfer to the pessimistic notion. In fact, we obtain only the following weaker lemma.
\begin{restatable}{lemma-rstbl}{barpoflempess}\label{lem:pof up pess}
    For any graph $G$, community structure $\CCC$ and number $k$, it holds that $\overline{\pof}_\XXX(G, \CCC, k)\le n$ and $\overline{\pof}_\PPP(G, \CCC, k)\le n$.
\end{restatable}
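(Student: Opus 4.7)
I would first bound the numerator trivially: $\sigma(S)\le n$ for every $S\subseteq V$, so $\max_{S\in\binom{V}{k}}\sigma(S)\le n$ in both expressions. The heart of the argument is then to show that the denominators are at least $1$, i.e., that $\sigma(x)\ge 1$ for every $x\in F_\XXX$ and $\sigma(p)\ge 1$ for every $p\in F_\PPP$. Since each seed is always reached, $\sigma(S)\ge |S|$, which I would lift in expectation to $\sigma(x)\ge \ones^T x$ for $x\in\XXX$ and $\sigma(p)\ge \sum_S p_S|S|$ for $p\in\PPP$; it therefore suffices to show that the expected seed-set size is at least one at \emph{every} maximin optimum. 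As a preliminary observation, both optima are strictly positive: taking the uniform $x=(k/n)\ones\in\XXX$ and using concavity of the multilinear extension of the monotone submodular function $\sigma_C$ along rays from the origin (together with $\sigma_C(\ones)=1$) gives $\sigma_C(x)\ge (k/n)\,\sigma_C(\ones)=k/n$, so $\opt_\XXX,\opt_\PPP\ge k/n>0$.

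For the node-based bound, I would argue by contradiction. Suppose some $x\in F_\XXX$ has $\ones^T x<1$ and consider $\lambda x$ with $\lambda:=1/\ones^T x>1$. Since $x_v\le \ones^T x$ for every $v$ (as $x\ge 0$), $\lambda x\in [0,1]^n$, and $\ones^T(\lambda x)=1\le k$, so $\lambda x\in\XXX$. Monotonicity gives $\sigma_C(\lambda x)\ge \sigma_C(x)\ge \opt_\XXX$ for every $C$. The key step is to show the inequality is strict at every binding community $C$ (where $\sigma_C(x)=\opt_\XXX$): because $\sigma_C(x)>0$, some node $u$ in the support of $x$ must contribute positively to $\sigma_C$, hence the directional derivative $\tfrac{d}{dt}\sigma_C(tx)\big|_{t=1}\ge x_u\cdot \partial\sigma_C/\partial x_u>0$, and monotone non-decreasingness of $t\mapsto \sigma_C(tx)$ then yields $\sigma_C(\lambda x)>\sigma_C(x)$. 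Combined with the strict slack at non-binding communities, this gives $\min_C\sigma_C(\lambda x)>\opt_\XXX$, contradicting the optimality of $x$. Hence $\ones^Tx\ge 1$ and $\sigma(x)\ge 1$.

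For the set-based bound I would handle two cases. If $\opt_\PPP=1$, then $\sigma_C(p)=1$ forces $\sigma_v(p)=1$ for every $v$ belonging to some $C\in\CCC$, so $\sigma(p)\ge 1$. Otherwise $0<\opt_\PPP<1$, and I would again assume for contradiction that some $p\in F_\PPP$ has $\sum_Sp_S|S|<1$; consider the mixture obtained by shifting $\delta$ mass onto the full seed set $V$, i.e.\ $p'_V:=(1-\delta)p_V+\delta$ and $p'_S:=(1-\delta)p_S$ for $S\ne V$. For $\delta>0$ small enough the expected size $(1-\delta)\sum_Sp_S|S|+\delta n$ stays below $k$, so $p'\in \PPP$; and for every $C\in\CCC$,
\[
\sigma_C(p')=(1-\delta)\sigma_C(p)+\delta\cdot \sigma_C(V)=\sigma_C(p)+\delta\bigl(1-\sigma_C(p)\bigr)>\sigma_C(p)\ge \opt_\PPP,
\]
contradicting optimality. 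Thus $\sum_Sp_S|S|\ge 1$ and $\sigma(p)\ge 1$. Combining numerator and denominator bounds yields $\overline{\pof}_\XXX\le n/1=n$ and $\overline{\pof}_\PPP\le n/1=n$.

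\textbf{Main obstacle.} The delicate part is ensuring that the perturbation of a maximizer strictly improves $\sigma_C$ at \emph{every} community attaining the min, not merely one; otherwise one cannot contradict optimality. In the node case, proportional scaling is the natural choice because it simultaneously boosts every coordinate of the support, and each binding community must already draw positive $\sigma_C$ from that support (else $\sigma_C(x)=0$). In the set case, mixing in the all-seed set $V$ serves the same purpose since $\sigma_C(V)=1$ strictly exceeds every $\sigma_C(p)<1$, so one must separately dispatch the trivial boundary case $\opt_\PPP=1$.
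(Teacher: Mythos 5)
Your proof is correct and follows essentially the same route as the paper's: bound the numerator by $\sigma(S)\le n$, show that every maximin-optimal solution has expected seed-set size at least $1$ and hence spread at least $1$, and divide. The only difference is one of rigor — the paper asserts the "expected seed size $\ge 1$" step from strict monotonicity of $\sigma_C$ in $x_v$ for $v\in C$, while you substantiate it with explicit improving perturbations (proportional scaling of $x$, resp.\ mixing $\delta$ mass onto the full set $V$, with the boundary case $\opt_\PPP=1$ dispatched separately); just note that in your set-based chain the strict inequality $\sigma_C(p')>\sigma_C(p)$ degenerates to equality when $\sigma_C(p)=1$, which is harmless since $\sigma_C(p')=1>\opt_\PPP$ still holds in that case.
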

\begin{proof}
    Recall that all communities are non-empty. Fix an arbitrary community $C$. Now, consider $\sigma_C(x)$ as a function of $x_v$ for a node $v\in C$. This function is strictly monotonically increasing in $[0,1]$. And thus all fair solutions $x$ satisfy that the expected size of the seed set is at least $1$, i.e., $\ones^Tx\ge 1$. Similarly, it can be seen that all fair solutions $p$ to the set-based problem satisfy $\sum_{S\subseteq V} p_S |S| \ge 1$. Furthermore, the expected size of the spread $\sigma(x)$ and $\sigma(p)$ is larger than the expected size of the seed set, i.e., for any fair $x$ and $p$, it holds that $\sigma(x)\ge 1$ and $\sigma(p)\ge 1$.
    Thus $\min_{p\in F_{\PPP}(G, \CCC, k)}\sigma(p)\ge 1$ and $\min_{x\in F_{\XXX}(G, \CCC, k)}\sigma(x)\ge 1$.
	Together with $\sigma(S)\le n$ for any set $S$, we obtain an upper bound of $n$ for the price of (group) fairness.
\end{proof}
It turns out that the above bound is tight. Consider the following example. The graph $G$ consists of one isolated node $v$ and a (to $v$ unconnected) clique of $n-1$ nodes with edge probabilities being 1. Assume furthermore that the community structure $\CCC$ is such that the nodes in the clique do not participate in any community, while $v$ forms its own community. Furthermore, assume that the budget $k$ is $2$. The node-based solution $x$ that is zero everywhere but for $x_v=1$ and the set-based solution $p$ that is zero everywhere but for $p_{\{v\}}=1$ are optimal fair solutions as they achieve an objective value of 1. The deterministic solution $S=\{u,v\}$, where $u$ is an arbitrary node in the clique however satisfies $\sigma(S)=n$ and thus  $\overline{\pof}_\XXX(G, \CCC, 2)\ge n$ and  $\overline{\pof}_\PPP(G, \CCC, 2)\ge n$.
Finally, we remark that the above example heavily depends on the fact that it is not necessary to use the whole budget $k$ in order to obtain an optimal fair node-based or set-based solution.

\subsection{Hardness}
Fish et al.~\cite{fish2019gaps} show that the standard maximin problem as introduced in Section~\ref{sec : preliminaries}
is NP-hard and even inapproximable.
In this section, we provide hardness results for our set-based and node-based probabilistic maximin problems that we introduced in Section~\ref{subsec:probmaximin}. In the first paragraph of this subsection, we prove that both problems are NP-hard. In the second paragraph, we even show that the node-based probabilistic maximin problem cannot be approximated to within $1-1/e + \eps$ for any $\eps>0$ unless $P=NP$. We note that, although it shows a stronger result, our reduction in the second paragraph is significantly less involved than the NP-hardness result in the first paragraph. The reader may thus wonder why we still present the more involved proof of the weaker NP-hardness of the node-based problem. The reason for this choice is that we think that the first reduction gives further insight into how the hardness is implied from the fairness criteria. Note that the reduction in the second paragraph uses a single community and thus, in a certain sense, the hardness of approximation is inherent to maximizing influence spread rather than due to any fairness issue. 

\paragraph{NP-Hardness.}
The main result of this paragraph is the following theorem.
\begin{restatable}{theorem-rstbl}{hardness}
\label{thrm:hardness}
    For a directed arc-weighted graph $G = (V,E,w)$ 
    it is NP-hard to decide if there is $p \in \PPP$ with $\min_{v\in V} \E_{S\sim p}[\sigma_v(S)] \ge \alpha$ (resp.\ $x\in \XXX$ with $\min_{v\in V} \E_{S \sim x}[\sigma_v(S)] \ge \alpha$) for any $\alpha \in (0,1)$ even when using the IC model.
\end{restatable}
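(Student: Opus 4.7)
My approach is to reduce from a classical NP-hard combinatorial problem such as \textsc{Set Cover}. Given an instance with universe $U$, collection $\SSS=\{S_1,\ldots,S_m\}$, and target size $k$, together with a threshold $\alpha\in(0,1)$, I would construct in polynomial time an arc-weighted directed graph $G$, a singleton community structure $\CCC$, and a budget $k'$, so that the existence of a set cover of size $k$ is equivalent to the existence of $p\in\PPP$ with $\min_{v\in V}\E_{S\sim p}[\sigma_v(S)]\ge\alpha$ (and analogously for $x\in\XXX$). The skeleton of $G$ has a set-node $s_i$ for each $S_i$, an element-node $u_j$ for each $u_j\in U$, and a weight-$1$ arc $(s_i,u_j)$ whenever $u_j\in S_i$, with singleton communities placed on the element-nodes so that the minimum-coverage objective directly encodes the covering requirement.

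This skeleton already yields hardness at the boundary case $\alpha=1$: reaching every element with probability one is exactly a set cover of size $k$. To extend the argument to every $\alpha\in(0,1)$, I would parametrize the construction by $\alpha$ using an amplification gadget, e.g., $r$ disjoint parallel copies of the skeleton sharing one global budget $rk$, together with short weighted cascades between the set-nodes and the element-nodes whose weights are tuned to $\alpha$. Since each $\sigma_{u_j}$ is the multilinear extension of a monotone submodular set function, choosing $r$ large enough (polynomial in the instance size and in $1/(1-\alpha)$) widens the integer-vs.-fractional gap past $\alpha$: any randomized strategy whose marginals fail to be essentially integral incurs a measurable deficit on some copy, driving the overall minimum below $\alpha$.

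Completeness is immediate, since a set cover of size $k$ produces a deterministic strategy achieving coverage $1\ge\alpha$ on every element in both $\XXX$ and $\PPP$. The main obstacle is soundness: whenever no cover of size $k$ exists, I must show that every feasible $p\in\PPP$ and every feasible $x\in\XXX$ leaves some element with coverage strictly below $\alpha$. For $\XXX$ this reduces to bounding the multilinear extension $1-\prod_{i:\,u_j\in S_i}(1-x_{s_i})$ over the knapsack polytope, which is tractable by standard submodularity arguments combined with the $r$-fold amplification. For $\PPP$ the extra flexibility of correlated distributions is handled by combining the node-based bound with the correlation-gap inequality $\opt_\PPP\le\frac{e}{e-1}\cdot\opt_\XXX$ from Section~\ref{subsec:probmaximin}, enlarging $r$ to absorb this constant. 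Calibrating $r$ and the cascade weights so that the reduction is uniform in $\alpha\in(0,1)$ is what I expect to be the most delicate step.
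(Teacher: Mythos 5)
Your high-level plan (reduce from a covering problem, weight-$1$ arcs from set-nodes to element-nodes, tune a gadget to the threshold $\alpha$) is in the same spirit as the paper, which reduces from \textsc{Vertex Cover}. But the step you rely on for soundness does not work. The real obstacle is that a fractional strategy such as $x_{s_i}=k/m$ on every set-node gives \emph{every} element coverage $1-(1-k/m)^{d_j}>0$, so for small $\alpha$ a no-instance of \textsc{Set Cover} can still clear the threshold; your proposed fix, $r$ disjoint parallel copies sharing budget $rk$, creates no integer-versus-fractional gap at all. The uniform fractional solution replicates perfectly across copies: each copy receives budget $k$ in expectation and attains exactly the single-copy fractional value, with identically non-integral marginals on every copy, so no copy ``incurs a measurable deficit'' and nothing forces near-integrality. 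Replication of the \emph{whole instance} is the wrong object to blow up. The paper instead blows up the \emph{targets}: each edge $e$ gets a set $I_e$ of $\lambda=\lceil mk(k+1)/(\alpha(1-\alpha)^2)\rceil+1$ sink nodes, so the budget cannot be spent on targets directly (some node in $I_e$ has marginal at most $k/\lambda$); a hub $u_a$ with $\alpha$-weight out-arcs guarantees every non-target node coverage at least $\alpha$ for free (needed because the theorem takes $\min_{v\in V}$ over \emph{all} nodes, including in-degree-zero sources, which your element-node-only communities would sidestep but which would then prove a different statement); and a final $\alpha$-weight hop into $I_e$ calibrates the targets to reach exactly $\alpha$ if and only if the edge is deterministically covered, with a deficit of order $\alpha(1-\alpha)^2\rho_e$ otherwise. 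A Markov-inequality argument then shows that if no cover of size $k$ exists, some edge has $\rho_e=\Pr_{S\sim p}[S\cap\{u_v,u_w\}=\emptyset]\ge 1/(m(k+1))$, pushing that target strictly below $\alpha$.

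Your fallback for the set-based case is also not viable as stated: invoking $\opt_\PPP\le\frac{e}{e-1}\opt_\XXX$ requires the reduction to produce a multiplicative factor-$\frac{e}{e-1}$ separation between yes- and no-instances of the node-based value, which your construction does not provide, and by the first point enlarging $r$ cannot manufacture it. (The paper does not need the correlation gap here at all; its single construction handles $\PPP$ directly via the expected-size and Markov argument.) Finally, the $\alpha=1$ boundary case you call ``immediate'' lies outside the theorem's range $\alpha\in(0,1)$, so it cannot anchor the argument. To repair the proposal you would need to replace the parallel-copy amplification with target-side blow-up and an exact-$\alpha$ calibration gadget of the kind just described.
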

The proof of the theorem is based on a reduction from the \emph{vertex cover problem}. In the vertex cover problem, we are given a graph $G = (V,E)$ where $V$ is a set of $n$ vertices and $E$ is a set of $m$ edges, and an integer $k$. The task is to determine if there exists a set $T = \{v_{i_1},\ldots, v_{i_k}\}$ of $k$ vertices such that $\forall e\in E, e\cap T\neq \emptyset$. We proceed by describing the reduction.

Let $\alpha\in (0,1)$ be arbitrary.
Given an instance of the vertex cover problem, we create instances of the set-based and node-based probabilistic maximin problem defined as follows, see Figure~\ref{fig:hardness}. Let us call the resulting directed graph $\overline{G} = (U,A)$ and let us assume that the IC model is the underlying diffusion model and that we are considering the singleton community structure. The node set $U$ contains (1)~one vertex $u_{v}$ for each vertex $v \in V$, (2)~one auxiliary vertex $u_a$, (3)~a vertex $u_{e}$ for each edge $e \in E$, (4)~a set $I_e = \{u_{e^1}, \ldots, u_{e^{\lambda}}\}$ of $\lambda:=\lceil\frac{mk(k+1)}{\alpha(1-\alpha)^2}\rceil + 1$ vertices for every edge $e\in E$. The edge set $A$ is defined as follows: (1)~Each vertex $u_v$ has an outgoing edge towards $u_a$ labelled with probability 1, while the vertex $u_a$ has an outgoing edge with probability $\alpha$ towards each vertex $u_v$. (2)~There is an edge labelled with probability $1$ from $u_{v}$ to $u_{e}$ if $v \in e$. (3)~There are edges from $u_{e}$ to all vertices $u_{e^{i}}$ for each edge $e\in E$ labelled with probability $\alpha$. We set the budget for both set-based and node-based problems equal to $k$.
\begin{figure}[ht]
	\centering
	\begin{tikzpicture} [scale=0.7, transform shape]
		\tikzset{vertex/.style = {shape=circle,draw = black,
				thick,
				fill = white,
				minimum size = 9mm}
		}
		\tikzset{edge/.style = {-}}

		\node[vertex] (v0) at  (-2,1.5)   {$v$};
		\node[vertex] (w0) at  (-2,-1.5)  {$w$};
		\draw[edge] (v0) to[left] node[midway] {$e$} (w0);

		\tikzset{edge/.style = {->,> = latex'}}

		\node[vertex] (ua) at (1, 0) {$u_a$};

		\node[vertex] (v) at  (4,2)  {$u_v$};
		\node[vertex] (w) at  (4,-2) {$u_w$};
		\node[vertex] (e) at  (4, 0) {$u_e$};
		\draw[edge] (v) to[right] node[midway] {$1$} (e);
		\draw[edge] (w) to[right] node[midway] {$1$} (e);

		\draw[edge] (ua) to[bend right=20, above left] node[midway] {$\alpha$} (v);
		\draw[edge] (ua) to[bend left=20, below left] node[midway] {$\alpha$} (w);
		\draw[edge] (v) to[bend right=20, above left] node[midway] {$1$} (ua);
		\draw[edge] (w) to[bend left=20, below left] node[midway] {$1$} (ua);

		\node[vertex] (u1) at  (7,1.5)  {$u_{e^1}$};
		\node[vertex] (u2) at  (7,.5) {$u_{e^2}$};
		\node (d) at (7, -.3) {$\vdots$};
		\node[vertex] (ul) at  (7,-1.5)  {$u_{e^\lambda}$};

		\draw[edge] (e) to[above] node[midway] {$\alpha$} (u1);
		\draw[edge] (e) to[below] node[midway] {$\alpha$} (u2);
		\draw[edge] (e) to[below] node[midway] {$\alpha$} (ul);

		\path [draw = black, rounded corners, inner sep=100pt, dotted]
               (6.25, 2.2)
            -- (8.25, 2.2)
	        -- (8.25, -2.2)
            -- (6.25, -2.2)
            -- cycle;
        \node  (empty)    at (8, 1.8)  {$I_e$};
	\end{tikzpicture}%

	\caption{Illustration of the reduction: Scheme in $\overline G$ that is obtained for one edge $e=\{v,w\}$ in $G$. Note that there is just one node $u_a$ in $\overline G$, while there is a node $u_e$ and a set of nodes $I_e$ for each edge $e\in E$.
	}
	\label{fig:hardness}
\end{figure}
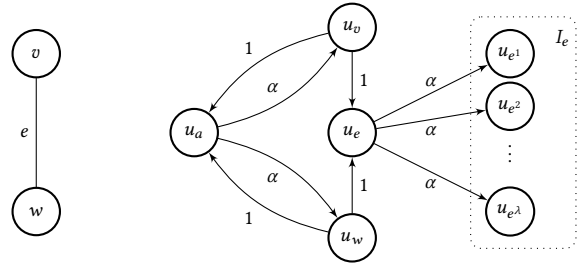
Our aim is now to show that there exists a vertex cover of size $k$ in $G$ if and only if there exists $p \in \PPP$ (resp.\ $x\in \XXX$) such that $\min_{u\in U} \E_{S\sim p}[\sigma_u(S)] \ge \alpha$ (resp.\ $\min_{u\in U} \E_{S \sim x}[\sigma_u(S)] \ge \alpha$) in $\overline G$. The following direction is immediate:
\begin{lemma}
    If there exists a vertex cover $\{v_{i_1},\ldots, v_{i_k}\}$ of size $k$ in $G$, then there exists $p \in \PPP$ (resp.\ $x\in \XXX$) such that $\min_{u\in U} \sigma_u(p) \ge \alpha$ (resp.\ $\min_{u\in U} \sigma_u(x) \ge \alpha$) in $\overline G$.
\end{lemma}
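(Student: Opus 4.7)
The plan is to exhibit, for both the set-based and node-based variants, a single deterministic strategy that is supported on the seed set naturally induced by the vertex cover, and verify by case analysis on the node type in $\overline{G}$ that every node is reached with probability at least $\alpha$. Given a vertex cover $T=\{v_{i_1},\dots,v_{i_k}\}$, let $S:=\{u_{v_{i_1}},\dots,u_{v_{i_k}}\}\subseteq U$. For the set-based problem, put $p_S:=1$ and $p_{S'}:=0$ for all other $S'$; this lies in $\PPP$ because $|S|=k$. For the node-based problem, set $x_{u_{v_{i_j}}}:=1$ for $j=1,\dots,k$ and $x_u:=0$ otherwise; then $\ones^T x = k$ and $S\sim x$ deterministically produces $S$. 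In both cases we then have $\sigma_u(p)=\sigma_u(x)=\sigma_u(S)$ for every $u\in U$, so it suffices to lower-bound $\sigma_u(S)$ uniformly by $\alpha$.

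The verification proceeds by a straightforward case distinction on $u\in U$. If $u\in S$, then $\sigma_u(S)=1$. If $u=u_a$, then any seed $u_{v_{i_j}}$ activates $u_a$ via the probability-$1$ arc, so $\sigma_{u_a}(S)=1$. If $u=u_v$ for some $v\notin T$, the only way to reach $u_v$ in this live-edge graph is via the arc $(u_a,u_v)$ of probability $\alpha$; since $u_a$ is activated with probability $1$, we get $\sigma_{u_v}(S)=\alpha$. If $u=u_e$ for some edge $e\in E$, the vertex-cover property guarantees a seed $u_{v_{i_j}}$ with $v_{i_j}\in e$, and the arc $(u_{v_{i_j}},u_e)$ has probability $1$, so $\sigma_{u_e}(S)=1$. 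Finally, if $u=u_{e^i}\in I_e$, then $u_e$ is activated with probability $1$ by the previous case and the arc $(u_e,u_{e^i})$ has probability $\alpha$, yielding $\sigma_{u_{e^i}}(S)\ge \alpha$.

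There is really no main obstacle here: the construction has been tailored so that the only probabilities appearing are $1$ and $\alpha$, the single auxiliary vertex $u_a$ guarantees that non-cover vertices $u_v$ get exactly probability $\alpha$ without the seed, and the cover property is precisely what forces $u_e$ (and hence each $u_{e^i}\in I_e$) to achieve probability at least $\alpha$. Assembling the above bounds gives $\min_{u\in U}\sigma_u(S)\ge \alpha$, which via the identities $\sigma_u(p)=\sigma_u(S)$ and $\sigma_u(x)=\sigma_u(S)$ yields the claimed inequalities for both $p$ and $x$.
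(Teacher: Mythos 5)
Your proposal is correct and follows exactly the same approach as the paper: take the deterministic strategy (point mass on the seed set induced by the vertex cover, resp.\ the $0$--$1$ vector indicating it) and check that every node type in $\overline{G}$ is reached with probability at least $\alpha$. The paper simply asserts this conclusion as clear, whereas you spell out the case analysis, which is a harmless elaboration of the same argument.
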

\begin{proof}
    Consider the set-based solution $p\in\PPP$ such that $p_S=1$ for $S=\{u_{v_{i_1}},\ldots, u_{v_{i_k}}\}$ and the node-based solution $x\in\XXX$ such that $x_u = 1$ if $u\in\{u_{v_{i_1}}, \ldots, u_{v_{i_k}}\}$. Then, clearly $\min_{u\in U} \E_{S\sim p}[\sigma_u(S)] \ge \alpha$ and $\min_{u\in U} \E_{S \sim x}[\sigma_u(S)] \ge \alpha$.
\end{proof}
It remains to argue the reverse direction. We first fix the following two observations.
\begin{observation}\label{obs: hard}
    \begin{enumerate}
    \item There exists an optimal solution $p^*$ (resp. $x^*$) such that, for any set $S\subseteq V$ with $(\{u_a\} \cup \{u_{e} : e\in E\}) \cap S \neq \emptyset$, it holds that $p_S=0$  (resp. $p^{x^*}_S= 0$, where $p^{x^*}_S:=\prod_{i\in S} x^*_i \prod_{j\in V\setminus S} (1 - x^*_j)$).
    \item For an edge $e=\{v,w\}\in E$, let $u_{e^{i}}\in I_e$ be a corresponding vertex and let $S\subseteq U$. Then
    (i) $\sigma_{u_{e^{i}}}(S)=1$ if $u_{e^{i}}\in S$, (ii) $\sigma_{u_{e^{i}}}(S)=\alpha$ if $u_{e^{i}}\notin S$ and $\{u_v, u_w\}\cap S \neq \emptyset$, and (iii) $\sigma_{u_{e^{i}}}(S) \le \alpha (1-(1-\alpha)^2)$ otherwise.
    \end{enumerate}
\end{observation}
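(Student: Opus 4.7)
For part~1 the crucial structural fact is that every edge leaving a node $u_v$ with $v\in V$ carries probability one: the edge $u_v\to u_a$ and the edges $u_v\to u_e$ for each $e\ni v$. Hence, in every realization $L$ of the live-edge random graph, $\rho_L(\{u_a\})\subseteq \rho_L(\{u_v\})$ for every $v\in V$, and $\rho_L(\{u_e\})\subseteq \rho_L(\{u_v\})$ for every $v\in e$. In other words, in every live-edge graph, $u_v$ dominates both $u_a$ and $u_e$ as a seed.

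For the set-based statement, start from any optimal $p$ and replace each set $S$ in its support as follows: while $u_a\in S$, pick $v\in V$ with $u_v\notin S$ and swap $u_a$ for $u_v$, or simply delete $u_a$ if no such $v$ exists (in which case $u_a$ is redundant by the above); handle each $u_e\in S$ analogously with some $v\in e$. By domination and monotonicity of $\rho_L$ in its seed set, the new set $S'$ satisfies $\rho_L(S')\supseteq \rho_L(S)$ for every $L$, so $\sigma_u(S')\ge \sigma_u(S)$ for all $u\in U$; furthermore $|S'|\le |S|$, so the aggregated distribution still lies in $\PPP$ and is still optimal.

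For the node-based statement the swap does not directly lift, since some $x^*_{u_v}$ might already be close to one and we cannot simply add $x^*_{u_a}$ to it. The trick I would use is a multiplicative transfer: pick any $v\in V$ and define
\[
    x'_{u_v}:=1-(1-x^*_{u_a})(1-x^*_{u_v}),\quad x'_{u_a}:=0,\quad x'_i:=x^*_i \text{ otherwise.}
\]
Then $x'\in\XXX$ since $x'_{u_v}\in [0,1]$ and $\sum_i x'_i=\sum_i x^*_i - x^*_{u_a}x^*_{u_v}\le k$. Writing $\sigma_u(x)=\E_\LLL[1-\pi_L(u,x)]$ with $\pi_L(u,x):=\prod_{w\in R_L(u)}(1-x_w)$ and $R_L(u):=\{w:u\in \rho_L(\{w\})\}$, and using that $u_a\in R_L(u)$ forces $u_v\in R_L(u)$ (by the deterministic edge $u_v\to u_a$), a four-case split on which of $u_a,u_v$ lie in $R_L(u)$ shows $\pi_L(u,x')=\pi_L(u,x^*)$ when both do, $\pi_L(u,x')\le \pi_L(u,x^*)$ when only $u_v$ does, equality when neither does, and the remaining case ``only $u_a$'' is vacuous. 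Hence $\sigma_u(x')\ge \sigma_u(x^*)$ for every $u\in U$. Iterating the same template with $u_e$ in place of $u_a$ and any $v\in e$ eliminates every $u_e$ as well. Identifying this specific multiplicative formula (rather than a naïve additive swap that can actually hurt the objective) is, I expect, the main obstacle of the whole proof.

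For part~2, invoking part~1 we may assume $u_a, u_e\notin S$. The only incoming edges of $u_{e^i}$ come from $u_e$ with probability $\alpha$, and this edge is independent of everything else, so for $u_{e^i}\notin S$ we have $\sigma_{u_{e^i}}(S)=\alpha \cdot\Pr_\LLL[u_e\text{ reached from }S]$. Case~(i) is immediate. In case~(ii) some $u_v\in S$ with $v\in e$ reaches $u_e$ deterministically via the weight-one edge $u_v\to u_e$, giving $\sigma_{u_{e^i}}(S)=\alpha$. In case~(iii), $u_e$ can only be reached via $u_v$ or $u_w$, neither of which is in $S$ and each of which can only be reached via $u_a$; in turn $u_a$ is reached iff some $u_{v'}\in S$ with $v'\in V$. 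Conditional on this last event, $u_a$ is reached with probability one, $u_v$ and $u_w$ are reached independently each with probability $\alpha$, $u_e$ is reached with probability $1-(1-\alpha)^2$, and $u_{e^i}$ with probability $\alpha(1-(1-\alpha)^2)$; otherwise $u_{e^i}$ is unreachable. The claimed upper bound follows.
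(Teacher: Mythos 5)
Your proof is correct and follows the same overall strategy as the paper's: part~1 is a swap/transfer argument resting on the fact that every arc leaving a node $u_v$ has weight one (so $u_v$ dominates $u_a$, and dominates $u_e$ whenever $v\in e$, in every live-edge graph), and part~2 is a direct computation on the reachability structure. The one place where you genuinely depart from the paper is the node-based half of part~1. The paper's proof performs the additive transfer $x' := x - x_u\ones_u + x_u\ones_{u'}$ and does not address that $x'_{u'}$ may then exceed $1$; your multiplicative transfer $x'_{u_v} = 1-(1-x^*_{u_a})(1-x^*_{u_v})$ stays in $[0,1]^n$ by construction, and your four-case analysis of $\pi_L(u,\cdot)$, using that $u_a\in R_L(u)$ forces $u_v\in R_L(u)$, cleanly verifies that no $\sigma_u$ decreases. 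So on this point your argument is the more careful of the two. One small inaccuracy in a side remark: the additive swap does not actually ``hurt the objective'' --- capping it at $1$ and using $\max\{0,1-a-b\}\le(1-a)(1-b)$ shows the capped additive transfer also weakly increases every $\sigma_u$; its only defect is that, uncapped, it leaves the feasible region, which is precisely what your multiplicative formula avoids. Finally, you are right to invoke part~1 before proving part~2(iii): as literally stated for arbitrary $S\subseteq U$ the bound in (iii) fails when $u_e\in S$; the paper's proof makes the same implicit restriction, and it is harmless because the observation is only ever applied to sets in the support of the modified optimal solution.
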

\begin{proof}
    \begin{enumerate}
        \item Assume that $p_S>0$ (resp. $p^x_S>0$) for some set $S\subseteq V$ with $(\{u_a\} \cup \{u_{e} : e\in E\}) \cap S \neq \emptyset$.
        Let $u$ be a node in the intersection and let $S'$ be the set $S$ with $u$ replaced by one of its in-neighbors, call it $u'$. Then, clearly $p':= p - p_S \ones_S + p_S \ones_{S'}$ (resp.\ $x':= x- x_u \ones_u+ x_u \ones_{u'}$) satisfies $\sigma_u(p')\ge\sigma_u(p)$ (resp. $\sigma_u(x')\ge\sigma_u(x)$) for all vertices $u \in U$. It follows that $p$ (resp. $x$) can be transformed into a distribution (resp. vector) satisfying $p_S=0$ (resp. $p^{x^*}_S=0$) for all $S\subseteq V$ with $(\{u_a\} \cup \{u_{e} : e\in E\}) \cap S \neq \emptyset$.
        \item The first two cases are trivially true. For the third case, note that in that case $u_{e^{i}}$ can be reached at most via $u_a$. The probability that at least one of $u_v, u_w$ gets reached from $u_a$ is at most $1 - (1-\alpha)^2$ and thus $\sigma_{u_{e^{i}}}(S)\le \alpha (1-(1-\alpha)^2)$. \qedhere
    \end{enumerate}
\end{proof}
We are now ready to prove the lemma showing the reverse direction.
\begin{lemma}
    If there exists no vertex cover $\{v_{i_1},\ldots, v_{i_k}\}$ of size $k$ in $G$, then for all optimal solutions $p^* \in \PPP$ and $x^*\in \XXX$ it holds that $\min_{u\in U} \sigma_u(x^*) \le \min_{u\in U} \sigma_u(p^*) < \alpha$ in $\overline G$.
\end{lemma}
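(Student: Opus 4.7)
The first inequality $\min_{u\in U}\sigma_u(x^*)\le\min_{u\in U}\sigma_u(p^*)$ is immediate from Lemma~\ref{lemma:rel_set_node}, so the task reduces to establishing the strict bound $\min_{u\in U}\sigma_u(p^*)<\alpha$ for any optimal $p^*\in\PPP$. I proceed by contraposition: assuming $\min_{u\in U}\sigma_u(p^*)\ge\alpha$, I plan to extract from the support of $p^*$ a vertex cover of $G$ of size at most $k$, contradicting the hypothesis. By Observation~\ref{obs: hard}(1) I may assume $p^*$ is supported on sets $S\subseteq\{u_v:v\in V\}\cup\bigcup_{e\in E}I_e$, so that $y^{p^*}_{u_a}=y^{p^*}_{u_e}=0$ for every $e\in E$, where $y^{p^*}_u:=\Pr_{S\sim p^*}[u\in S]$.

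Next, for every edge $e=\{v,w\}\in E$ and index $i\in[\lambda]$, the three cases of Observation~\ref{obs: hard}(2) combine into the single pointwise bound
\[
\sigma_{u_{e^i}}(S)\le \alpha+(1-\alpha)\,\ones_{u_{e^i}\in S}-\alpha(1-\alpha)^2\,\ones_{u_{e^i}\notin S,\,\{u_v,u_w\}\cap S=\emptyset}.
\]
Setting $N_e:=\sum_{i=1}^\lambda y^{p^*}_{u_{e^i}}$ and $q_e:=\Pr_{S\sim p^*}[\{u_v,u_w\}\cap S=\emptyset]$, I take expectations, impose $\sigma_{u_{e^i}}(p^*)\ge\alpha$, sum over $i$, and use $\E_{S\sim p^*}[|S\cap I_e|\,\ones_{e\text{ uncov}}]\le N_e$ to derive an inequality of the form $N_e\ge\alpha(1-\alpha)(\lambda q_e-N_e)$, which rearranges to $N_e\ge\frac{\alpha(1-\alpha)}{2}\lambda q_e$. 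Summing over $e$ and invoking the budget constraint $\sum_{e\in E}N_e\le k$ then forces $\sum_{e\in E}q_e\le\frac{2k}{\alpha(1-\alpha)\lambda}$, and the chosen $\lambda\ge\frac{mk(k+1)}{\alpha(1-\alpha)^2}+1$ makes this strictly smaller than $\frac{1}{k+1}$ whenever $m\ge 2$ (the trivial cases $m\le 1$ being handled separately, since then either the ``no size-$k$ vertex cover'' hypothesis is vacuous or it forces $k=0$ and hence $p^*=\delta_\emptyset$).

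To close the argument, consider the random set $T_S:=\{v\in V:u_v\in S\}$ obtained by sampling $S\sim p^*$. Markov's inequality applied to $\E[|T_S|]=\sum_{v\in V}y^{p^*}_{u_v}\le k$ yields $\Pr[|T_S|\le k]\ge\frac{1}{k+1}$, while a union bound over the edges gives $\Pr[T_S\text{ is a vertex cover of }G]\ge 1-\sum_{e\in E}q_e>1-\frac{1}{k+1}$. Hence both events occur simultaneously with strictly positive probability, producing some $S$ in the support of $p^*$ for which $T_S$ is a vertex cover of $G$ of size at most $k$, the desired contradiction.

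The main obstacle lies in the middle paragraph: the indicator events ``$u_{e^i}\in S$'' and ``$e$ is uncovered by $S$'' are in general not independent under $p^*$, so care is required when lower-bounding $\sum_{i=1}^\lambda\Pr[u_{e^i}\notin S,\ e\text{ uncov}]$ by $\lambda q_e-N_e$ without losing the quadratic factor $(1-\alpha)^2$ that the definition of $\lambda$ is precisely engineered to absorb.
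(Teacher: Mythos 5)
Your argument is correct and is essentially the paper's proof run in contrapositive and in aggregate: every ingredient matches --- Observation~\ref{obs: hard}, the quantities $q_e$ (the paper's $\rho_e$), Markov's inequality giving $\Pr_{S\sim p^*}[|S|\le k]\ge 1/(k+1)$, and the choice of $\lambda$ absorbing the slack --- with your budget-averaging $N_e=\sum_{i=1}^{\lambda} y^{p^*}_{u_{e^i}}$ over all of $I_e$ replacing the paper's pigeonhole selection of a single witness $u_{e^1}\in I_e$ with $\Pr_{S\sim p^*}[u_{e^1}\in S]\le k/\lambda$, and your ``extract a vertex cover from the support'' step being the exact contrapositive of the paper's ``every $S$ with $|S|\le k$ misses some edge, hence $\sum_e \rho_e\ge 1/(k+1)$.'' The ``main obstacle'' you flag is not one, since $\Pr_{S\sim p^*}[u_{e^i}\notin S \wedge e \text{ uncovered}]\ge q_e - y^{p^*}_{u_{e^i}}$ holds by inclusion--exclusion with no independence assumption (it is precisely the bound your derivation already uses); the only genuine cost of your reorganization is the lost factor of $(1-\alpha)$ that forces the $m\ge 2$ case split, which the paper's single-witness version avoids and which you correctly dispatch.
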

\begin{proof}
    Following Observation~\ref{obs: hard}, let $p^*$ be such that, for any set $S\subseteq V$ with $(\{u_a\} \cup \{u_{e} : e\in E\}) \cap S \neq \emptyset$, it holds that $p^*_S=0$.
	Our goal is to show that there exists one vertex $u\in U$ such that $\sigma_u(p^*) < \alpha$.
	Let $e=\{v, w\}\in E$ be arbitrary and recall that $I_e$ is of size $\lambda$. Thus, there exists $i\in [\lambda]$ such that $\Pr_{S\sim p^*}[u_{e^i}\in S] \le k/\lambda$ as otherwise $\E_{S\sim p^*}[|S|]> k$. W.l.o.g.\ let us assume that $i=1$ for all edges $e\in E$.
	Observe that then, for all $e\in E$, it holds that
	\begin{align}\label{formula: sigma small}
	\begin{split}
	    \sigma_{u_{e^1}}(p^*)
	    &= \Pr_{S\sim p^*}[u_{e^1} \in S] + \E_{S\sim p^*} [\sigma_{u_{e^1}}(S) \; | \; u_{e^1}\notin S] \cdot \Pr_{S\sim p^*} [u_{e^1}\notin S]\\
	    &\le \frac{k}{\lambda} + \E_{S\sim p^*} [\sigma_{u_{e^1}}(S) \; | \; u_{e^1}\notin S].
	\end{split}
	\end{align}
    In order to upper bound the above expectation, we define $\rho_e:=\Pr_{S\sim p^*}[S\cap \{u_v, u_w\}=\emptyset]$. Using Observation~\ref{obs: hard}, we get
    \begin{align}\label{formula: cond exp}
	    \E_{S\sim p^*} [\sigma_{u_{e^1}}(S) \; | \; u_{e^i}\notin S]
	    \le \alpha (1-(1-\alpha)^2) \cdot \rho_e + \alpha \cdot (1 - \rho_e)
	    = \alpha - \rho_e\cdot \alpha (1-\alpha)^2.
	\end{align}
	In order to complete the proof it thus remains to find $e\in E$ for which $\rho_e$ can be bounded from below. We deduce this bound from a lower bound on the sum of all $\rho_e$'s. We have
	\begin{align*}
	   \sum_{e\in E} \rho_e
	   &\ge \sum_{e=(v,w)\in E} \E_{S\sim p^*}[\ones_{S\cap \{u_v, u_w\}=\emptyset} \; | \; |S|\le k] \cdot \Pr_{S\sim p^*}[|S|\le k]\\
	   &\ge \frac{\E_{S\sim p^*}[\sum_{e\in E}\ones_{S\cap \{u_v, u_w\}=\emptyset} \; | \; |S|\le k]}{k + 1}
	\end{align*}
	using that $\E_{S\sim p^*}[|S|]\le k$ and hence using Markov's inequality
	\[
	    \Pr_{S\sim p^*}[|S| \le k]
	    = 1 - \Pr_{S\sim p^*}[|S| \ge k+1]
	    \ge 1 - \frac{k}{k+1}
	    = \frac{1}{k+1}.
	\]
	We now use that there exists no vertex cover of size $k$ in $G$ and thus for each $S$ with $|S|\le k$, there exists $e\in E$ such that $S\cap \{u_v, u_w\}=\emptyset$. Hence, we get $\sum_{e\in E} \rho_e \ge 1/(k+1)$. This also implies that there exists one edge $\bar e\in E$ for which $\rho_{\bar e}\ge 1/(m(k+1))$. Plugging this into~\eqref{formula: cond exp} and this again into~\eqref{formula: sigma small} gives
	\[
	    \sigma_{u_{\bar e^1}}(p^*)
	    \le \frac{k}{\lambda} + \alpha - \frac{\alpha(1-\alpha)^2}{m(k+1)}
	    < \alpha,
	\]
	using that $\lambda > \frac{mk(k+1)}{\alpha(1-\alpha)^2}$ by definition.
\end{proof}
The above two lemmata directly prove Theorem~\ref{thrm:hardness}.

\paragraph{Hardness of Approximation for Node-Based Problem.}
We continue by proving an even stronger result for  the node-based probabilistic maximin problem.
\begin{theorem}
    For any $\eps>0$, the node-based probabilistic maximin problem, cannot be approximated to within a factor $1-1/e+\epsilon$ unless $P=NP$.
\end{theorem}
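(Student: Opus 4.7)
The plan is to reduce from classical influence maximization (IM). Kempe, Kleinberg, and Tardos~\cite{kempe} reduced Max-$k$-Cover to IM under the IC model, so Feige's $(1-1/e+\eps)$-inapproximability of Max-$k$-Cover transfers to IM. Given an IM instance $(G,k)$ on $n$ nodes, I would form the node-based probabilistic maximin instance $(G,\CCC,k)$ using the single community $\CCC=\{V\}$. Under this choice, the objective becomes $\sigma_V(x)=\tfrac{1}{n}\sigma(x)$, where $\sigma(x)=\E_{S\sim x}[\sigma(S)]$ is precisely the multilinear extension of the spread function $\sigma$ and $\XXX=\{x\in[0,1]^n:\ones^Tx\le k\}$ is the uniform-matroid polytope of rank $k$.

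The key step would be to prove
\[
    \opt_\XXX(G,\{V\},k)\;=\;\tfrac{1}{n}\,\max_{S\in\binom{V}{k}}\sigma(S),
\]
i.e., that the randomized relaxation does not help when there is a single community. The inequality $\ge$ is immediate since every integer seed set of size at most $k$ induces an indicator vector in $\XXX$ of the same value. For $\le$, I would use that $\sigma$ is monotone submodular~\cite{kempe} and invoke the well-known lossless rounding result for multilinear extensions over a matroid polytope (pipage rounding or swap rounding): in polynomial time, any $x\in\XXX$ can be transformed into an integer set $S\subseteq V$ with $|S|\le k$ and $\sigma(S)\ge \sigma(x)$. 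By monotonicity of $\sigma$ the fractional optimum may be assumed to lie on the base face $\ones^T x=k$, so this is truly a statement about rounding to bases of the uniform matroid.

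With this equality, the hardness transfer is immediate. Suppose, for contradiction, that a polynomial-time $(1-1/e+\eps)$-approximation algorithm $A$ existed for the node-based problem. Applying $A$ to $(G,\{V\},k)$ would return $x\in\XXX$ with
\[
    \sigma(x)\;\ge\;(1-1/e+\eps)\,\max_{S\in\binom{V}{k}}\sigma(S).
\]
Rounding $x$ via the aforementioned procedure would then yield an integer set $S$ with $|S|\le k$ and $\sigma(S)\ge\sigma(x)$, i.e., a polynomial-time $(1-1/e+\eps)$-approximation for IM, contradicting the Feige--KKT lower bound.

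The only non-routine step in this plan is the lossless rounding, and the only care needed is to cite the correct version of the result (for multilinear extensions of monotone submodular functions over a matroid polytope, outputting an integer point of value at least $F(x)$). As the paper itself remarks, because the reduction uses only a single community, the $(1-1/e)$ barrier encountered here is intrinsic to maximizing influence spread rather than to any fairness feature, which explains why this argument is so much shorter than the earlier NP-hardness reduction for multiple communities.
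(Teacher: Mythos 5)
Your proposal is correct and rests on exactly the core insight of the paper's proof: with a single community the node-based objective is (a scaling of) the multilinear extension of a monotone submodular function over the rank-$k$ uniform-matroid polytope, lossless (pipage) rounding makes this relaxation exact, and hence the $(1-1/e)$ barrier of Feige~\cite{Feige98} transfers. The packaging differs in two ways, both of which the paper's version is arranged to sidestep. First, you reduce from influence maximization as a black box and take $\CCC=\{V\}$; because every seed then counts itself toward $\sigma(S)$, transferring the $(1-1/e+\eps)$ factor from \textsc{Max-Coverage} to IM carries an additive $|S|\le k$ slack that must be suppressed (by duplicating elements, or by citing the known inapproximability of IM built on~\cite{kempe}), whereas the paper reduces from \textsc{Max-Coverage} directly and lets the single community consist only of the element-side nodes $R$, so integral solutions have value exactly $\bigl|\bigcup_{S\in T}S\bigr|/\nu$ and no slack appears. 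Second, your claim that any $x\in\XXX$ can be rounded in polynomial time to $S$ with $\sigma(S)\ge\sigma(x)$ is, for \emph{general} IC instances, not known to be implementable: deterministic pipage rounding needs exact evaluations of the multilinear extension, and computing $\sigma$ is \#P-hard (swap rounding only gives the guarantee in expectation, and certifying a sampled set again requires evaluating $\sigma$). This is harmless in your argument because the rounding is only ever invoked on the structured bipartite instances produced by the reduction, where the extension has the explicit closed form $\frac{1}{\nu}\sum_{v\in R}\bigl(1-\prod_{u}(1-x_u)\bigr)$ and the Ageev--Sviridenko $\eps$-convexity argument applies verbatim~\cite{AgeevS04} --- which is precisely the form in which the paper states and uses the rounding. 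With those two points made explicit, your argument is complete and essentially coincides with the paper's.
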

\begin{proof}
    The proof is by reduction from the \textsc{Max-Coverage} problem. An instance of \textsc{Max-Coverage} is given by an integer $\kappa$ and a collection of subsets $D=\{S_1,\ldots, S_\mu\}$ of a universe of elements $U=\{e_1,\ldots, e_\nu\}$. The task is to find a subset $T$ of at most $\kappa$ subsets from $D$ such that the number of elements in their union $|\bigcup_{S\in T}S|$ is maximized. Recall that, for any $\eps>0$, \textsc{Max-Coverage} cannot be approximated to within $1-1/e+\eps$, unless $P=NP$~\cite[Theorem 5.3]{Feige98}.
    
    We construct the following instance of the node-based probabilistic maximin problem. The graph $G=(V, E)$ has an nodeset $V$ that consists of two sets of nodes, the first being $L:=\{u^{1}, \ldots, u^{\mu}\}$ and the second being $R:=\{v^{1}, \ldots, v^{\nu}\}$. There is an edge from node $u^i$ to node $v^j$, if and only if $e_j\in S_i$. The weight of all  these edges is 1. The community structure $\CCC$ consists of a single community $C$ that is equal to $R$. Now, w.l.o.g., we can assume that any node-based solution $x$ satisfies $x_v=0$ for all nodes $v\in R$. Suppose otherwise, then transferring $x_v$ to any of $v$'s ingoing neighbors can only increase the value of $\sigma_C(x)$.
    There is a one-to-one correspondence between the feasible sets to \textsc{Max-Coverage} and the integral solutions to the node-based probabilistic maximin problem as follows. A feasible set $T$ in \textsc{Max-Coverage} that covers $\ell$ elements implies an integral solution to the node-based probabilistic maximin problem with $x_{u^i}=1$ for all $S_i$ in $T$ and 0 otherwise that achieves a value of $\sigma_C(x) = \ell/\nu$. The other direction holds as well. 
    
    Now, let $\eps>0$ and assume that there exists an algorithm with approximation ratio $1-1/e+\eps$ for the node-based probabilistic maximin problem. Let $x\in \XXX$ be the output of that algorithm for the instance constructed above. Notice that the objective function in the node-based probabilistic maximin problem is equal to 
    \[
        f(x) 
        = \frac{1}{\nu} \sum_{v\in R} \sigma_v(x)
        = \frac{1}{\nu} \sum_{v\in R} 
            \Big(
                1 - \prod_{u\in\delta^{in}(v)}(1-x_u)
            \Big),
    \]
    where $\delta^{in}(v)$ denote all the ingoing neighbors of node $v$. It is clear that $f$ is $\epsilon$-convex in the sense of Ageev and Sviridenko~\cite{AgeevS04} and thus Pipeage rounding~\cite{AgeevS04} can be used in order to compute a vector $\ones_T$ from $x$ such that $f(\ones_T)\ge f(x)$. Denoting with $x^*$ and $T^*$ an optimal solution to the node-based probabilistic maximin problem and the \textsc{Max-Coverage} problem, respectively, this implies 
    \[
        \frac{\big|\bigcup_{S\in T} S\big|}{\nu}
        = f(\ones_T)
        \ge f(x)
        \ge \big(1 - \frac{1}{e} + \eps \big) \cdot f(x^*)
        \ge \big(1 - \frac{1}{e} + \eps \big) \cdot f(\ones_{T^*})
        = \big(1 - \frac{1}{e} + \eps \big) \cdot \frac{\big|\bigcup_{S\in T^*} S\big|}{\nu}.
    \]
    Hence, we obtain an algorithm for \textsc{Max-Coverage} with approximation ratio $1-1/e+\eps$ which is impossible unless $P=NP$. This completes the proof.
\end{proof}
\section{Approximation Algorithms} \label{sec: algorithms}
In this section, we show that there are algorithms that compute $(1-1/e,\eps)$-approximations to both the node-based and the set-based probabilistic maximin problems. We start with a standard step that allows us to approximate the functions $\sigma_C(p)$ and $\sigma_C(x)$ to within an additive error of $\eps$ for any $\eps>0$.

\paragraph{Approximation via Hoeffding's bound.}
The functions $\sigma_C(p)$ and $\sigma_C(x)$ involved in the optimization problems at hand are not computable exactly in polynomial time (even for a vector $p$ of polynomial support). Even worse, they are not relatively approximable using Chernoff bounds as there is no straightforward absolute lower bound on $\sigma_C(S)$ for sets $S$ of size $k$ and communities $C\in\CCC$.
Here, we will show that the functions can be absolutely approximated by functions $\tilde \sigma_C(p)$ and $\tilde \sigma_C(x)$ that are obtained by sampling a sufficiently large number of live-edge graphs.
Optimal solutions to the resulting maximin problems involving the approximate functions can thus be shown to be $(1,\eps)$-approximations to $\opt_\SSS(G,\CCC, k)$ and $\opt_\XXX(G,\CCC, k)$, respectively.

Formally, for $T\in\ZZ_{\ge 0}$, we let $L_1,\ldots, L_T$ denote a set of $T$ live-edge graphs sampled according to the Triggering model (that entails both the IC and LT model). Then, for $v \in V$ and $S\in 2^V$, we define
\[
    \tilde{\sigma}_v(S) := \frac{1}{T} \sum_{t=1}^{T} \ones_{v\in \rho _{L_{t}}(S)}.
\]

\begin{lemma}\label{lemma k-maximin sampling}
	Let $\delta\in (0,1/2)$ and $\eps\in(0,1)$.
	If
	$T \ge \eps^{-2}\cdot [n + \log n + \log\delta^{-1}]$,
	then, with probability at least $1-\delta$, we have that
	$
	|\tilde{\sigma}_v(S) - \sigma_{v}(S) |\leq \eps
	$
	holds for all $ v \in V $ and $ S\in 2^V$.
\end{lemma}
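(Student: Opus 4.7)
The plan is a direct application of Hoeffding's inequality followed by a union bound over $v\in V$ and $S\in 2^V$. Fix any $v \in V$ and $S \in 2^V$. Since the live-edge graphs $L_1,\ldots,L_T$ are sampled i.i.d.\ from the distribution of $\LLL$, the random variables $X_t:=\ones_{v\in \rho_{L_t}(S)}$ are i.i.d.\ Bernoulli with mean $\Pr_\LLL[v\in\rho_\LLL(S)]=\sigma_v(S)$. Thus $\tilde\sigma_v(S)=\frac{1}{T}\sum_{t=1}^T X_t$ is the empirical mean of $T$ i.i.d.\ $[0,1]$-valued random variables with expectation $\sigma_v(S)$, and Hoeffding's inequality yields
\[
    \Pr\bigl[\,|\tilde\sigma_v(S)-\sigma_v(S)|>\eps\,\bigr] \;\le\; 2\exp(-2T\eps^2).
\]

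Next, I would take a union bound over the $n$ choices of $v\in V$ and the $2^n$ choices of $S\subseteq V$, so that the probability of there existing some pair $(v,S)$ for which the estimate is off by more than $\eps$ is at most
\[
    n\cdot 2^n\cdot 2\exp(-2T\eps^2) \;=\; \exp\bigl(\log(2n)+ n\log 2 - 2T\eps^2\bigr).
\]
To ensure this is at most $\delta$ it suffices to have $2T\eps^2 \ge n\log 2 + \log n + \log 2 + \log \delta^{-1}$, and the assumed inequality $T \ge \eps^{-2}\cdot[n+\log n+\log\delta^{-1}]$ easily implies this (using $\log 2 < 1$ and absorbing the additive constants into the $n$-term, which we may do without loss of generality since for $n=1$ the claim is trivial and for $n\ge 2$ we have $\log 2 \le n/2$).

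The only subtle point is that we are union-bounding over an exponentially large family of sets $S$, but since $\log|2^V|=n\log 2$ this only costs a linear-in-$n$ additive term in the sample complexity, which is exactly what appears in the hypothesis. Thus the argument is essentially routine once one notes that uniformity over \emph{all} subsets $S$ is what forces the $n$ term (as opposed to, say, $\log n$), and that no relative Chernoff bound is needed since the event we care about is already an additive one. The ``main obstacle'' is therefore mostly bookkeeping: making sure the constants inside the $\log$'s are absorbed properly into the slightly loose sample bound given in the statement.
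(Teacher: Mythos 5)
Your proof is correct and follows essentially the same route as the paper's: Hoeffding's inequality for each fixed pair $(v,S)$, followed by a union bound over all $n\cdot 2^n$ pairs, with the stated bound on $T$ providing enough slack to absorb the $\log 2$ constants. The only cosmetic difference is that the paper phrases the per-pair failure probability directly as $\delta\cdot 2^{-n}/n$ (assuming $n\ge 2$) rather than carrying the constants through the union bound as you do.
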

\begin{proof}
	We use Hoeffding's Bound, see for example Theorem~4.12 in the book by Mitzenmacher and Upfal~\cite{MitzenmacherU17}.
	Fix a node $v\in V$ and a set $S\in 2^V$.
	Note that the graphs $L_1, \ldots, L_T$ are sampled independently
	and that $\ones_{v\in \rho_{L_t}(S)}\in [0,1]$. Hence,
	$
	\Pr[|\tilde{\sigma}_{v}(S) - \sigma_v(S)| \geq \eps]
	\le 2e^{-2T\eps^2}
	\le \delta\cdot 2^{-n}/n
	$
	by the choice of $T$ and assuming that $n\ge 2$.
	Using a union bound over all $2^n$ sets $S\in 2^V$ and all $n$ nodes $v\in V$, we obtain that with probability at least $1-\delta$, we have
	$
	|\tilde{\sigma}_v(S) - \sigma_{v}(S) |\leq \eps
	$
	for all $v \in V$ and for all $S\in 2^V$.
\end{proof}
We now observe that the absolute $\eps$-approximations $\tilde \sigma_v(S)$ for all nodes $v\in V$ and sets $S\in 2^V$ imply also that $\tilde \sigma_v(p)=\E_{S\sim p}[\tilde\sigma_v(S)]$ is an absolute  $\eps$-approximation of $\sigma_v(p):=\E_{S\sim p}[\sigma_v(S)]$ for any $p\in\PPP$. The same holds true for $\tilde \sigma_v(x)=\E_{S\sim x}[\tilde\sigma_v(S)]$ for any $x\in\XXX$.

Furthermore, we get the same result for $\tilde \sigma_C(p):=\frac{1}{|C|}\sum_{v\in C} \tilde \sigma_v(p)$ for any $p\in\PPP$ and $C\in\CCC$ and for $\tilde \sigma_C(x):=\frac{1}{|C|}\sum_{v\in C} \tilde \sigma_v(x)$ for any $x\in\XXX$ and $C\in\CCC$ as these functions are again just averages over other absolute $\eps$-approximations. Hence we get the following lemma.

\begin{restatable}{lemma-rstbl}{hoeffdingapprox}
\label{lemma: opt approx}
    Let $\delta\in (0,1/2)$ and $\eps\in(0,1)$. Assume that $T \ge 4 \eps^{-2} \cdot [n + \log n + \log\delta^{-1}]$ and that $\tilde \sigma_C(\cdot)$ is as above. Let $p\in\PPP$ and $x\in\XXX$ be $(\alpha, \beta)$-approximations for
    $
        \max_{p\in\PPP}\min_{C\in\CCC} \tilde\sigma_C(p)
    $ and
    $
        \max_{x\in\XXX}\min_{C\in\CCC} \tilde\sigma_C(p)
    $, respectively. Then $p$ and $x$ are $(\alpha, \beta +\eps)$-approximations of $\opt_\PPP(G, \CCC, k)$ and $\opt_\XXX(G, \CCC, k)$ with probability at least $1-\delta$, respectively.
\end{restatable}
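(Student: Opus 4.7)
The plan is to show that an approximate maximizer of the sampled objective is an approximate maximizer of the true objective at the cost of an additional additive $\eps$ error, using two applications of the Hoeffding bound (Lemma~\ref{lemma k-maximin sampling}): one to pass the true optimum value through to the sampled problem, and one to pass the achieved value back from the sampled problem to the true one.

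First, I would apply Lemma~\ref{lemma k-maximin sampling} with parameter $\eps/2$ in place of $\eps$. The bound $T\ge 4\eps^{-2}[n+\log n+\log\delta^{-1}]=(\eps/2)^{-2}[n+\log n +\log\delta^{-1}]$ is exactly the sample size required, so with probability at least $1-\delta$ we have $|\tilde\sigma_v(S)-\sigma_v(S)|\le \eps/2$ for every $v\in V$ and every $S\in 2^V$. As noted in the paragraph immediately preceding the lemma, this estimate transfers in a black-box fashion to the extensions: for every $p\in\PPP$, $x\in\XXX$ and $C\in\CCC$,
\[
    |\tilde\sigma_C(p)-\sigma_C(p)|\le \eps/2 \quad\text{and}\quad |\tilde\sigma_C(x)-\sigma_C(x)|\le \eps/2,
\]
because both sides are averages (over $S\sim p$ or $S\sim x$, and over $v\in C$) of quantities that already lie within $\eps/2$ of each other. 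The rest of the argument will be conditioned on this high-probability event.

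Next I would handle the set-based case; the node-based case is verbatim with $\XXX$ and $x$ in place of $\PPP$ and $p$. Let $p^*\in\argmax_{q\in\PPP}\min_{C\in\CCC}\sigma_C(q)$, so that $\min_C\sigma_C(p^*)=\opt_\PPP(G,\CCC,k)$. Using the uniform bound on the first factor,
\[
    \max_{q\in\PPP}\min_{C\in\CCC}\tilde\sigma_C(q)\ \ge\ \min_{C\in\CCC}\tilde\sigma_C(p^*)\ \ge\ \min_{C\in\CCC}\sigma_C(p^*)-\eps/2\ =\ \opt_\PPP(G,\CCC,k)-\eps/2.
\]
Since $p$ is by hypothesis an $(\alpha,\beta)$-approximation for $\max_{q\in\PPP}\min_{C\in\CCC}\tilde\sigma_C(q)$, it follows that
\[
    \min_{C\in\CCC}\tilde\sigma_C(p)\ \ge\ \alpha\cdot\bigl[\opt_\PPP(G,\CCC,k)-\eps/2\bigr]-\beta.
\]
Applying the uniform bound a second time to translate back to the true objective,
\[
    \min_{C\in\CCC}\sigma_C(p)\ \ge\ \min_{C\in\CCC}\tilde\sigma_C(p)-\eps/2\ \ge\ \alpha\cdot\opt_\PPP(G,\CCC,k)-\beta-\tfrac{\alpha+1}{2}\eps.
\]
Since $\alpha\in(0,1]$ we have $(\alpha+1)/2\le 1$, so the right-hand side is at least $\alpha\cdot\opt_\PPP(G,\CCC,k)-(\beta+\eps)$, which is exactly the $(\alpha,\beta+\eps)$-approximation guarantee. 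An identical chain of inequalities with $x,\XXX,\opt_\XXX$ instead of $p,\PPP,\opt_\PPP$ handles the node-based case.

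There is no real obstacle; the only care needed is the bookkeeping that turns two $\eps/2$ sampling errors into a single additive $\eps$ in the final bound, which is why the hypothesis prescribes exactly $T\ge 4\eps^{-2}[n+\log n+\log\delta^{-1}]$ rather than $\eps^{-2}[\ldots]$. The use of the $\alpha\le 1$ estimate at the last step avoids an $\alpha$-dependent constant inside the additive error, keeping the statement clean.
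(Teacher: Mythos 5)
Your proposal is correct and follows essentially the same route as the paper's proof: invoke Lemma~\ref{lemma k-maximin sampling} with accuracy $\eps/2$ (which is exactly what the factor $4$ in the sample bound provides), transfer the uniform bound to $\tilde\sigma_C(p)$ and $\tilde\sigma_C(x)$ by averaging, and then chain two $\eps/2$ errors through the $(\alpha,\beta)$-approximation hypothesis, using $\alpha\le 1$ to absorb them into a single additive $\eps$. The paper writes the same chain of inequalities in the reverse order, but the argument is identical.
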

\begin{proof}
   	For any $q\in\PPP$, define $m(q):=\min_{C\in\CCC} \sigma_C(q)$ and $\tilde m(q):=\min_{C\in\CCC} \tilde \sigma_C(q)$. Then, according to Lemma~\ref{lemma k-maximin sampling} and the comments preceding this lemma, with probability at least $1-\delta$, it holds that $\tilde m(q)\in[m(q)-\eps/2, m(q)+\eps/2]$. Let $p^*$ and $\tilde p^*$ be maximizing solutions for $m$ and $\tilde m$, respectively. Then
   	\begin{align*}
       	m(p)
       	\ge \tilde m(p) - \frac{\eps}{2}
       	\ge \alpha \cdot\tilde m(\tilde p^*) -\beta - \frac{\eps}{2}
       	\ge \alpha \cdot\tilde m(p^*) - \big(\beta + \frac{\eps}{2}\big)
       	\ge \alpha \cdot m(p^*) - (\beta + \eps).
   	\end{align*}
   	The proof for the node-based problem is completely analogous.
\end{proof}

\subsection{Probabilistically Choosing Nodes}
\label{subsec: prob nodes}
We start with the node-based problem. It entails to solve the optimization problem
$
    \opt_\XXX(G,\CCC, k) := \max_{x \in \XXX} \min_{C \in \CCC}\sigma_C(x),
$
where $\sigma_C(x)=\E_{S\sim x}[\sigma_C(S)]$ for $C\in\CCC$ and $x\in \XXX:=\{x\in [0,1]^n : \ones^Tx\le k\}$. Recall that $S\sim x$ denotes the random process of independently letting $i\in V$ be in $S$ with probability $x_i$.
We use Lemma~\ref{lemma: opt approx} in order to approximate $\sigma_C(\cdot)$ and thus, in what follows, we focus on finding an approximation algorithm for the problem
$
    \max_{x\in\XXX}\min_{C\in\CCC} \tilde\sigma_C(x)
$.
We note that Theorem II.5 from~\cite{ChekuriVZ10} in combination with a binary search on a threshold can be used in order to get  a $(1-1/e, 0)$-approximation for this problem.
In what follows we give a more direct derivation of such an approximation.

\paragraph{Node-based Problem via LP.}
Note that the optimization problem at hand is not linear as, for given $x$, the probability to sample $S\in 2^V$ is equal to $\prod_{i\in S}x_i\prod_{i\notin S}(1-x_i)$. We will now argue however that the problem can be constantly approximated by an LP.

For a live-edge graph $L$ and a node $v\in V$, what is the probability of sampling a set $S$ that can reach $v$ in $L$, i.e., what is $q_v(L, x):=\Pr_{S\sim x}[v\in \rho_L(S)]$?
It is the opposite event of not sampling any node that can reach $v$ in $L$, hence $q_v(L, x) = 1 - \prod_{i\in V: v\in \rho_L(i)} (1 - x_i)$
and this is approximated by the function
$
  p_v(L, x) := \min\{ 1, \sum_{i\in V: v\in \rho_{L}(i)} x_i\}
$ as shown in the following observation.
\begin{restatable}{obs-rstbl}{constapprox}
\label{obs:const approx}
  For any live-edge graph $L$, node $v\in V$, and $x\in \XXX$, it holds that
  $
  q_v(L, x) \in[(1-\frac{1}{e}) \cdot p_v(L, x), p_v(L, x)].
  $
  \begin{proof}
  	We start with the lower bound. For simplicity, we let $\{i\in V: v\in \rho_L(i)\} =: \{1,\ldots, r\} = R$. Using the geometric-arithmetic mean inequality, we get
  	\begin{align*}
  	q_v(L, x)
  	&=   1 - \prod_{i\in R} (1 - x_i)
  	\ge  1 - \Big(\frac{1}{r}\sum_{i\in R}(1-x_i)\Big)^r
  	=    1 - \Big(1 - \frac{1}{r}\sum_{i\in R}x_i\Big)^r\\
  	&\ge \Big(1 - \Big(1 - \frac{1}{r}\Big)^{r}\Big) \cdot \min\Big\{1, \sum_{i\in R}x_i\Big\}
  	\ge  \Big(1 - \frac{1}{e}\Big) \cdot p_v(L, x),
  	\end{align*}
  	where the second to last inequality uses that $f(x)= 1 - (1 - x/r)^r$ is concave on the interval $[0,1]$.

  	We prove the upper bound by induction on $r$. Clearly if $r=1$, by the definition of $\XXX$, we have that $p_v(L, x) = \min\{1, x_1\} = x_1 = q_v(L, x)$. Let us show the statement for $r$, assuming that it holds for $r - 1$.
  	If $p_v(L, x) = 1$, the statement is obvious as $q_v(L, x)\le 1$. If $p_v(L, x) < 1$, we get
  	\begin{align*}
  	q_v(L, x)
  	=    1 - \prod_{i=1}^{r - 1} (1 - x_i) + x_{r} \cdot \prod_{i=1}^{r - 1} (1 - x_i)
  	\le  \sum_{i=1}^{r - 1}x_i + x_{r} \cdot \prod_{i=1}^{r - 1} (1 - x_i)
  	\le \sum_{i=1}^{r}x_i = p_v(L, x),
  	\end{align*}
  	where the first inequality uses the induction hypothesis and $\min\{1, \sum_{i=1}^{r - 1} x_i\} = \sum_{i=1}^{r - 1} x_i$ as $p_v(L, x) < 1$, while the second inequality uses that $\prod_{i=1}^{r - 1} (1 - x_i)\le 1$.
  \end{proof}
\end{restatable}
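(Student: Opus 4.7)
The plan is to establish the two inequalities separately, after fixing $L$, $v$, and writing $R := \{i \in V : v \in \rho_L(i)\}$ with $r := |R|$, so that $q_v(L,x) = 1 - \prod_{i\in R}(1-x_i)$ and $p_v(L,x) = \min\{1,\sum_{i\in R} x_i\}$. Both trivial edge cases ($r=0$ or some $x_i=1$) can be dispatched up front, so I may assume $r\ge 1$ and $x_i \in [0,1)$ throughout.

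For the upper bound $q_v \leq p_v$, I would proceed by induction on $r$. The base case $r=1$ gives $q_v = x_1 \leq \min\{1,x_1\} = p_v$. For the step, I peel off one factor: $1-\prod_{i=1}^{r}(1-x_i) = \bigl(1 - \prod_{i=1}^{r-1}(1-x_i)\bigr) + x_r\prod_{i=1}^{r-1}(1-x_i)$; the first summand is at most $\sum_{i=1}^{r-1} x_i$ by induction (noting $q_v \leq 1$ is automatic so one can assume $p_v<1$) and the second is at most $x_r$ since $\prod(1-x_i)\le 1$. Adding gives $q_v \leq \sum_{i=1}^{r}x_i$, which combined with $q_v\le 1$ yields $q_v \leq p_v$. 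This is pure bookkeeping and I do not expect difficulty here.

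For the lower bound, the natural move is AM-GM on $\prod_{i\in R}(1-x_i)$: setting $s := \sum_{i\in R} x_i$ it gives $\prod_{i\in R}(1-x_i) \leq \bigl(1-\tfrac{s}{r}\bigr)^r$, so $q_v \geq g(s)$ where $g(t) := 1 - (1-t/r)^r$ on $[0,r]$. Since $g$ is concave and $g(0)=0$, for $s\in[0,1]$ we get $g(s) \geq s\cdot g(1) = s\cdot\bigl(1-(1-1/r)^r\bigr)$, while for $s\in[1,r]$ monotonicity gives $g(s)\geq g(1)$. In both regimes $g(s) \geq \bigl(1-(1-1/r)^r\bigr)\cdot\min\{1,s\}$, and the standard bound $(1-1/r)^r \leq 1/e$ finishes the job: $q_v \geq (1-1/e)\cdot p_v$.

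The main subtlety, and where I would be most careful, is the transition between the two regimes $s\le 1$ and $s>1$ introduced by the $\min\{1,\cdot\}$ in the definition of $p_v$; handling both cases uniformly via the chord inequality $g(s)\geq s\cdot g(1)$ on $[0,1]$ and monotonicity on $[1,r]$ is the cleanest route I see. Everything else (AM-GM, concavity of $g$, the $1/e$ limit) is standard, and the overall bound $1-1/e$ matches the classical correlation-gap/continuous-greedy constant that one expects in this setting.
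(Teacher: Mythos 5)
Your proposal is correct and follows essentially the same route as the paper: the same induction with the peel-off-one-factor decomposition for the upper bound, and AM--GM followed by the chord inequality for the concave function $1-(1-t/r)^r$ plus the $(1-1/r)^r\le 1/e$ bound for the lower bound. Your explicit split into the regimes $s\le 1$ (chord inequality) and $s>1$ (monotonicity) just makes precise a case distinction the paper leaves implicit in its appeal to concavity on $[0,1]$.
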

We now define $\lambda_v(x):=\frac{1}{T}\sum_{t=1}^T p_v(L_t, x)$ and analogously $\lambda_C(x):=\frac{1}{|C|}\sum_{v\in C} \lambda_v(x)$. As $p_v(L, x)$ provides an approximation for $q_v(L, x)$, we can show that the node-based maximin problem can be approximated by a solution to a maximin problem involving the functions $\lambda_C(x)$.

\begin{restatable}{lemma-rstbl}{lpapproxtau}
\label{lemma: LP approx tau}
  Let $x\in \XXX$ be an optimal solution to
  $
      \max_{x\in \XXX}\min_{C \in \CCC} \lambda_C(x)
  $,
  then $x$ is a $(1-1/e, 0)$-approximation to
  $
    \max_{x\in \XXX} \min_{C\in \CCC} \tilde \sigma_C(x)
  $.
\end{restatable}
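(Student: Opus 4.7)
The plan is to lift the pointwise inequalities of Observation~\ref{obs:const approx} from single live-edge graphs and single nodes to the averaged quantities $\tilde\sigma_C$ and $\lambda_C$, and then chain the resulting sandwich bounds with the optimality of $x$ against a comparison solution $x^*$ that maximizes $\min_{C\in\CCC}\tilde\sigma_C(\cdot)$.

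First, I would observe that by linearity of expectation, for each $v\in V$ and each live-edge graph $L_t$,
\[
    \E_{S\sim x}[\ones_{v\in \rho_{L_t}(S)}] = \Pr_{S\sim x}[v\in \rho_{L_t}(S)] = q_v(L_t, x),
\]
so that $\tilde\sigma_v(x) = \E_{S\sim x}[\tilde\sigma_v(S)] = \tfrac{1}{T}\sum_{t=1}^T q_v(L_t, x)$, which has exactly the same form as $\lambda_v(x) = \tfrac{1}{T}\sum_{t=1}^T p_v(L_t, x)$. Applying Observation~\ref{obs:const approx} to every $(v, L_t)$ pair and averaging, I obtain $\tilde\sigma_v(x) \in [(1-1/e)\,\lambda_v(x),\; \lambda_v(x)]$ for all $v\in V$ and $x\in\XXX$. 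Averaging once more over $v\in C$ yields the corresponding sandwich $\tilde\sigma_C(x) \in [(1-1/e)\,\lambda_C(x),\; \lambda_C(x)]$ for all $C\in\CCC$.

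With these inequalities in hand, the rest is a two-line chain. Let $x^*\in\XXX$ be an arbitrary maximizer of $\min_{C\in\CCC}\tilde\sigma_C(\cdot)$ and let $x$ be the maximizer of $\min_{C\in\CCC}\lambda_C(\cdot)$ assumed in the statement. Then
\[
    \min_{C\in\CCC}\tilde\sigma_C(x)
    \;\ge\; \big(1-\tfrac{1}{e}\big)\min_{C\in\CCC}\lambda_C(x)
    \;\ge\; \big(1-\tfrac{1}{e}\big)\min_{C\in\CCC}\lambda_C(x^*)
    \;\ge\; \big(1-\tfrac{1}{e}\big)\min_{C\in\CCC}\tilde\sigma_C(x^*),
\]
where the first inequality is the lower half of the sandwich applied at $x$, the second uses the optimality of $x$ for the $\lambda$-problem, and the third uses the upper half of the sandwich applied at $x^*$. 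This is precisely the $(1-1/e, 0)$-approximation guarantee claimed.

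There is no real obstacle here: the only subtlety worth double-checking is that the two problems share the same feasible set $\XXX$ (so that $x^*$ is admissible for the $\lambda$-problem and $x$ is admissible for the $\tilde\sigma$-problem) and that the pointwise inequalities of Observation~\ref{obs:const approx} are preserved under nonnegative convex combinations, both of which are immediate. The reason this step is useful in the broader argument is that $\lambda_C(x)$ is a minimum of linear functions in $x$, so $\max_{x\in\XXX}\min_{C\in\CCC}\lambda_C(x)$ can be solved exactly by a polynomial-size linear program, which is presumably what the following section will exploit.
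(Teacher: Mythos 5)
Your proof is correct and follows essentially the same route as the paper's: the paper's (very terse) proof likewise identifies $\tilde\sigma_v(x)=\tfrac{1}{T}\sum_{t=1}^T q_v(L_t,x)$ and invokes Observation~\ref{obs:const approx} together with the definition of $\lambda_C$; you have simply written out the sandwich bounds and the resulting three-step chain of inequalities explicitly, which the paper leaves implicit.
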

\begin{proof}
  	Note that $\tilde \sigma_C(x)=\E_{S\sim x}[\tilde \sigma_C(S)]=\frac{1}{|C|}\sum_{v\in C} \tilde \sigma_v(x)$ for any $x\in\XXX$ and that furthermore $\tilde \sigma_v(x)=\frac{1}{T}\sum_{t=1}^T q_v(L_t, x)$. Hence the definition of $\lambda_C(x)$ and Observation~\ref{obs:const approx} yield the result.
\end{proof}
Together with Lemma~\ref{lemma: opt approx} and the fact that
$
    \max_{x\in \XXX}\min_{C \in \CCC} \lambda_C(x)
$
can be written as a linear program by introducing a threshold variable for the minimum, we get the following result.
\begin{restatable}{theorem-rstbl}{algorithmnodes}
    Let $\delta\in (0,1/2)$ and $\eps\in(0,1)$. There is a polynomial time algorithm that, with probability at least $1-\delta$, computes $x\in\XXX$ such that $\min_{C\in \CCC} \sigma_C(x) \ge (1-\frac{1}{e}) \opt_\XXX(G, \CCC, k) - \eps$.
\end{restatable}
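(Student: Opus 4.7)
The plan is to assemble the theorem directly from the two preceding lemmas, so the only real work is to verify that the inner optimization problem $\max_{x\in\XXX}\min_{C\in\CCC}\lambda_C(x)$ can actually be written as a polynomial-size LP and solved in polynomial time. First, I would set the sampling parameter as in Lemma~\ref{lemma: opt approx}, i.e., sample $T=\lceil 4\eps^{-2}[n+\log n+\log\delta^{-1}]\rceil$ live-edge graphs $L_1,\ldots,L_T$ according to the triggering model. For each $t$ and each node $v\in V$, I would compute the in-reachable set $R_{v,t}:=\{i\in V : v\in\rho_{L_t}(i)\}$ by reverse BFS/DFS in $G_{L_t}$; this is polynomial.

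Next, I would set up the following linear program with variables $x\in\RR^V$, auxiliary variables $y_{v,t}\in\RR$ for $v\in V$, $t\in[T]$, and a threshold variable $\tau\in\RR$:
\begin{align*}
\max\quad & \tau \\
\text{s.t.}\quad & \tau \le \tfrac{1}{|C|T}\sum_{v\in C}\sum_{t=1}^T y_{v,t} && \forall C\in\CCC,\\
& y_{v,t}\le 1,\quad y_{v,t}\le \sum_{i\in R_{v,t}} x_i && \forall v\in V,\ t\in[T],\\
& \ones^T x \le k,\quad 0\le x_i\le 1 && \forall i\in V.
\end{align*}
At optimality, each $y_{v,t}$ equals $\min\{1,\sum_{i\in R_{v,t}}x_i\}=p_v(L_t,x)$, so the objective value is exactly $\min_{C\in\CCC}\lambda_C(x)$. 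The LP has $O(n+nT)$ variables and $O(nT+|\CCC|+n)$ constraints, both polynomial in the input size and $1/\eps$, and can hence be solved in polynomial time by, e.g., the ellipsoid method or an interior-point method.

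Let $x^\star$ denote the $x$-component of an optimal LP solution. By construction $x^\star\in\XXX$ and $x^\star$ is optimal for $\max_{x\in\XXX}\min_{C\in\CCC}\lambda_C(x)$. Lemma~\ref{lemma: LP approx tau} then implies that $x^\star$ is a $(1-1/e,0)$-approximation for $\max_{x\in\XXX}\min_{C\in\CCC}\tilde\sigma_C(x)$. Finally, invoking Lemma~\ref{lemma: opt approx} with $\alpha=1-1/e$ and $\beta=0$ (which requires exactly our choice of $T$), we conclude that with probability at least $1-\delta$, $x^\star$ is a $(1-1/e,\eps)$-approximation for $\opt_\XXX(G,\CCC,k)$, i.e., $\min_{C\in\CCC}\sigma_C(x^\star)\ge (1-1/e)\cdot\opt_\XXX(G,\CCC,k)-\eps$.

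There is no real obstacle here: all the substance lives in the earlier Observation~\ref{obs:const approx} (which gives the multiplicative $1-1/e$ gap between $p_v$ and $q_v$) and in Hoeffding's bound. The only point that needs a moment of care is the linearization via the $y_{v,t}$ variables and the observation that $T$ being polynomial in $n$, $\log|\CCC|$, $1/\eps$ and $\log(1/\delta)$ makes the whole LP polynomial-sized; both are routine.
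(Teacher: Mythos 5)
Your proposal is correct and follows essentially the same route as the paper: the paper's proof writes down the very same polynomial-size LP (with threshold variable $\tau$ and auxiliary variables $y_{vt}\le\sum_{i:v\in\rho_{L_t}(i)}x_i$, $y_{vt}\in[0,1]$) and then combines Lemma~\ref{lemma: LP approx tau} with Lemma~\ref{lemma: opt approx} exactly as you do. Your additional remarks on computing the reachable sets and on the LP's size are routine elaborations of what the paper leaves implicit.
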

\begin{proof}
	It remains to observe that an optimal solution to $\max_{x\in \XXX}\min_{C \in \CCC} \lambda_C(x)$ can be obtained by solving the following linear program of polynomial size:
	\begin{align*}
	\max \Big\{\tau:
	&\sum_{i=1}^{n} x_i \le k,\;
	y_{v t} \leq \sum_{i:v \in \rho_{L_t}(i)} x_i  \;\forall v \in V, t \in [T], \\
	&\sum_{t\in[T]} \sum_{v\in C}  y_{t v} \geq T|C|\cdot \tau
	\;\forall C \in\CCC, \\
	&x\in [0,1]^n, y_{v t} \in [0,1]\;\forall v \in V, t \in [T]
	\Big\}.
	\end{align*}
	By combining Lemma~\ref{lemma: LP approx tau} and Lemma~\ref{lemma: opt approx}, we get that, with probability at least $1-\delta$, the optimal solution to the above LP is a $(1-1/e,\eps)$-approximation to $\opt_\XXX(G, \CCC, k)$.
\end{proof}

\subsection{Probabilistically Choosing Sets}
\label{subsec: prob sets}

Recall the set-based probabilistic maximin problem
$
    \opt_\PPP(G,\CCC, k) := \max_{p \in \PPP} \min_{C \in \CCC}\sigma_C(p),
$
where $\sigma_C(p)=\E_{S\sim p}[\sigma_C(S)]$ for $C\in\CCC$ and $p\in \PPP:=\{p\in [0,1]^{2^V} :\ones^Tp =1, \sum_{S\subseteq V} p_S |S| \le k\}$.
In the light of Lemma~\ref{lemma: opt approx}, we focus on
finding approximate solutions to $\max_{p \in \PPP} \min_{C \in \CCC}\tilde \sigma_C(p)$.

Allowing for distributions over sets rather than sets turns the optimization problem at hand, $\max_{p \in \PPP} \min_{C \in \CCC}\tilde \sigma_C(p)$, into a problem that can be written as a linear program. While the original problem, i.e., the problem of choosing a set maximizing the approximate minimum probability, can be written as an integer linear program using a variable to model a threshold to be maximized. Hence, from an algorithmic point of view, one may think that this makes the problem polynomial time solvable. The caveat is of course that the dimension of $\PPP$ is large, namely $\Theta(2^n)$, which turns the dimension of the corresponding linear program super-polynomial.
In this section, we show that, nevertheless, the problem can be approximated to within a constant factor using a specific kind of linear programming algorithm. The essential observation is that the linear program at hand actually is a covering linear program. We will use a result due to Young~\cite{Young95} that shows that such linear programs can be solved efficiently independent of their dimension under the condition that a certain oracle problem can be solved efficiently. We proceed by introducing the result of Young.

\paragraph{Young's Algorithm.}
Young~\cite{Young95} gives algorithms for solving packing and covering linear programs. A covering problem in the sense of Young is of the following form:
Let $P\subseteq \RR^\nu$ be a convex set and let $f:P\rightarrow \RR^\mu$ be a $\mu$-dimensional linear function over $P$. Assume that $0\le f_j(x) \le \omega$ for all $j\in[\mu]$ and $x\in P$, where $\omega$ is the width of $P$ w.r.t.\ $f$. The \emph{covering problem} consists of computing
\(
  \lambda^* := \max_{x\in P}\min_{j\in [\mu]} f_j(x),
\)
when $f_j(x)\ge 0$ for all $x\in P$.
\begin{theorem}[\cite{Young95}]\label{thm:young}
  Let $\eta\in (0, 1)$ and assume that there is an oracle that, given a non-negative vector $z\in\RR^\mu$ returns $x\in P$ and $f(x)$ satisfying
  \(
    \sum_{j\in [m]} z_j f_j(x)
    \ge \alpha\cdot \max_{x\in P}\{ \sum_{j\in [m]} z_j f_j(x)\}
  \)
  for some constant $\alpha\le 1$,
  then there is an algorithm that computes $x\in P$ with $\min_{j\in [\mu]} f_j(x)\ge \alpha(1-\eta)\cdot \lambda^*$ in $O(\omega \eta^{-2} \log \mu / \lambda^*)$ iterations in each of which it does $O(\mu)$ work and calls the oracle once. The output $x$ is the arithmetic mean of the vectors returned by the oracle.
\end{theorem}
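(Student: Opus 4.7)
The plan is to establish this via the multiplicative weights update (MWU) framework, which is exactly the setting Young's paper works in. I would maintain a vector of nonnegative weights $z^{(t)}\in\RR^\mu_{\ge 0}$, initialized to $z^{(0)}=\ones$, and at each iteration invoke the oracle with the current weights $z^{(t)}$ to obtain some $x^{(t)}\in P$. I then update multiplicatively, penalizing the constraints $j$ for which $f_j(x^{(t)})$ is already large: concretely,
\[
z_j^{(t+1)} \;=\; z_j^{(t)}\cdot\exp\bigl(-\eta\,f_j(x^{(t)})/\omega\bigr).
\]
After $T$ iterations I would output $\bar x := (1/T)\sum_{t=1}^T x^{(t)}$, which lies in $P$ by convexity.

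The analysis follows the classical two-sided bound on the potential $\Phi_t := \sum_{j\in[\mu]} z_j^{(t)}$. For the \emph{upper bound}, using $e^{-y}\le 1-y+y^2/2$ for $y\in[0,\eta]$ together with $f_j(x^{(t)})/\omega\in[0,1]$, I would show $\Phi_{t+1}\le \Phi_t\bigl(1-\tfrac{\eta}{\omega}\cdot \tfrac{\sum_j z_j^{(t)}f_j(x^{(t)})}{\Phi_t} + O(\eta^2)\bigr)$. Invoking the oracle guarantee with vector $z^{(t)}$ and observing that $\max_{x\in P}\sum_j z_j^{(t)} f_j(x) \ge \|z^{(t)}\|_1\cdot\lambda^* = \Phi_t\cdot\lambda^*$ (take the maximin optimizer of $\lambda^*$ as a feasible witness), I get $\sum_j z_j^{(t)}f_j(x^{(t)}) \ge \alpha\lambda^*\Phi_t$. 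Hence $\Phi_{t+1}\le \Phi_t\exp\bigl(-\eta\alpha\lambda^*/\omega + O(\eta^2)\bigr)$ and, inductively, $\Phi_T\le \mu\exp\bigl(-T\eta\alpha\lambda^*/\omega+O(T\eta^2)\bigr)$. For the \emph{lower bound}, for every $j\in[\mu]$ I have $\Phi_T\ge z_j^{(T)} = \exp\bigl(-\tfrac{\eta}{\omega}\sum_{t=1}^T f_j(x^{(t)})\bigr) = \exp\bigl(-\tfrac{\eta T}{\omega} f_j(\bar x)\bigr)$ by linearity of $f_j$.

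Combining the two bounds and taking logarithms gives
\[
f_j(\bar x) \;\ge\; \alpha\lambda^* \;-\; \frac{\omega\log\mu}{\eta T} \;-\; O(\omega\eta)
\]
for every $j$. Choosing $T = \Theta(\omega\eta^{-2}\log\mu/\lambda^*)$ and reparameterizing the slack $\eta$ appropriately makes both error terms smaller than $\eta\alpha\lambda^*$, yielding $\min_{j}f_j(\bar x)\ge \alpha(1-\eta)\lambda^*$ as required. Each iteration costs one oracle call plus $O(\mu)$ work for the weight update and normalization, matching the quoted bound.

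The main obstacle, and the reason the statement requires the width parameter $\omega$, is controlling the quadratic $O(\eta^2)$ error in the $e^{-y}\le 1-y+y^2/2$ step: this is where the bound $f_j(x)\le\omega$ is essential, because the per-round decrease in potential is only proportional to $\eta\lambda^*/\omega$, so the round complexity has to scale with $\omega/\lambda^*$. A subtle point I would have to be careful about is that the lower bound uses $\sum_t f_j(x^{(t)}) = T\,f_j(\bar x)$, which relies on $f$ being linear on $P$; if $P$ were only convex and $f$ merely concave, $\bar x$ would still lie in $P$ but the inequality would require concavity of each $f_j$, which is exactly the setting of the theorem.
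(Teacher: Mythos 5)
The paper does not prove this statement at all: Theorem~\ref{thm:young} is imported verbatim from Young~\cite{Young95} and used as a black box, so there is no internal proof to compare yours against. Your multiplicative-weights argument is the standard (and correct) way to establish it, and it matches the spirit of Young's own Lagrangian/packing-covering analysis. One piece of bookkeeping to tighten: writing the second-order error as an additive $-O(\omega\eta)$ is too lossy, since making $\omega\eta\le\eta\alpha\lambda^*$ would require $\omega=O(\lambda^*)$, which fails in general. The fix stays entirely within your argument: bounding $f_j(x^{(t)})^2\le \omega f_j(x^{(t)})$ turns the quadratic term into a multiplicative $(1-\eta/2)$ factor on the oracle's gain $G_t:=\sum_j z_j^{(t)}f_j(x^{(t)})/\Phi_t\ge\alpha\lambda^*$, giving $f_j(\bar x)\ge(1-\eta/2)\alpha\lambda^*-\omega\log\mu/(\eta T)$, and the choice $T=\Theta(\omega\eta^{-2}\log\mu/(\alpha\lambda^*))$ then yields $\min_j f_j(\bar x)\ge\alpha(1-\eta)\lambda^*$ exactly as claimed, with $O(\mu)$ work plus one oracle call per iteration.
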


\paragraph{Set-Based Problem via Young's Algorithm.}
Clearly $\tilde \sigma_C$ is a linear function in $p$, namely $\tilde\sigma_C(p)=\sum_{S \subseteq V} p_S \tilde \sigma_C(S)$ and thus the problem $\max_{p \in \PPP} \min_{C \in \CCC}\tilde \sigma_C(p)$ takes exactly the form of a covering problem in the sense of Young with $\nu=2^n$, $\mu=m=|\CCC|$, $P=\PPP$, and $\omega=1$. Hence, we can compute a $(\alpha, 0)$-approximation for
$
    \max_{p\in\PPP}\min_{C\in\CCC} \tilde\sigma_C(p)
$,
if we provide an oracle with multiplicative approximation $\alpha$.
Hence, let us take a closer look at the requirements of Theorem~\ref{thm:young} in terms of the oracle problem. Given a non-negative vector $z\in\RR^m$, the oracle is required to return $p\in \PPP$ and $\tilde\sigma_C(p)$ for $C\in\CCC$ such that $\sum_{C\in\CCC} z_C \tilde \sigma_C(p) \ge \alpha \cdot \max_{p\in \PPP}\{\sum_{C\in\CCC} z_C \tilde \sigma_C(p)\}$ for some $\alpha\le 1$. Note that, by linearity of expectation
\[
    \sum_{C\in\CCC} z_C \tilde\sigma_C(p)
    = \E_{S\sim p}\Big[\sum_{C\in \CCC} z_C\cdot \frac{1}{|C|}\sum_{v\in C}\tilde\sigma_v(S)\Big]
    = \E_{S\sim p}\Big[\sum_{v\in V} \nw_v\cdot \tilde\sigma_v(S)\Big],
\]
where $\nw_v:= \sum_{C\in \CCC: v\in C}z_C/|C|$. Hence the oracle problem that the Young's algorithm has to solve takes the form
\begin{align}\label{formula: oracle}
    \opt_{\OOO}(G, \CCC, k, \nw) := \max_{p\in \PPP} \tilde\sigma^\nw(p).
\end{align}
We obtain the following lemma that shows that there is always an optimal solution of linear support.
\begin{lemma}
    It holds that $\opt_{\OOO}(G, \CCC, k, \nw) = \opt_{\QQQ}(G, \CCC, k, \nw)$ with
    \begin{align}\label{formula: linear}
        \opt_{\QQQ}(G, \CCC, k, \nw)
        := \max_{q\in \QQQ} \sum_{i=1}^n q_i \tilde\sigma^\nw(S^*_i),
    \end{align}
    where $\QQQ:=\{q\in [0,1]^n: \ones^Tq = 1, \sum_{i=1}^n i\cdot q_i \le k\}$
    and $S^*_i\in \argmax\{\sigma^\nw(S):S\in \binom{V}{i}\}$ for $i\in[n]$.
\end{lemma}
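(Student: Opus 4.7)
The plan is to prove the equality by establishing both inequalities directly via a mass-reassignment argument.

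First, I would show $\opt_{\QQQ}(G,\CCC,k,\nw) \le \opt_{\OOO}(G,\CCC,k,\nw)$. Given any $q \in \QQQ$, I define $p \in [0,1]^{2^V}$ by setting $p_{S^*_i} := q_i$ for each $i \in [n]$ and $p_S := 0$ for all other $S$. Then $\ones^T p = \sum_i q_i = 1$ and $\sum_{S\subseteq V} p_S |S| = \sum_i q_i \cdot |S^*_i| = \sum_i i \cdot q_i \le k$, so $p \in \PPP$. Furthermore, $\tilde\sigma^\nw(p) = \sum_i q_i \tilde\sigma^\nw(S^*_i)$, which gives the desired inequality by taking $q$ to be an optimal solution of~\eqref{formula: linear}.

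Second, I would show $\opt_{\OOO}(G,\CCC,k,\nw) \le \opt_{\QQQ}(G,\CCC,k,\nw)$. Given any $p \in \PPP$, define $q_i := \sum_{S \subseteq V : |S| = i} p_S$ for $i \in [n]$, together with the mass on the empty set $q_0 := p_\emptyset$ that contributes nothing (alternatively one can absorb $q_0$ into $q_1$ since $\tilde\sigma^\nw(\emptyset) = 0 \le \tilde\sigma^\nw(S^*_1)$). Then $\sum_i q_i \le 1$ and $\sum_i i \cdot q_i = \sum_{S \subseteq V} |S| \cdot p_S \le k$, so after at most a harmless renormalization, $q \in \QQQ$. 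By definition of $S^*_i$ as a maximizer of $\tilde\sigma^\nw$ over sets of size $i$, we have $\tilde\sigma^\nw(S) \le \tilde\sigma^\nw(S^*_{|S|})$ for every $S$, and hence
\[
    \tilde\sigma^\nw(p) = \sum_{S\subseteq V} p_S \tilde\sigma^\nw(S) \le \sum_{i=1}^n \Big(\sum_{S:|S|=i} p_S\Big) \tilde\sigma^\nw(S^*_i) = \sum_{i=1}^n q_i \tilde\sigma^\nw(S^*_i),
\]
which is at most $\opt_{\QQQ}(G,\CCC,k,\nw)$. Taking $p$ to be an optimal solution of~\eqref{formula: oracle} yields the inequality.

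There is no real obstacle here; the argument is essentially a bookkeeping one. The only mild subtlety is handling mass assigned to the empty set (or to non-maximizing sets of each given size), but both of these are resolved by the monotonicity $\tilde\sigma^\nw(S) \le \tilde\sigma^\nw(S^*_{|S|})$ and the observation that a feasible $p$ remains feasible (and the objective can only improve) after moving all mass of size $i$ to $S^*_i$.
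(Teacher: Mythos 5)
Your proof is correct and follows essentially the same route as the paper's: both directions are handled by transferring all probability mass of each cardinality $i$ onto the per-cardinality maximizer $S^*_i$ (the paper does this via an iterative swap $p' = p - p_S\ones_S + p_S\ones_{S^*_i}$, you do it in a single aggregation step $q_i = \sum_{|S|=i} p_S$, which is the same argument). Your explicit handling of the mass on the empty set is, if anything, slightly more careful than the paper's proof, which passes over $p_\emptyset$ silently.
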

\begin{proof}
    Let $q^*$ be an optimal solution to~\eqref{formula: linear}. Define $p\in\PPP$ as $p_S = q^*_i$ if $S=S^*_i$ and 0 otherwise. Then, $p$ is a feasible solution to~\eqref{formula: oracle} and thus $\opt_{\OOO}(G, \CCC, k, \nw)\ge \opt_{\QQQ}(G, \CCC, k, \nw)$. It remains to prove that $\opt_{\OOO}(G, \CCC, k, \nw)\le \opt_{\QQQ}(G, \CCC, k, \nw)$. For this sake let $p^*\in\PPP$ be an optimal solution to~\eqref{formula: oracle}. Assume $p_S>0$ for some set $S$ of cardinality $i$, but $S\neq S^*_i$. Consider the solution $p':= p - p_S \ones_S + p_S \ones_{S^*_i}$. Clearly, $\tilde \sigma^\nw(p) \le \tilde \sigma^\nw(p')$ as $S^*_i$ by definition is a maximizing set of size $i$. This modification can be repeated until we obtain a solution $\bar p$ with $\bar p_S=0$ for all sets $S$ but the sets $S^*_1,\ldots S^*_n$. Clearly $\bar p$ is an optimal solution to~\eqref{formula: oracle} as each modification did not decrease the value of $\tilde \sigma^\nw(\cdot)$. Now consider the vector $q$ defined by $q_i=\bar p_{S^*_i}$ for $i\in[n]$. Then, $q$ is a feasible solution to~\eqref{formula: linear} and thus $\opt_{\OOO}(G, \CCC, k, \nw)\le \opt_{\QQQ}(G, \CCC, k, \nw)$.
\end{proof}
In other words, among the vectors that attain the optimum $\opt_{\OOO}(G, \CCC, k, \nw)$, there is also one that assigns a positive value to at most $n$ sets. Namely, to one set $S^*_i\in \binom{V}{i}$ for each $i\in [n]$. We now observe that $\tilde\sigma^\nw(\cdot)$ is a submodular and monotone set function. Hence, for each $i\in [n]$, the greedy hill climbing algorithm computes $(1-1/e, 0)$ approximations to $\quad\max\{\tilde\sigma^\nw(S):S\subseteq V, |S| \le i\}$.
Let $S_1,\ldots, S_n$ denote these approximate solutions. Note that we can assume $S_1\subseteq S_2 \subseteq \ldots \subseteq S_n$ as we can compute all the sets via one run of the greedy algorithm with budget $n$. Now consider the optimization problem
\begin{align}\label{formula: tilde linear}
    \opt_{\QQQ}^{\gr}(G, \CCC, k, \nw)
    := \max_{q\in \QQQ} \sum_{i=1}^n q_i \tilde\sigma^\nw(S_i)
\end{align}
that is identical to~\eqref{formula: linear} up to the replacement of $S^*_i$ by $S_i$.
We obtain the following lemma.
\begin{lemma}
    The vector $\ones_k$ is an optimal solution to the problem in~\eqref{formula: tilde linear}. Consequently, $\tilde\sigma^\nw(S_k)= \opt_{\QQQ}^{\gr}(G, \CCC, k, \nw)\ge (1 - 1 / e) \cdot \opt_{\OOO}(G, \CCC, k, \nw)$.
\end{lemma}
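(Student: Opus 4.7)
The plan rests on two structural properties of the sequence $f(i) := \tilde\sigma^\nw(S_i)$ for $i=1,\ldots,n$. First, $\tilde\sigma^\nw$ is a non-negative weighted sum of coverage-type indicators $\ones_{v\in\rho_{L_t}(S)}$ and is therefore monotone and submodular. Second, and this is the crucial fact, because the $S_i$ are produced by a single run of greedy hill climbing with budget $n$, the marginal gains $g_i := f(i)-f(i-1)$ (with $f(0):=0$) are non-increasing in $i$: at iteration $i$ greedy picks the element maximizing the marginal gain over $S_{i-1}$, and by submodularity the best marginal gain over the larger set $S_i$ at iteration $i+1$ can only be smaller. Hence $g_1 \ge g_2 \ge \cdots \ge g_n \ge 0$.

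With this in hand, I would establish the optimality of $\ones_k$ by a discrete tangent-line inequality at $k$. Non-increasing gains give, for every $i \in [n]$,
\[
    f(i) \le f(k) + (i-k)\, g_{k+1},
\]
since for $i \ge k$ each added term satisfies $g_j \le g_{k+1}$, and for $i \le k$ each subtracted term satisfies $g_j \ge g_{k+1}$ (when $k=n$ the bound reduces to monotonicity). Multiplying by $q_i \ge 0$, summing, and using the two defining constraints of $\QQQ$, namely $\sum_i q_i = 1$ and $\sum_i i\, q_i \le k$, together with $g_{k+1}\ge 0$, yields
\[
    \sum_{i=1}^n q_i f(i) \le f(k) + g_{k+1}\Bigl(\sum_{i=1}^n i\, q_i - k\Bigr) \le f(k),
\]
with equality at $q=\ones_k$ (which is clearly feasible). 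Therefore $\opt_{\QQQ}^{\gr}(G,\CCC,k,\nw) = f(k) = \tilde\sigma^\nw(S_k)$.

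For the second claim I would combine this identity with the previous lemma and the standard greedy guarantee, which for each $i$ gives $\tilde\sigma^\nw(S_i) \ge (1-1/e)\,\tilde\sigma^\nw(S^*_i)$. Multiplying by $q_i \ge 0$, summing over $i$, and taking the maximum over $q \in \QQQ$ yields $\opt_{\QQQ}^{\gr}(G,\CCC,k,\nw) \ge (1-1/e)\,\opt_{\QQQ}(G,\CCC,k,\nw)$, and the previous lemma turns the right-hand side into $(1-1/e)\,\opt_{\OOO}(G,\CCC,k,\nw)$. The only nontrivial step is verifying the diminishing-returns property of the greedy values $f(i)$; once that is recorded, the Jensen-type inequality above and the combination with the preceding lemma are essentially one line each.
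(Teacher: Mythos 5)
Your proof is correct and follows essentially the same route as the paper: both arguments hinge on the fact that the greedy increments $g_i=\tilde\sigma^\nw(S_i)-\tilde\sigma^\nw(S_{i-1})$ are non-increasing by submodularity, combined with the constraints $\sum_i q_i=1$ and $\sum_i i\,q_i\le k$. The only cosmetic difference is in the last step: the paper rewrites $\sum_i q_i \tilde\sigma^\nw(S_i)=\sum_i \alpha_i g_i$ with $\alpha_i=\sum_{j\ge i}q_j$ and invokes a fractional-knapsack observation, whereas you use the tangent-line bound $f(i)\le f(k)+(i-k)g_{k+1}$; both are valid one-line executions of the same concavity argument, and your explicit derivation of the final $(1-1/e)$ chain matches what the paper leaves implicit.
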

\begin{proof}
    Let $q\in\QQQ$ be arbitrary. For $i\in[n]$, define $\alpha_i:=\sum_{j=i}^n q_j$ and $\Delta_i:=\sigma^\nw(S_i) - \sigma^\nw(S_{i - 1})$ with $S_0=\emptyset$. Recall that $\QQQ:=\{q\in [0,1]^n: \ones^Tq = 1, \sum_{i=1}^n i\cdot q_i \le k\}$ and notice that the last constraint implies $\sum_{i=1}^n\alpha_i = \sum_{i=1}^n i \cdot q_i \le k$. Now, consider the optimization problem $\max_{\beta\in [0,1]^n}\{\sum_{i=1}^n \beta_i \Delta_i : \sum_{i=1}^n \beta_i \le k\}$. Note that, by submodularity, $\Delta_1\ge \Delta_2\ge \ldots \ge \Delta_n$ and thus the optimum is $\sum_{i=1}^k \Delta_i$. This implies that
    \[
        \sum_{i=1}^n q_i \tilde\sigma^\nw(S_i)
        = \sum_{i=1}^n \alpha_i \Delta_i
        \le \tilde\sigma^\nw(S_k).\qedhere
    \]
\end{proof}
We showed that the set $S_k$ obtained by greedily maximizing $\tilde \sigma^\nw(\cdot)$ subject to a budget of $k$ yields a $(1-1/e, 0)$-approximation to the oracle problem. Hence we get the following theorem.

\begin{restatable}{theorem-rstbl}{algorithmsets}
    Let $\delta\in (0,\frac{1}{2})$ and $\eps\in(0,1)$. There is a polynomial time algorithm that, with probability at least $1-\delta$, computes $p\in\PPP$ s.t.\ $\min_{C\in \CCC} \sigma_C(p) \ge (1-\frac{1}{e}) \opt_\PPP(G, \CCC, k) - \eps$. Moreover, the support of $p$ is $O(\eps^{-2}n\log m/k)$.
\end{restatable}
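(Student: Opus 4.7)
The plan is to run Young's multiplicative-weights algorithm on the covering LP obtained from the sampled objective $\tilde\sigma_C$, and then transfer the guarantee back to the original objective using Lemma~\ref{lemma: opt approx}. Concretely, first draw $T=\Theta(\eps^{-2}(n+\log n+\log\delta^{-1}))$ live-edge graphs so that the approximate coverages $\tilde\sigma_v$ (and hence $\tilde\sigma_C$) are within additive $\eps/2$ of the true coverages uniformly with probability at least $1-\delta$; this reduces the task to computing a $(1-1/e,\eps/2)$-approximation to $\max_{p\in\PPP}\min_{C\in\CCC}\tilde\sigma_C(p)$.

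This latter problem fits exactly the covering-LP template of Theorem~\ref{thm:young}: the feasible set $P=\PPP$ is convex, the coordinate functions $f_C(p)=\tilde\sigma_C(p)$ are linear in $p$ with $0\le f_C(p)\le 1$, so the width is $\omega=1$ and there are $\mu=m$ constraints. For a non-negative weight vector $z\in[0,\infty)^m$ the oracle problem reduces, via linearity of expectation and the identity $\sum_C z_C\tilde\sigma_C(p)=\E_{S\sim p}\big[\sum_v\nw_v\tilde\sigma_v(S)\big]$ with $\nw_v=\sum_{C\ni v}z_C/|C|$, to maximizing the weighted spread $\tilde\sigma^\nw(p)$ over $p\in\PPP$. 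The two lemmas immediately preceding the theorem show that this maximum is attained by a distribution supported on a single set of size $k$, and that applying greedy hill-climbing to the monotone submodular function $\tilde\sigma^\nw$ returns such a set with multiplicative ratio $\alpha=1-1/e$. Hence the oracle is implementable in polynomial time by one run of the standard greedy IM algorithm, and each oracle call contributes a single seed set to Young's averaged output.

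With these pieces in place I would invoke Theorem~\ref{thm:young} choosing $\eta=\Theta(\eps)$ (for instance $\eta=\eps/(2\alpha)$) so that, since $\lambda^*\le 1$, the multiplicative slack $\alpha\eta\lambda^*$ is at most $\eps/2$. This yields a solution $p\in\PPP$ satisfying $\min_C\tilde\sigma_C(p)\ge(1-1/e)\lambda^*-\eps/2$; combining with Lemma~\ref{lemma: opt approx} then gives the required $(1-1/e,\eps)$-approximation to $\opt_\PPP(G,\CCC,k)$ after a harmless constant-factor rescaling of $\eps$. Because each of the $O(\omega\eta^{-2}\log\mu/\lambda^*)$ oracle invocations adds one set to the support, the final distribution has support of size $O(\eps^{-2}\log m/\lambda^*)$.

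The last step, which I view as the main obstacle for getting the exact support bound claimed, is to lower bound $\lambda^*\ge k/n$, so that $1/\lambda^*\le n/k$ and the iteration count becomes $O(\eps^{-2}n\log m/k)$. I would establish this by exhibiting an explicit feasible distribution: let $p^\circ$ place mass $k/n$ on the full vertex set $V$ and mass $1-k/n$ on $\emptyset$; then $\sum_S p^\circ_S|S|=k$ and hence $p^\circ\in\PPP$, while $\tilde\sigma_C(p^\circ)=(k/n)\cdot\tilde\sigma_C(V)=k/n$ for every non-empty $C\in\CCC$, so $\lambda^*\ge k/n$. Substituting into the iteration bound yields exactly the $O(\eps^{-2}n\log m/k)$ support guarantee stated in the theorem, concluding the argument.
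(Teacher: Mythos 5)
Your proposal is correct and follows essentially the same route as the paper: Hoeffding-based sampling via Lemma~\ref{lemma: opt approx}, casting the sampled problem as Young's covering problem with the greedy weighted-influence-maximization oracle giving $\alpha=1-1/e$, and bounding the iteration count (hence support) via $\lambda^*\ge k/n$. Your explicit witness distribution $p^\circ$ for the bound $\lambda^*\ge k/n$ is in fact a slightly cleaner justification than the paper's, but the argument is the same.
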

\begin{proof}
    We have argued that, for $z\in\RR^m$, the greedy hill climbing algorithm can be used on $\tilde \sigma^\nw(\cdot)$, where $\nw$ is such that $\nw_v:= \sum_{C\in \CCC: v\in C}\frac{z_C}{|C|}$ for $v\in V$, for obtaining a set $S$ of cardinality $k$ that is a $(1-1/e, 0)$-approximate solution to the problem of maximizing $\sum_{C\in\CCC} z_C \tilde\sigma_C(p)$ over $\PPP$.
    Thus, we described an oracle with multiplicative approximation $1-1/e$.
    Applying Theorem~\ref{thm:young} with $\eta=\eps/2$ thus implies that Young's algorithm returns a solution $p\in\PPP$ with
    $\min_{C\in\CCC} \tilde \sigma_C(p) \ge \alpha \cdot (1-\frac{\eps}{2}) \max_{p\in\PPP} \min_{C \in \CCC} \tilde \sigma_C(p)\ge \alpha \cdot \max_{p\in\PPP} \min_{C \in \CCC} \tilde \sigma_C(p) -\frac{\eps}{2}$
    after $O(\eps^{-2}\log m/\lambda^*)$ iterations. 
    Observe that for any $p\in\PPP$ and $v \in V$, it holds that $\sigma_v(p) \ge \Pr_{S\sim p}[v \in S] $. Hence, $\sigma_v(p) \ge k/n$ for all $v \in V$ and $\lambda^*\ge k/n$. 
    Thus the number of iterations is bounded by $O(\eps^{-2}n\log m/k)$. As the oracle returns a single set in every iteration it follows that the support of $p$ is upper bounded in this way as well.
    Applying Lemma~\ref{lemma: opt approx} with $\eps/2$ leads that we get a $(1-1/e,\eps)$-approximation to $\opt_\PPP(G, \CCC, k)$ in polynomial time with probability at least $1-\delta$.
\end{proof}

\section{Experiments} \label{sec : numerical tests}
We report on an experimental study on the two probabilistic maximin problems. 
In fact, we provide implementations of multiplicative-weight routines for both the set-based and the node-based problems.
The routine for the set-based problem is the one described in Section~\ref{subsec: prob sets}. We refer to this algorithm as \texttt{set-based} in what follows. For the node-based problem, an implementation of the LP-based algorithm from Section~\ref{subsec: prob nodes} does not seem promising as it requires solving a large LP. Instead, we propose a heuristic approach that is again based on a multiplicative-weight routine. The essential observation is that the optimization problem $\max_{x\in \XXX}\min_{C \in \CCC} \lambda_C(x)$ from Lemma~\ref{lemma: LP approx tau} is again a covering LP and thus can be solved using a similar multiplicative-weight routine. In this case however, the oracle problem turns out to be the LP-relaxation of the standard influence maximization problem and thus we are again faced with a linear program of a similar form. This is where our approach becomes heuristic, we propose to again use a greedy algorithm for influence maximization in order to obtain feasible solutions for this LP. While this comes without any guarantee on approximation ratio, it turns out to be very efficient in practice. We refer to this algorithm as \texttt{node-based} in the remainder of this section. 
In our study we use random, artificial, as well as real-world instances. On these instances, we compare our methods, both in terms of fairness and efficiency (i.e., total spread) with a standard implementation of the greedy algorithm for influence maximization and the (very straightforward) methods proposed by Fish et al.~\cite{fish2019gaps} as well as the more involved method due to Tsang et al.~\cite{tsang2019group}. 
We continue by describing the experimental setting in detail.

\subsection{Experimental Setting}
\paragraph{Competitors.}
The methods of Fish et al.~\cite{fish2019gaps} are simple heuristics. First, they propose to use the greedy algorithm that iteratively picks $k$ seeds such as to maximize the minimum probability of any node to be reached (note that this is not the same as the greedy algorithm for influence maximization). In our implementation of this algorithm, in order to break ties, we use the node of maximum degree. We refer to this algorithm as \texttt{greedy\_maximin}. Second, Fish et al.\ propose a routine called myopic that after choosing the node of maximum degree, iteratively chooses the node that has the minimum probability of being reached as a seed node for $k-1$ times. We call this algorithm \texttt{myopic\_fish}. As a third heuristic, called naive-myopic, they propose to choose the $k-1$ nodes of smallest probability all at once instead of in $k-1$ iterations. We omit the results of naive-myopic as they are much worse than the ones of \texttt{myopic\_fish}.

The algorithm of Tsang et al.~\cite{tsang2019group} is much more involved. They phrase the problem as a multi-objective submodular optimization problem and design an algorithm to tackle such multi-objective submodular optimization problems that provides an asymptotic approximation guarantee of $1-1/e$. Their algorithm, that improves over previous work by Chekuri et al.~\cite{ChekuriVZ10} and Udwani~\cite{Udwani18}, is a Frank-Wolfe style algorithm that simultaneously optimizes the multilinear extensions of the submodular functions that describe the coverage of the respective communities.
We stress that their setting is less general than ours as the algorithm only satisfies an approximation guarantee in the case where the number of communities is $o(k \log^3(k))$. We use their python implementation as provided while choosing gurobi as solver since the other alternative md (their implementation of a mirror-descent) is much less efficient on the instances tested. We refer to the algorithm by Tsang et al as \texttt{tsang}.

We also compare our algorithm to the standard greedy algorithm for influence maximization. We use the slightly more involved and very efficient TIM implementation~\cite{TangXS14}. We refer to this method as \texttt{tim} in our evaluation.

We also compare our algorithms to the ultimate baseline for randomized strategies that is given by the uniform node-based solution, i.e., every node $v$ is chosen as a seed with probability $x_v=k/n$. We refer to this method as \texttt{uf\_node\_based}.

\paragraph{Implementation Details.}
We implement the multiplicative-weight routines for both the set-based and the node-based problems in C++ and the routines from Fish et al.\ in python using networkx~\cite{SciPyProceedings_11} for graph related computations. We used the TIM algorithm for influence maximization in order to solve the oracle problems for both multiplicative weight routines. We choose the $\eps$ parameter of the multiplicative weight routine (called $\eta$ in Theorem~\ref{thm:young}) to be 0.1. We use the IC model as diffusion model in all our experiments.
For the methods due to Fish et al.~\cite{fish2019gaps} and Tsang et al.~\cite{tsang2019group} (as also proposed by them), we use a constant number of 100 live-edge graphs for simulating the information spread instead of the number that guarantees $1-\eps$-approximations with probability $1-\delta$. As suggested by the confidence intervals in all our plots, this leads to sufficiently small variance on the instances tested.

All experiments were executed on a compute server running Ubuntu 16.04.5 LTS with 24 Intel(R) Xeon(R) CPU E5-2643 3.40GHz cores and a total of 128 GB RAM. The code was executed with python version 3.7.6 and C++ implementations were compiled with g++ $7.5.0$.
For the random generation of the graphs and the random choices of the live-edge graphs, we do not explicitly set the random seeds used by the random number generator. This does not prevent reproducability of our results as all the reported results are averages that are robust and independent of the random seeds chosen as indicated by the confidence intervals reported. 

\paragraph{Evaluation Details.}
For random and synthetic instances, each datapoint in our plots is the result of averaging over 25 experiments, 5 runs on each of 5 graphs generated according to the respective graph model. For real world instances each datapoint is the result of averaging over 5 runs on each graph. Error-bars in our plots indicate 95-\% confidence intervals. For the evaluation of $\sigma_v(x)$ we choose to approximate the value using a Chernoff bound in a way to obtain a $1\pm \eps$-approximation of the values with probability $1-\delta$ and in the reported experiments we choose both parameters as $0.1$.

We report both ex-ante and ex-post fairness values for our methods (for short, we use \texttt{ea} and \texttt{ep} as suffices). These have the following precise meaning. After computing probabilistic strategies $p$ or $x$ for the set-based and node-based problems, the ex-ante fairness values correspond to the objective values $\min_{C\in \CCC} \E_{S\sim p}[\sigma_C(S)]$ for the set-based and $\min_{C\in \CCC} \E_{S\sim x}[\sigma_C(S)]$ for the node-based problem. The ex-post values on the other hand are obtained by sampling a single set $S$ according to the probabilistic strategy $p$ or $x$ and then reporting the value $\min_{C\in \CCC} \sigma_C(S)$. 
We report also both ex-ante and ex-post values for the method of Tsang et al.~\cite{tsang2019group}, since, at the core, their algorithm works with the multilinear extension and thus also computes a continuous solution $x\in \RR^n$, i.e., a feasible solution to the node-based problem. Hence for their method we report both the value $\min_{C\in \CCC} \E_{S\sim x}[\sigma_C(S)]$ as ex-ante value and a value $\min_{C\in \CCC} \sigma_C(S)$ as ex-post value, where $S$ is computed by swap rounding from $x$ as described in their paper.

\begin{figure}[H]
    \centering
    \includegraphics[trim={0 1.5cm 0 0}, clip, width=.49\linewidth]{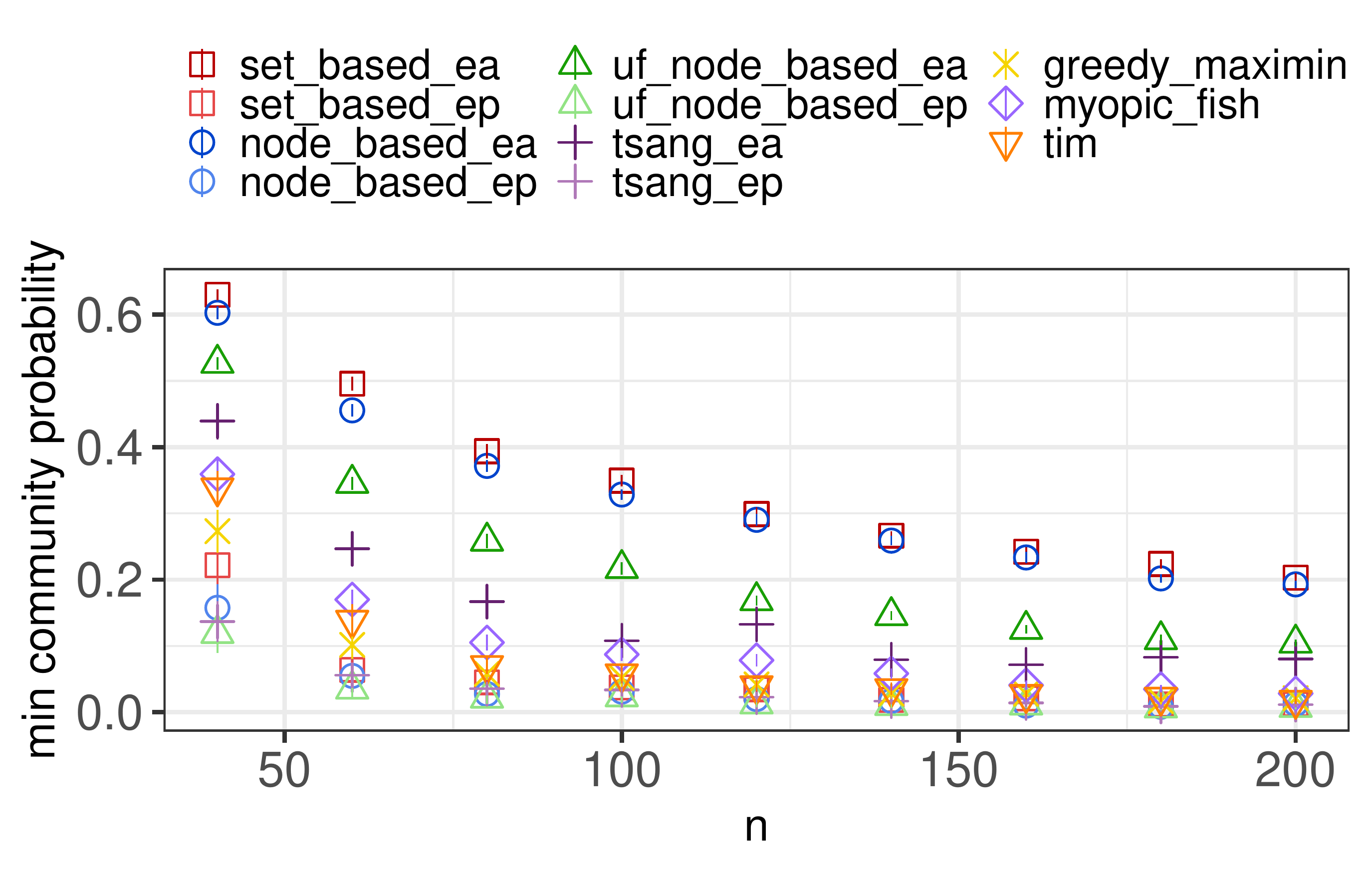}
    \includegraphics[trim={0 1.5cm 0 0}, clip, width=.49\linewidth]{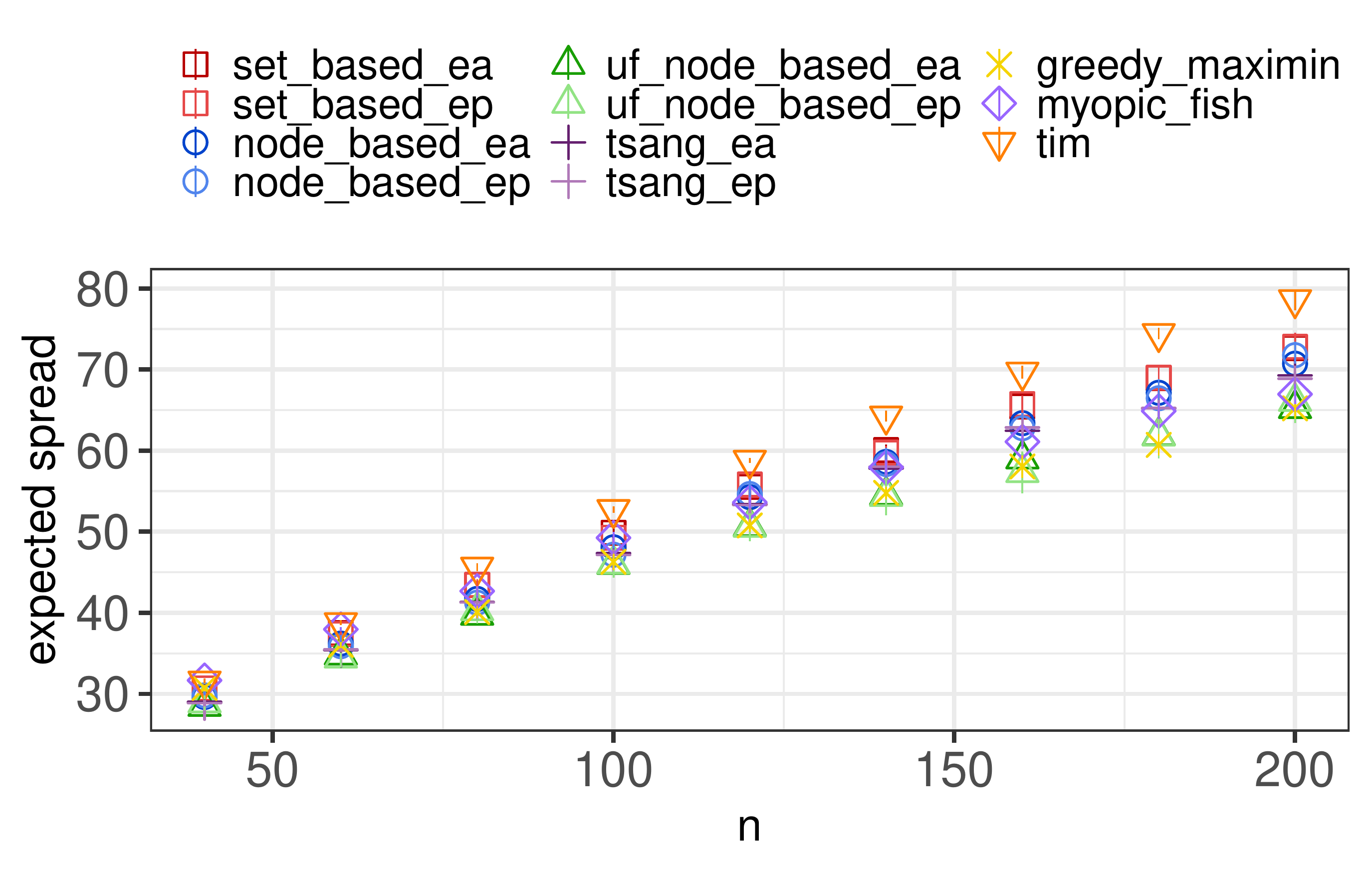}

    \includegraphics[trim={0 1.5cm 0 4.9cm}, clip, width=.49\linewidth]{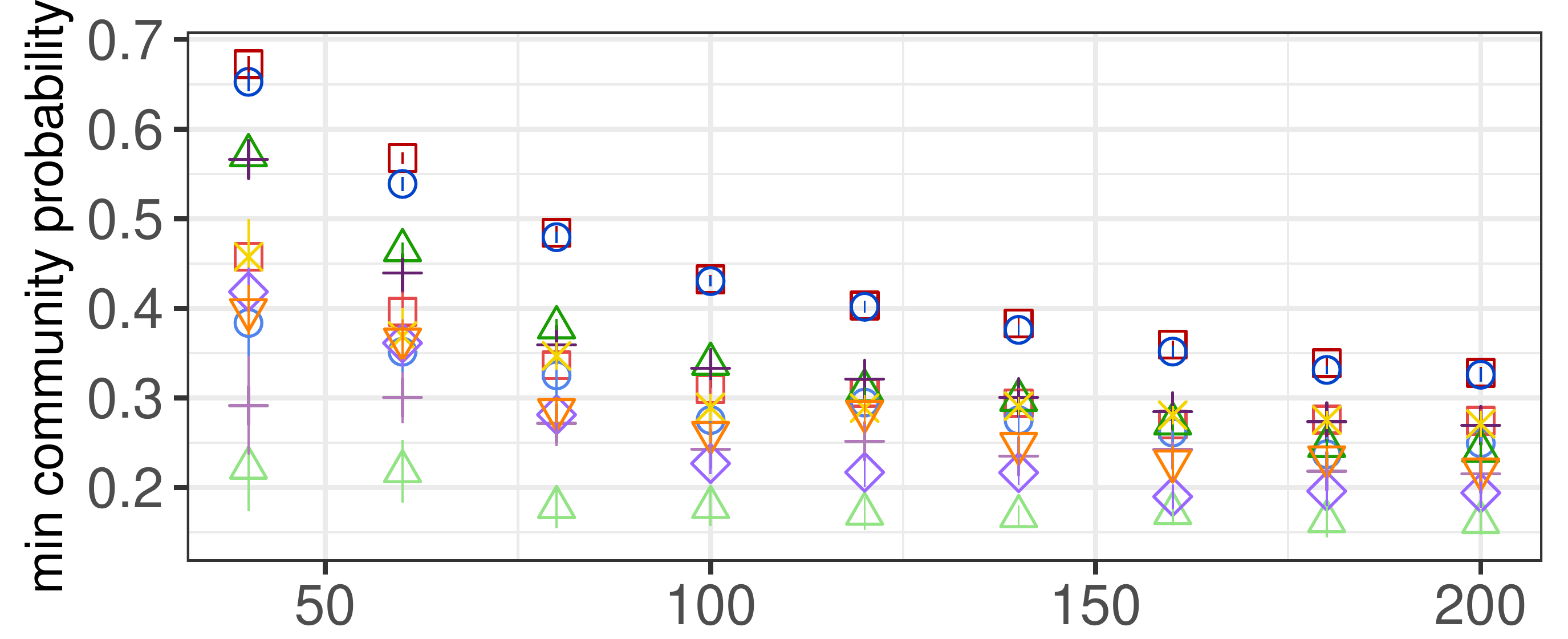}
    \includegraphics[trim={0 1.5cm 0 4.9cm}, clip, width=.49\linewidth]{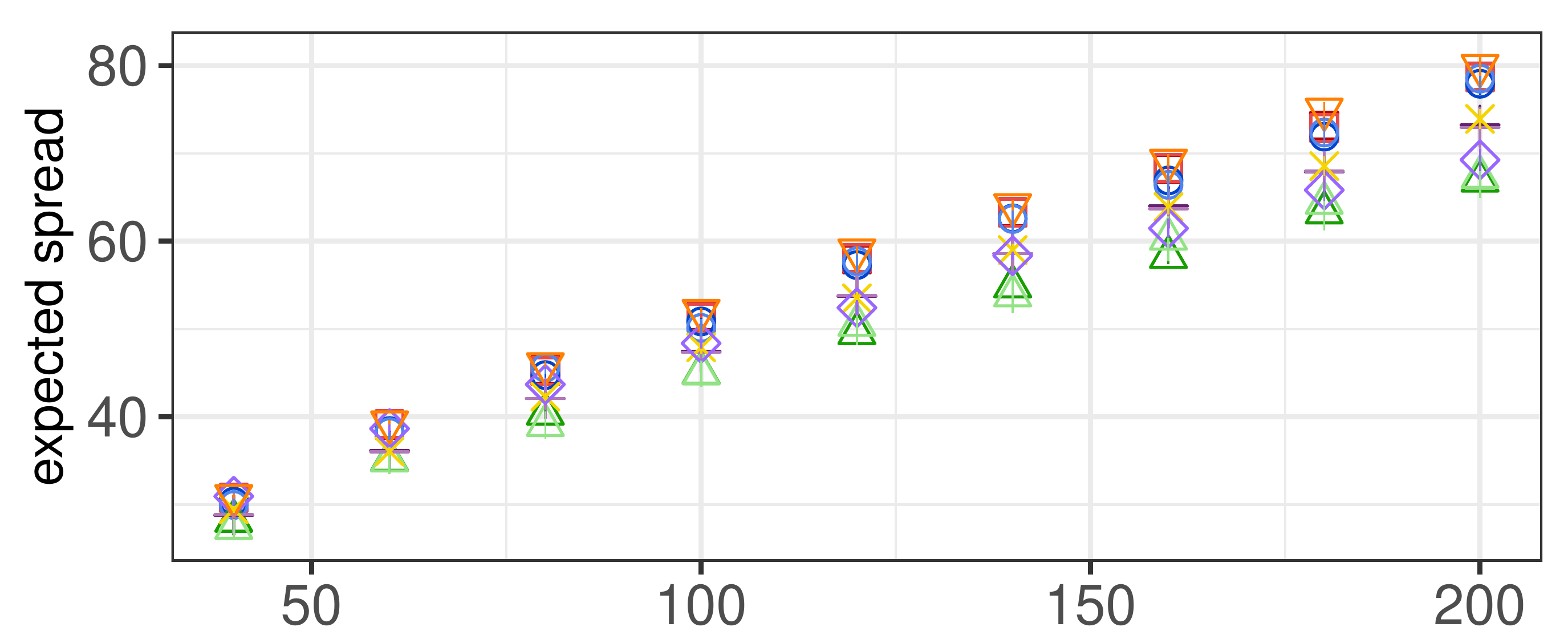}

    \includegraphics[trim={0 0 0 4.9cm}, clip, width=.49\linewidth]{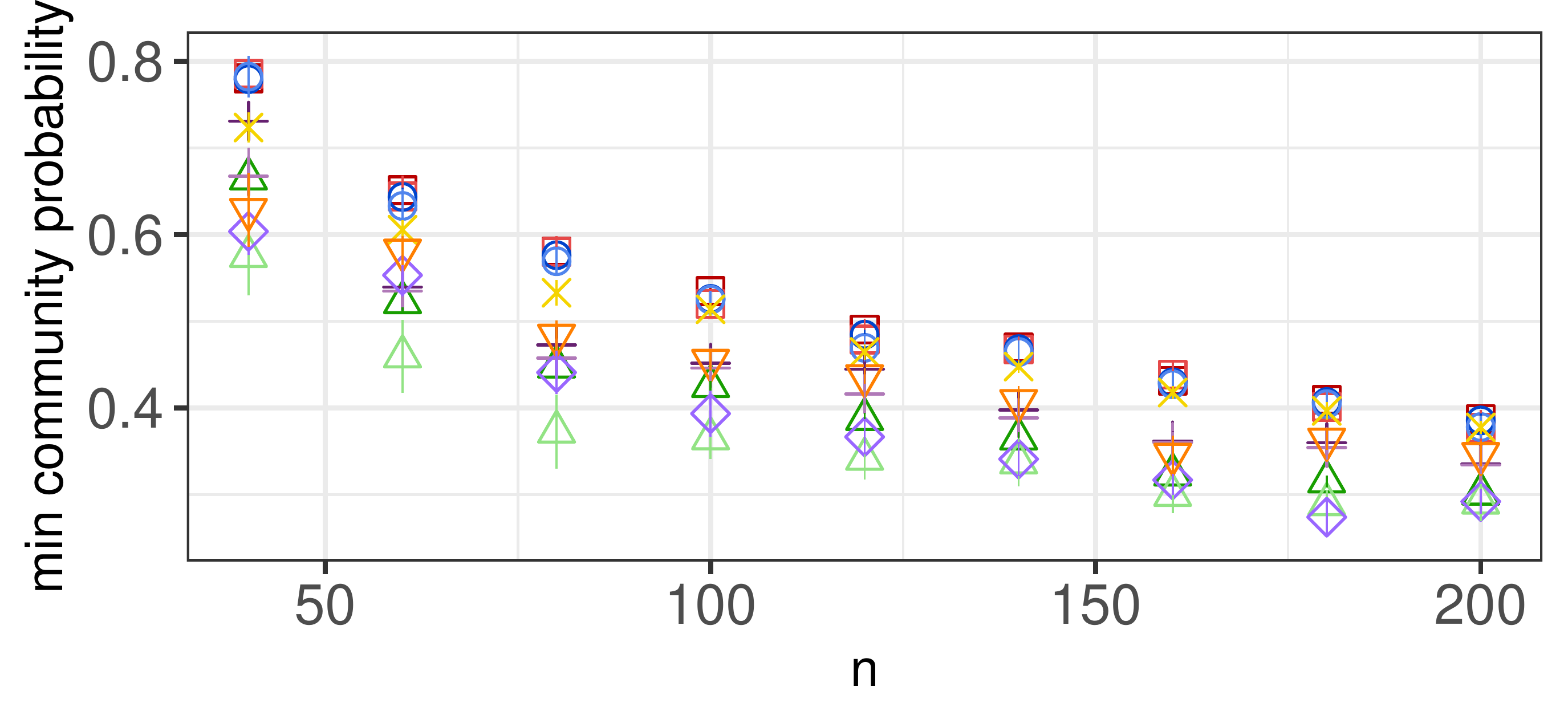}
    \includegraphics[trim={0 0 0 4.9cm}, clip, width=.49\linewidth]{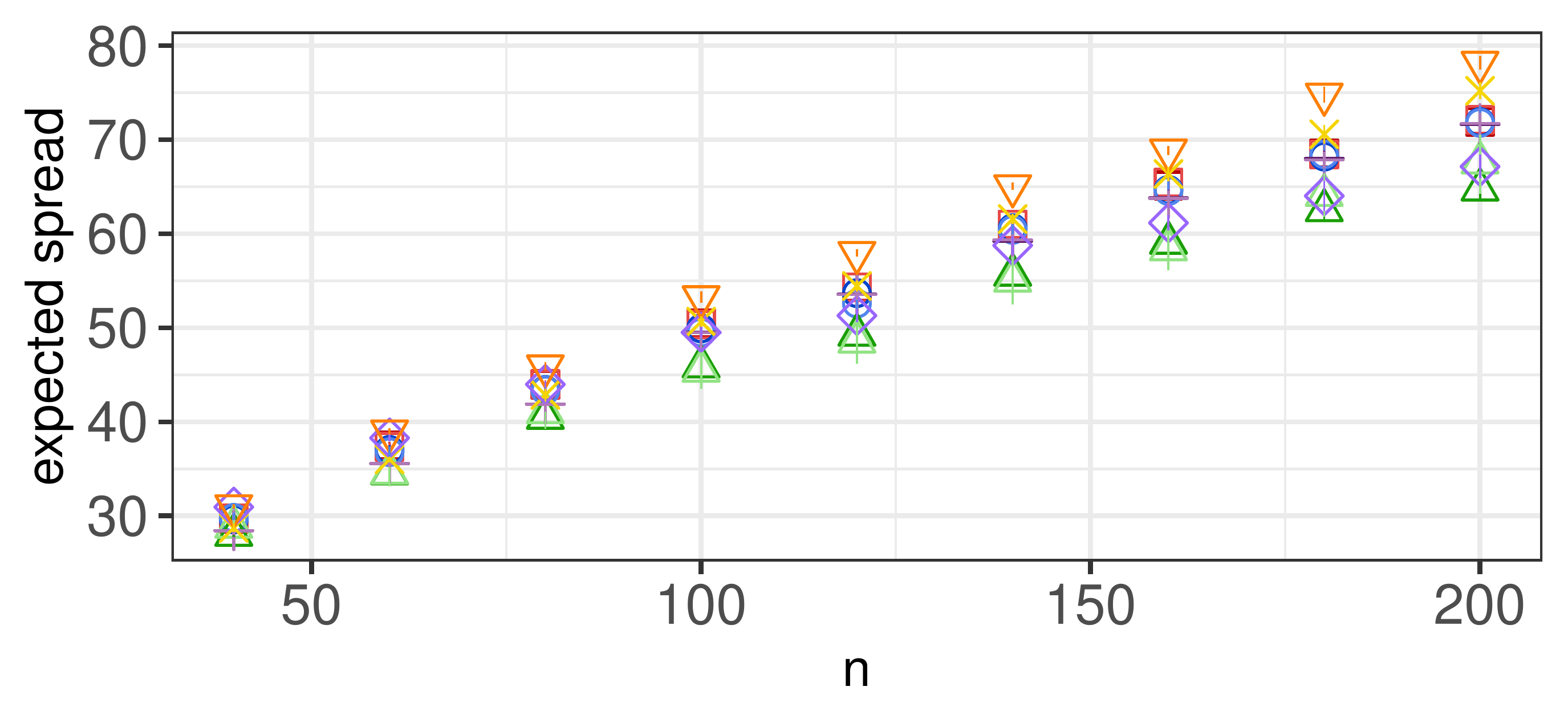}

    \caption{Results for Barab\'{a}si-Albert instances: $k=20$, $n$ increasing from $40$ to $200$ in steps of $20$. The minimum community probability is shown on the left, while expected spread is shown on the right. From top to bottom, we see (1) singleton community structure, (2) BFS community structure, and (3) random imbalanced community structure.}%
    \label{fig:barabasi albert}%
\end{figure}

\paragraph{Instances.} We evaluate the different algorithms on a vast set of instances. We proceed by describing the networks and the community structures that we use in our study.

\subparagraph{Networks.}
We use the following networks:
(1) Random instances generated according to the Barab\'{a}si-Albert model~\cite{Albert:2002:rmp} that yields scale-free networks -- a property frequently observed in social networks. We use the parameter modeling the preferential attachment to be $2$, i.e., every newly introduced node is connected to two previously existing nodes. 
(2) Random instances generated according to the block-stochastic model that is a natural choice in our setting as the instances come along with a community structure.
(3) The publicly available synthetic instances from the work of Tsang et al.~\cite{tsang2019group}. 
(4) Real world instances from the SNAP database~\cite{snapnets} and a paper by Guimer\`{a} et al~\cite{guimera2003self}, some of which were used also by Fish et al. We describe the real-world instances in detail below in the corresponding paragraph.
For most of our experiments and unless stated differently, we choose edge weights uniformly at random in the interval $[0,0.4]$ for random instances and the synthetic instances of Tsang et al., and in the interval $[0,0.2]$ for the real world instances.

\subparagraph{Community Structure.}
Regarding the community structure, in the case of some networks such as the block-stochastic networks, the synthetic networks due to Tsang et al., and some real world graphs the community structures are given. On all other networks we use some of the following different ways of constructing the community structure: 
(1) Singleton communities: every node is his own community. 
(2) BFS community structure: for a predefined number of communities $m$, we iteratively grow communities of size $n/m$ by breadth first search from a random source node (once there are no more reachable nodes but the community is still not of size $n/m$, we pick a new random source, until every node is in one of the $m$ communities).  
(3) Random imbalanced community structure: we randomly assign nodes to one of $m$ communities of fixed sizes. We use different values for the sizes and specify them for each of the experiments.

We note that the BFS community structure results in a rather connected community structure which is realistic for some applications. On the other hand, the random imbalanced community structure, is rather unconnected. Also this setting is realistic for some applications as for example if the groups indicate gender or ethnicity and we assume that people connect independently of these attributes.

\subsection{Results}
\paragraph{Barab\'{a}si-Albert Graphs.}
For the Barab\'{a}si-Albert graphs, we explore singleton communities, the BFS community structure with $m=k$,\footnote{We note that in this case the algorithm of Tsang et al.~satisfies its approximation guarantee.} and random imbalanced community structures of sizes $4n/10, 3n/10, 2n/10, n/10$. The results are reported in Figure~\ref{fig:barabasi albert}.
In the left plots in Figure~\ref{fig:barabasi albert}, we can see that the ex-ante values of our methods \texttt{set\_based} and \texttt{node\_based} dominate over all other ex-ante and ex-post values. Furthermore, we can see that particularly in the last plot, where the community structure is less simplistic, even the ex-post values of our methods dominate over the ones of all competitors. In the right plots, where we show the expected spread, we can see that \texttt{tim} outperforms other methods for all values of $n$. We note however that the advantage in efficiency of \texttt{tim} is not too pronounced, particularly in comparison to the disadvantage it yields in terms of fairness, see for example the plot on the top left.

\begin{figure}[ht]
    \centering
    \includegraphics[trim={0 1.5cm 0 0}, clip, width=.49\linewidth]{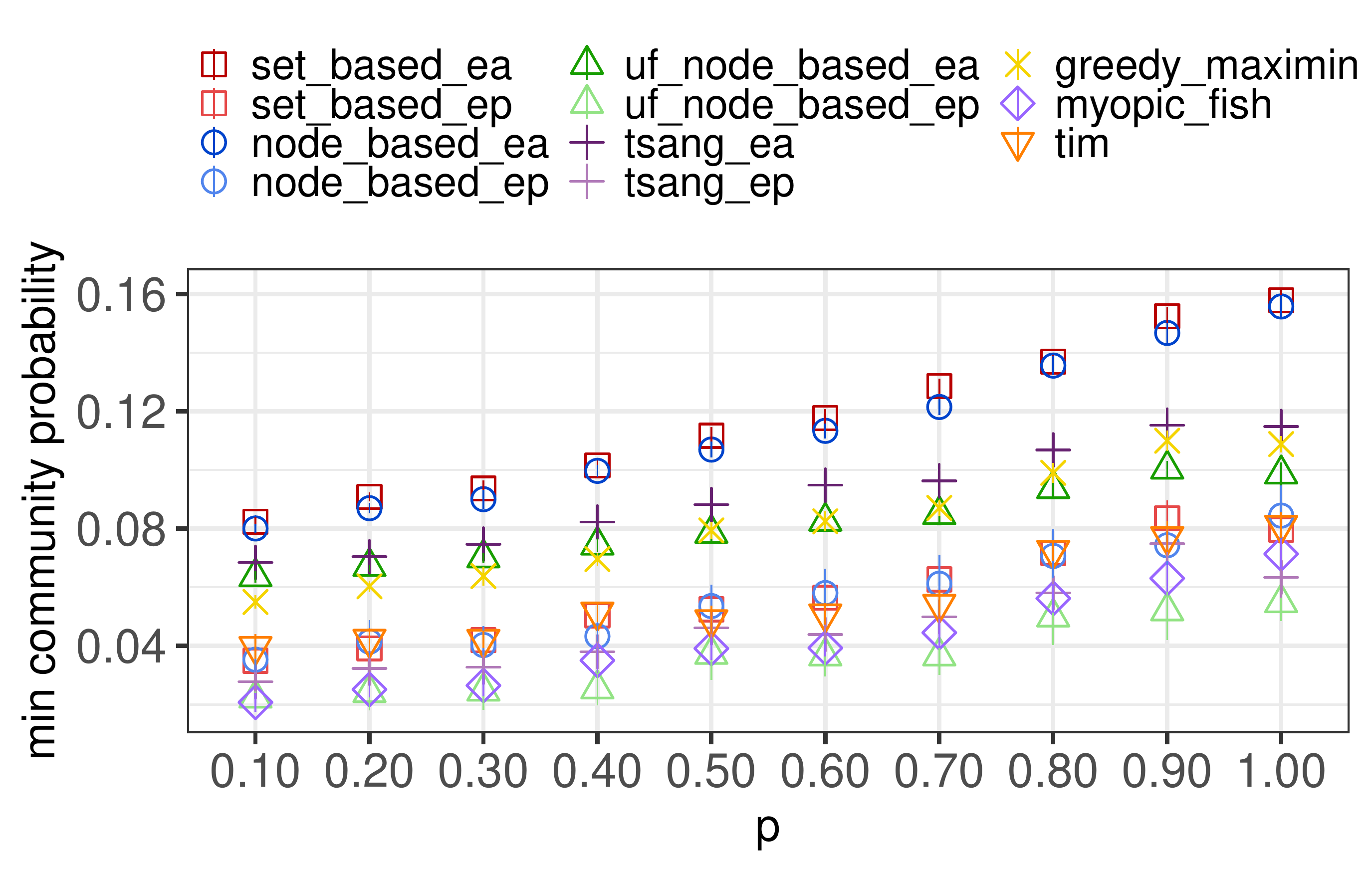}
    \includegraphics[trim={0 1.5cm 0 0}, clip, width=.49\linewidth]{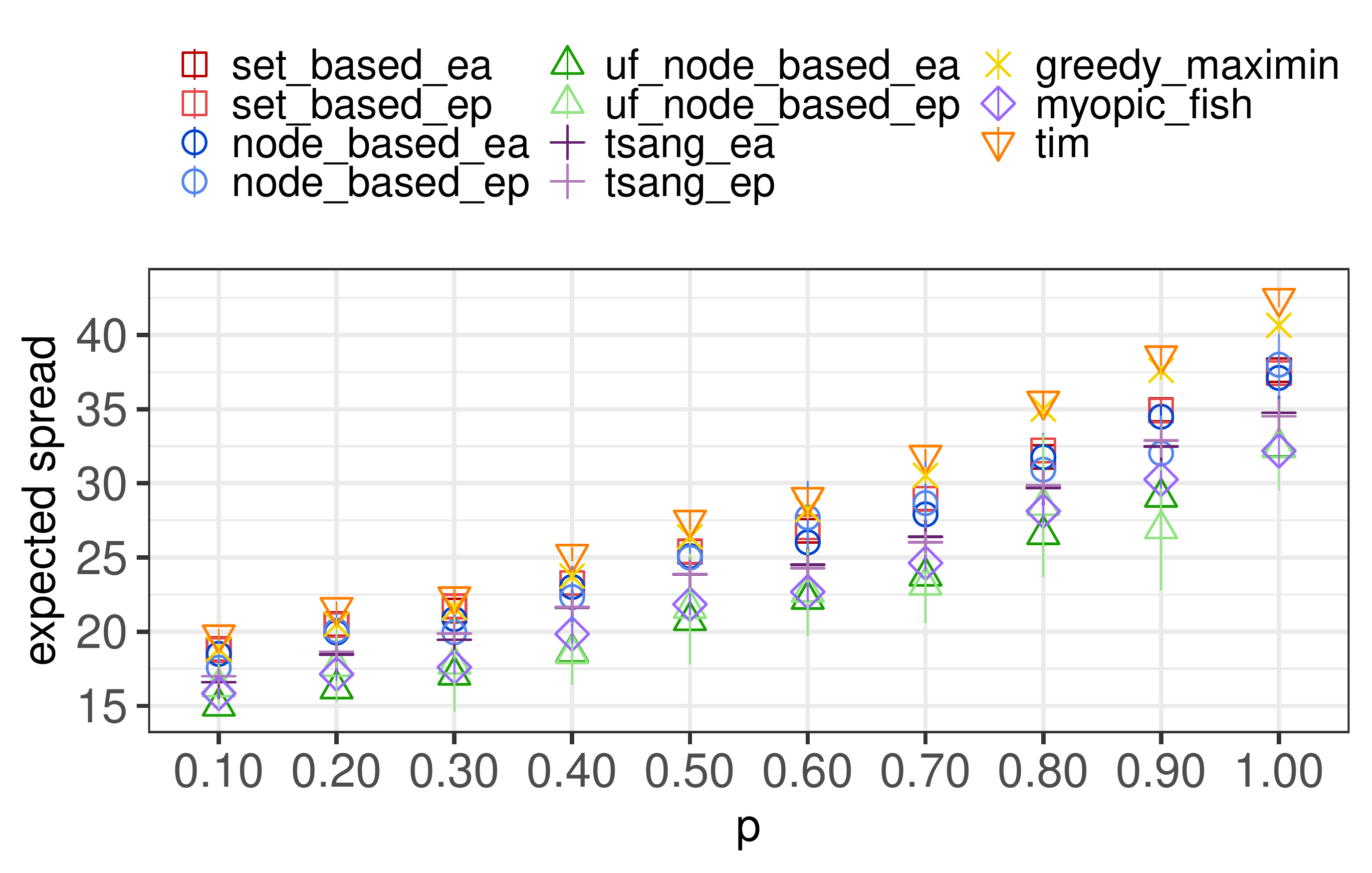}

    \includegraphics[trim={0 0 0 4.6cm}, clip, width=.49\linewidth]{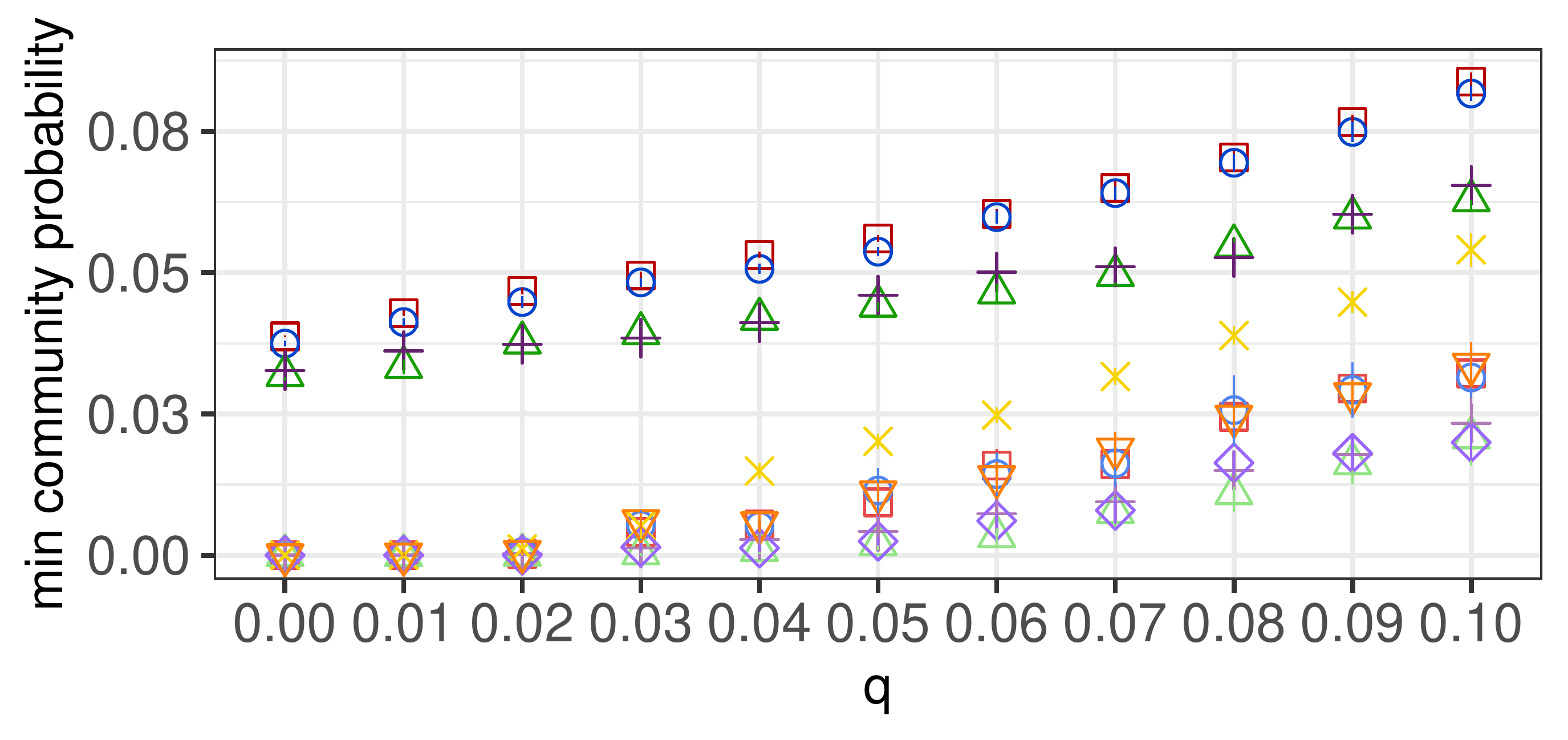}
    \includegraphics[trim={0 0 0 4.6cm}, clip, width=.49\linewidth]{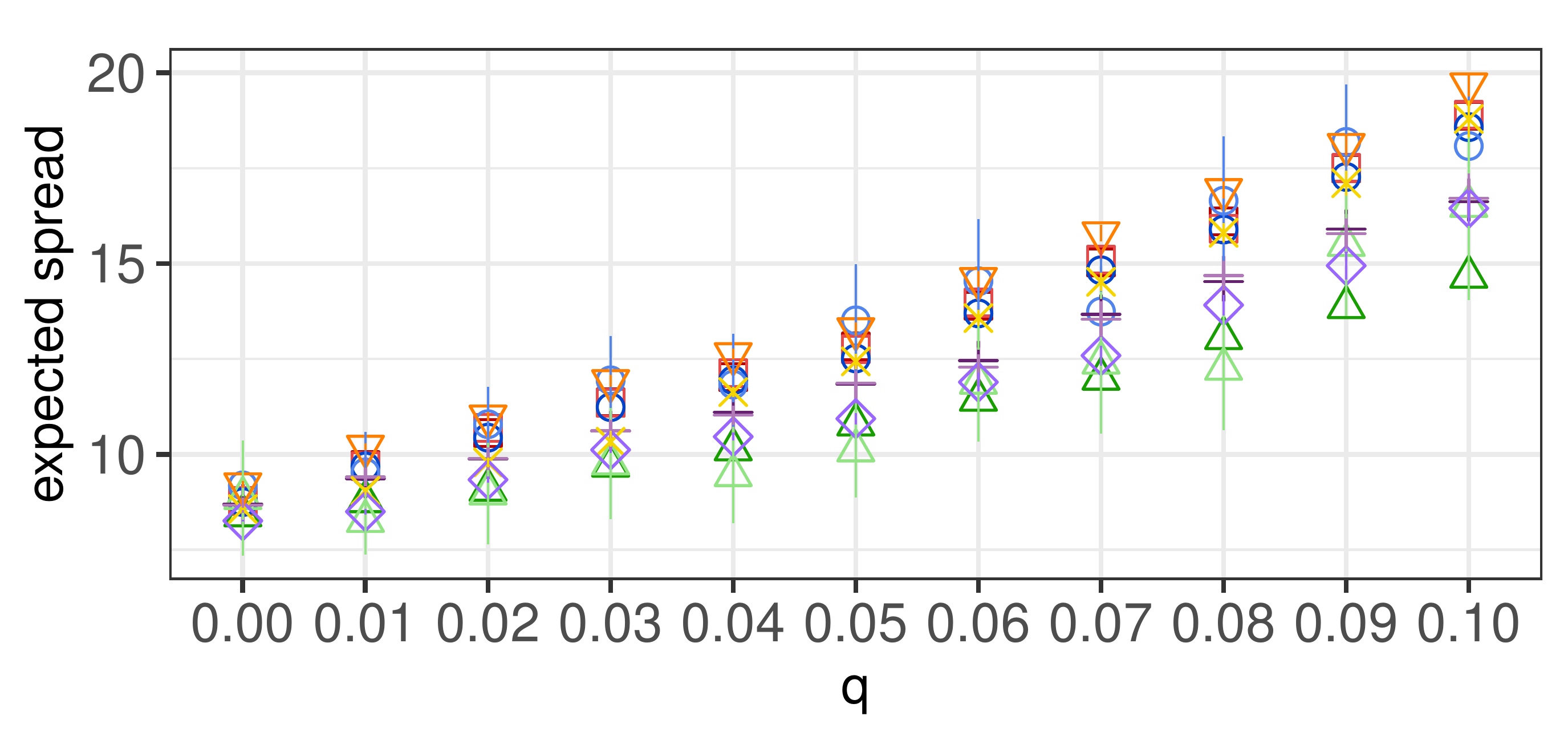}

    \caption{Results for block stochastic graphs: edge weights constant $0.05$, $k=8$, $n=200$. The minimum community probability is shown on the left, while expected spread is shown on the right. From top to bottom, we see (1) $q=0.1$ and $p$ increasing from $q$ to $1$ in steps of $0.1$, (2) $p=0.1$ and $q$ increasing from $0$ to $p$ in steps of $0.01$.}
    \label{fig: block stochastic}
\end{figure}

\paragraph{Block-Stochastic Graphs.} In order to further explore how the connectivity of the community structure influences the performance of the different approaches, we generate Block Stochastic graphs as follows. We fix the number of nodes to $200$, the number of communities to 16 with 6 communities of size $n/40$, 4 communities of size $2n/40$, 4 communities of size $4n/40$ and 2 communities of size $5n/40$.
We then choose two parameters $p$ and $q$, and create a sequence of instances where the probability of an edge within a community is $p$ and between communities $q$.
The larger choices of $p$ and $q$ yield very dense graphs and thus instances become trivial. We choose edge weights to be $0.05$ in this experiment as for the larger choice the instances become trivial as the minimum community probability becomes very large. 
The results are reported in Figure~\ref{fig: block stochastic}.
In the left plots we can see that again the ex-ante values of \texttt{set\_based} and \texttt{node\_baserd} dominate over all other values.
Clearly, by increasing $p$ and $q$ in both experiments, the values of all algorithms are increased. In the second experiment, for smaller $q$, when communities are better connected within each other than between each other, there is a bigger advantage for ex-ante values over ex-post values. In the right column, for smaller $p$ and $q$, all algorithms are close to each other. Again \texttt{tim} dominates over the other algorithms in terms of expected spread, while also the other algorithms -- and in particular \texttt{set\_based} and \texttt{node\_based} -- perform well.

\begin{figure}[htp]
    \centering
    \includegraphics[trim={0 1.5cm 0 0}, clip, width=.49\linewidth]{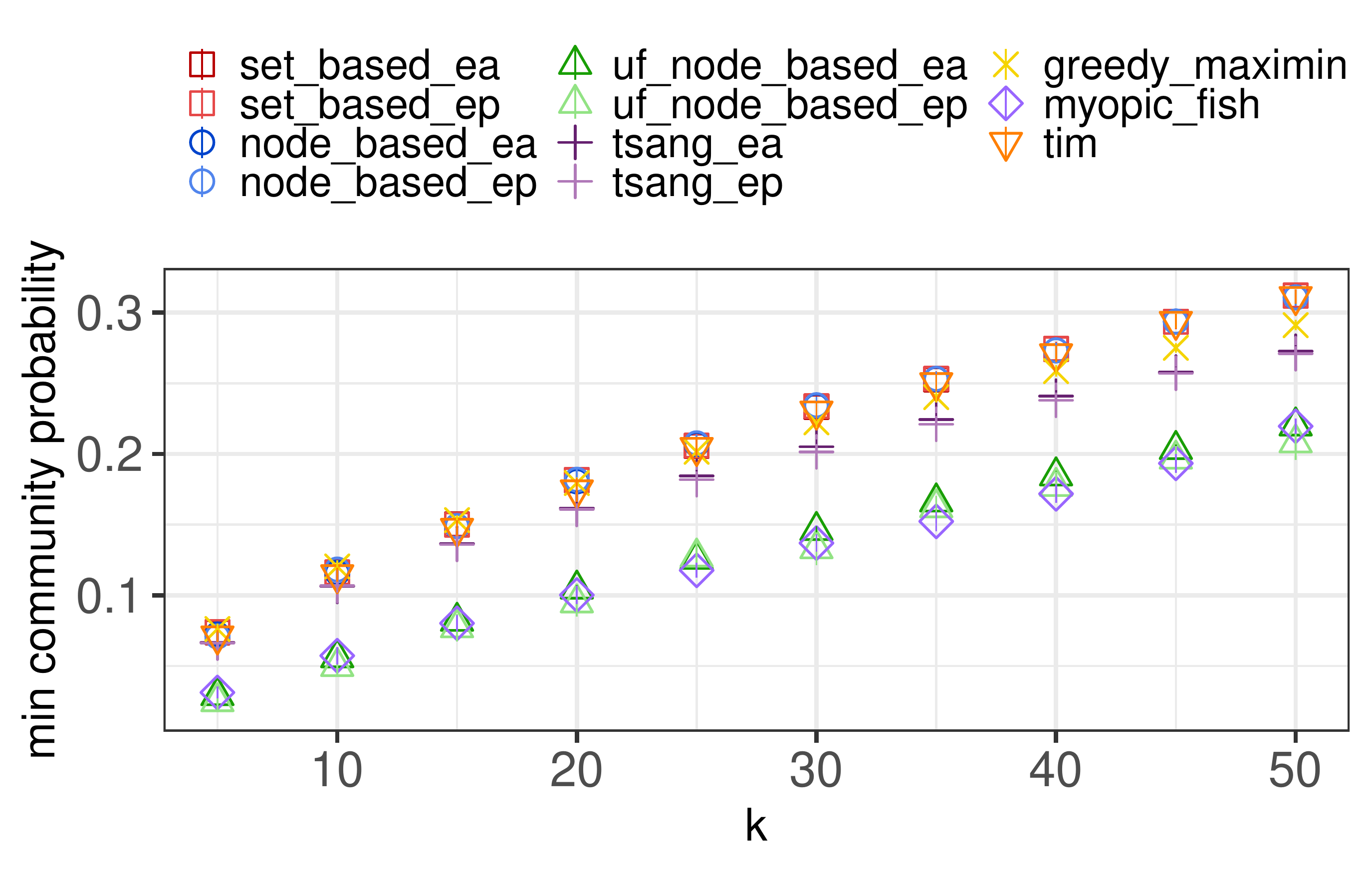}
    \includegraphics[trim={0 1.5cm 0 0}, clip, width=.49\linewidth]{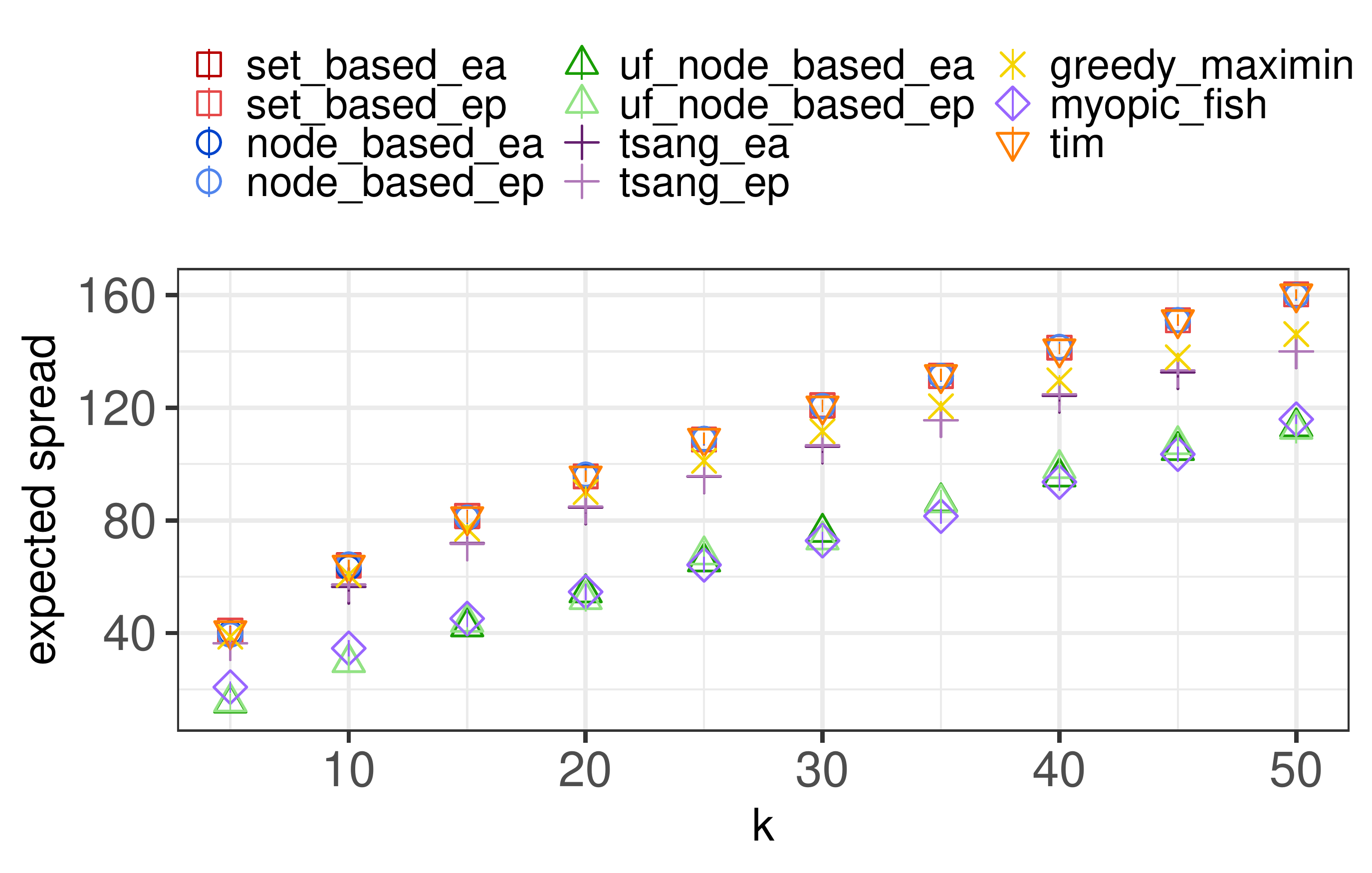}

    \includegraphics[trim={0 0 0 4.9cm}, clip, width=.49\linewidth]{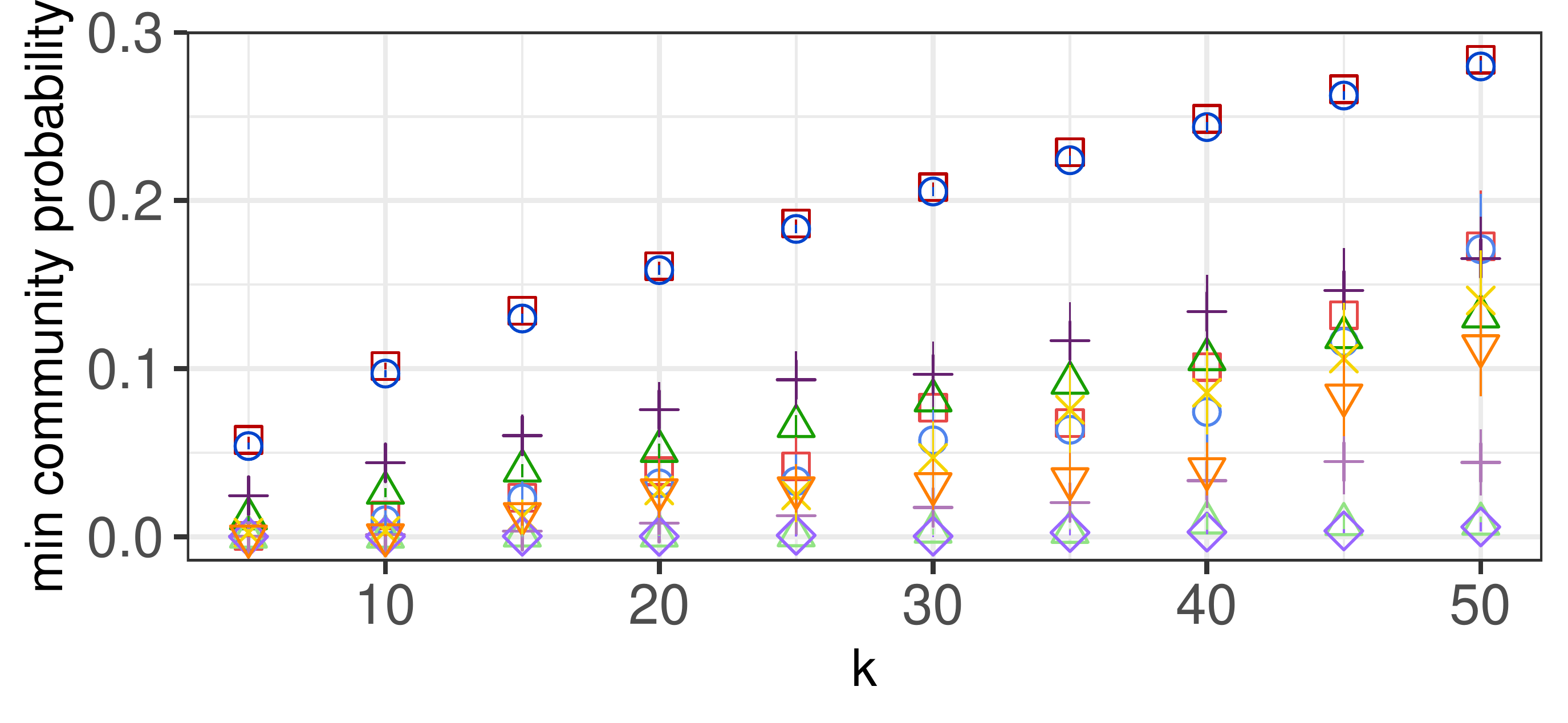}
    \includegraphics[trim={0 0 0 4.9cm}, clip, width=.49\linewidth]{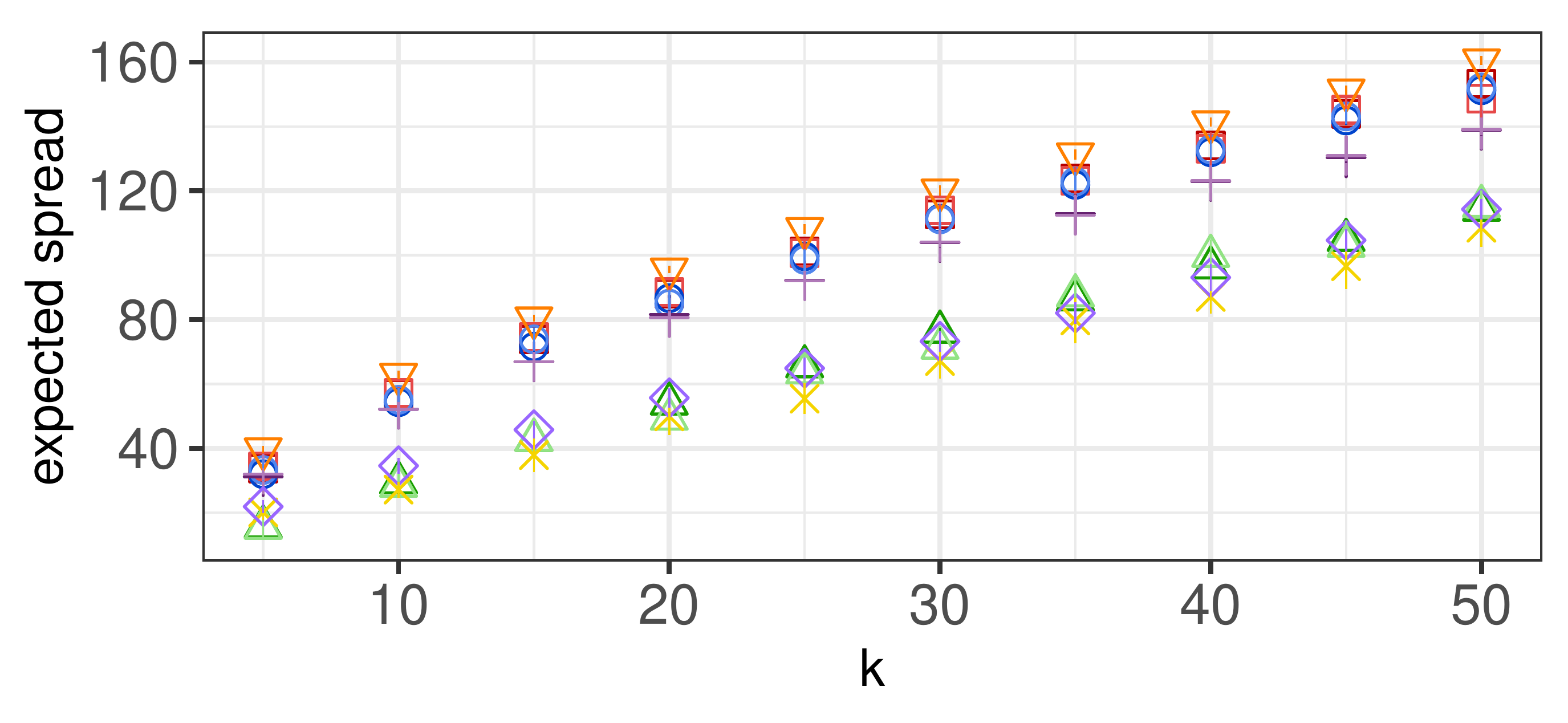}

\caption{Results for the instances of Tsang et al.~\cite{tsang2019group}: $k$ increasing from $5$ to $50$ in steps of $5$. The minimum community probability is shown on the left, while expected spread is shown on the right. From top to bottom, we see (1) community structure induced by attribute gender,
(2) community structure induced by attributes region, gender and ethnicity.}
\label{fig: tsang}
\end{figure}

\paragraph{Instances of Tsang et al.}
Next we evaluate the algorithms on the instances used by Tsang et al.~\cite{tsang2019group}. These are synthetic networks introduced by Wilder et al.~\cite{WilderOHT18} in order to analyze the effects of health interventions. Each of the 500 nodes in these networks has some attributes (region, ethnicity, age, gender, status) and more similar nodes are more likely to share an edge. The attributes induce communities and we test, as proposed by Tsang et al.~\cite{tsang2019group}, all algorithms w.r.t.\ group fairness of the communities induced by some of those attributes. The results are reported in Figure~\ref{fig: tsang}.
Again the ex-ante fairness values of our methods dominate over all other algorithms as can be seen in the left column. 
In the first experiment (communities induced by gender), the ex-post values of \texttt{set\_based}, \texttt{node\_based}, \texttt{tsang}, and \texttt{uf\_node\_based} are all almost identical to their respective ex-ante values.
In the second experiment (communities induced by three attributes, namely region, gender, and ethnicity) we obtain a much more complex community structure. Here, our algorithms \texttt{set\_based} and \texttt{node\_based} perform best not only in the ex-ante values, but also in terms of ex-post values for most values of $k$. Even more, in the right plots, we can see that the achieved values in expected spread by \texttt{tim} and our methods are very close to each other. 

\begin{figure}[ht]
  \centering
  \includegraphics[trim={0 1.5cm 0 0}, clip, width=.49\linewidth]{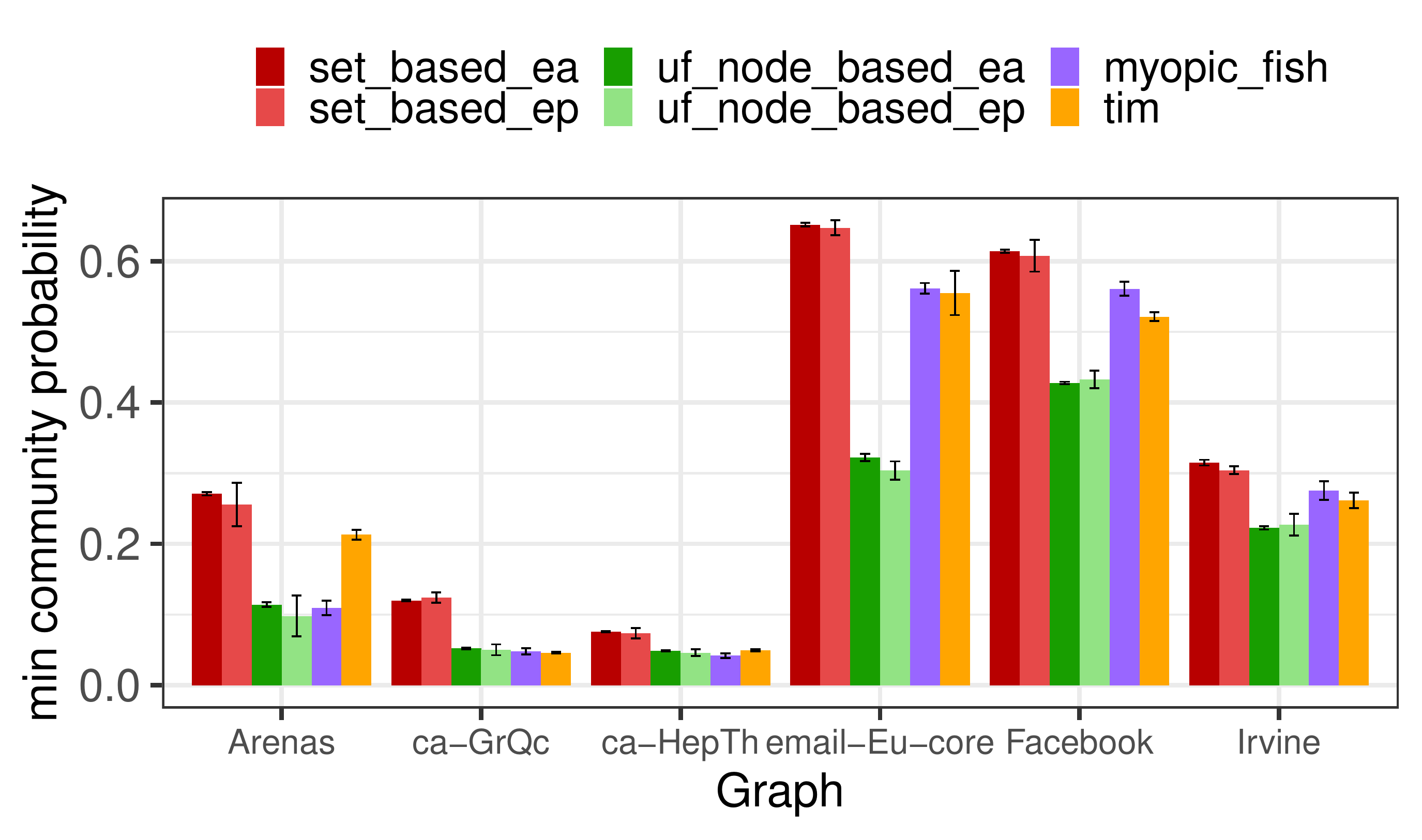}
  \includegraphics[trim={0 1.5cm 0 0}, clip, width=.49\linewidth]{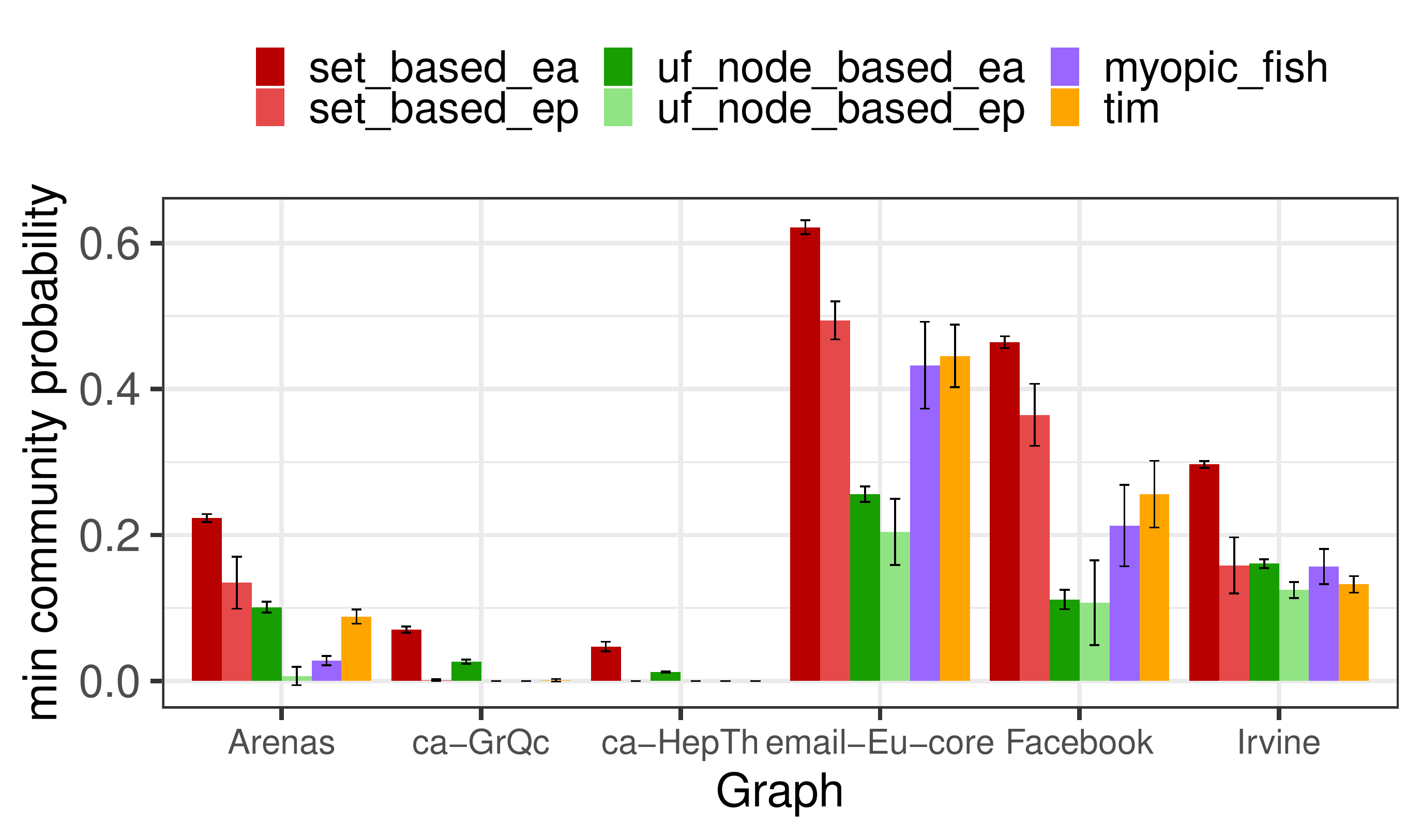}
  
  \includegraphics[trim={0 0 0 3.1cm}, clip, width=.49\linewidth]{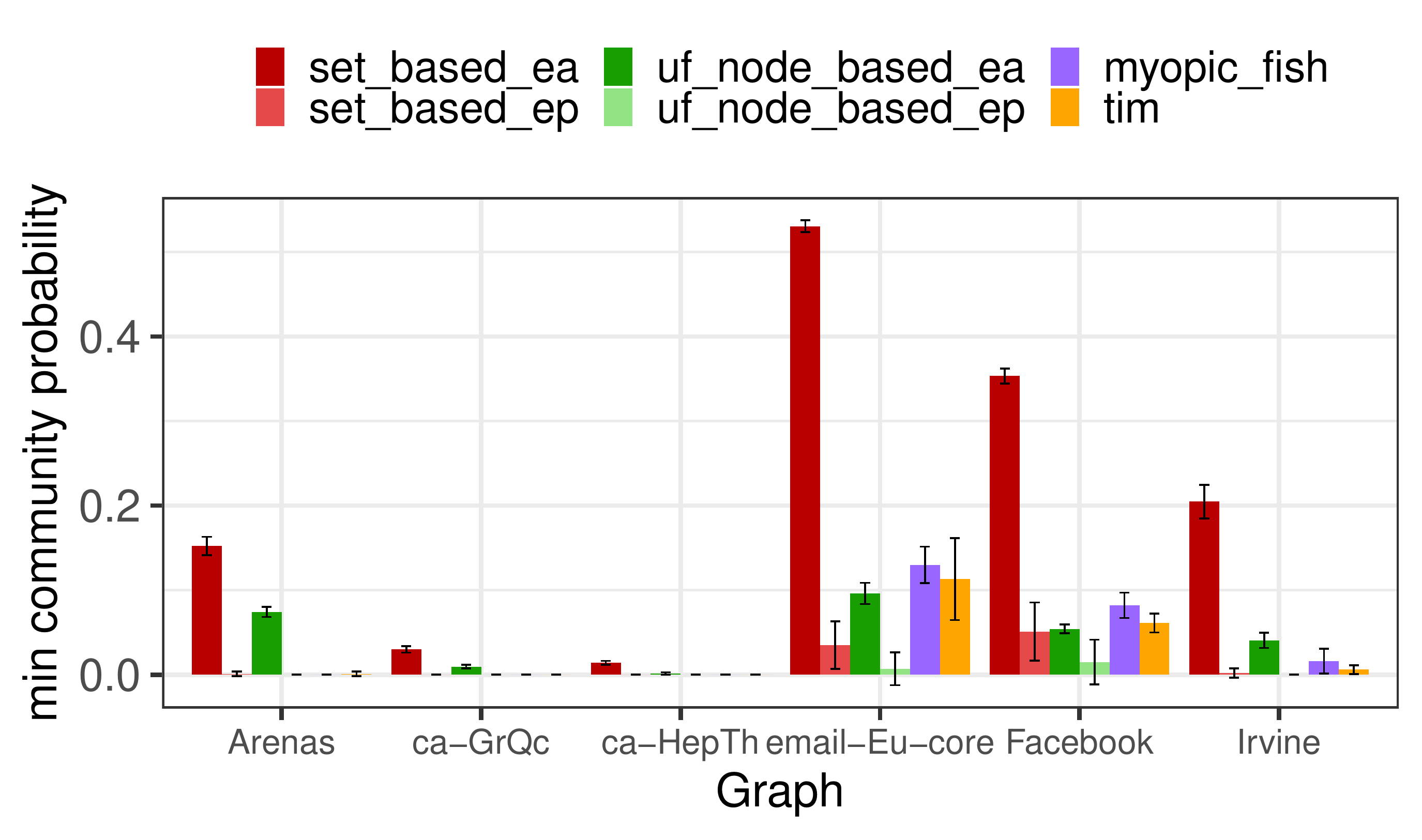}
  \includegraphics[trim={0 0 0 3.1cm}, clip, width=.49\linewidth]{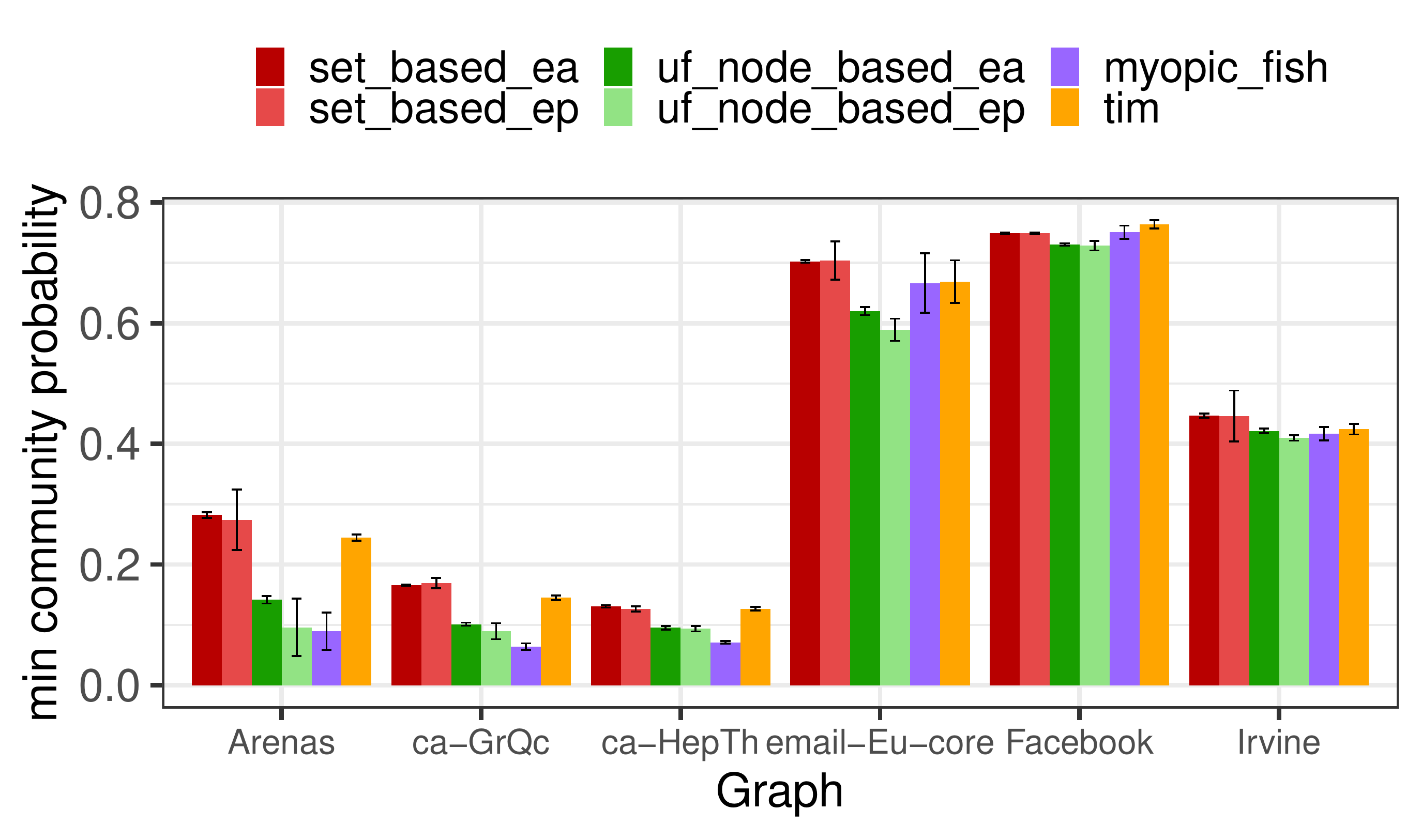}
\caption{Results for the instances used by Fish et al.\ for $k=100$. BFS community structure with (top left) $10$ communities, (top right) $n/10$ communities, (bottom left) $n/2$ communities. (bottom right) Random imbalanced community structure with $16$ communities.}
\label{fig: fish}
\end{figure}

\paragraph{Real World Instances.} We proceed by describing the used real world instances.
\begin{description}
    \item[\texttt{Arenas}~\cite{guimera2003self}] This dataset represents an email communication network at the University Rovira i Virgili (Spain). Each user is represented by a node and there is a directed edge from a node $u$ to a node $v$ if $u$ sent at least one email to $v$.
    \item[\texttt{email-Eu-core}~\cite{leskovec2007graph}] Also this dataset is an email network, this time from a large European research institution. Each member of the research institution belongs to one of 42 departments, which predefines a community structure. Nodes and edges have the same interpretation as in \texttt{Arenas}.
    \item[\texttt{ca-GrQc} and \texttt{ca-HepTh}~\cite{leskovec2007graph}] These datasets are co-authorship networks for two different categories of arXiv (General Relativity and Quantum Cosmology and High Energy Physics - Theory). The nodes in the networks correspond to authors and there is an undirected edge between two nodes if the authors co-authored at least one arXiv paper in this category. 
    \item[\texttt{Facebook}~\cite{leskovec2012learning}] This dataset represents a part of the Facebook network, where nodes are users and edges indicate friendships.
    \item[\texttt{Irvine}~\cite{opsahl2009clustering}] This dataset is a network created from an online community at the University of California, Irvine. Nodes here represent students and each directed edge represents that at least one online message was sent among the students.
    \item[\texttt{com-Youtube}~\cite{yang2015defining}] This dataset consists of a snapshot of the social network included in Youtube. The network has $1134890$ nodes and $2987624$ edges. Nodes correspond to users, edges to friendships between users. Also this network contains a predefined community structure that is given by the so-called Youtube groups. On Youtube, users can open groups that others can join. As the complete network is very large, we use a connected sub-network. We first remove all nodes that do not belong to any community. We then obtain the sub-network as the induced graph among the first 3000 nodes that are seen by a BFS from a random source node while removing singleton communities. The resulting network contains $3000$ nodes (some may not belong to any community), $29077$ edges. The number of communities is $1575$.
\end{description}
The number of nodes and edges as well as the information whether the networks are directed or undirected are summarized in Table~\ref{datasets}. If networks are undirected, we interpret the edges as existent in both directions.
\begin{wraptable}{H}{8cm}\small
	\begin{center}
        \begin{tabular}{c|c|c|c}
            \toprule
	        Dataset & Nodes & Edges & Direction  \\
    		\midrule 
    			\texttt{Arenas} & $1133$ & $5451$ & Directed\\
    			\texttt{email-Eu-core} & $1005$ & $25571$ & Directed\\
    			\texttt{ca-GrQc} & $5242$ & $14496$ & Undirected\\
    			\texttt{ca-HepTh} & $9877$ & $25998$ & Undirected\\
    			\texttt{Facebook} & $4039$ & $88234$ & Undirected\\
    			\texttt{Irvine} & $1899$ & $20296$ & Directed\\
    			\texttt{com-Youtube} & $3000$ & $29077$ & Undirected\\
    			\bottomrule 
        \end{tabular}
        \caption{Properties of real world networks.}
        \label{datasets}
   \end{center}
\end{wraptable}
In order to obtain non-trivial results, i.e., achieve non-zero minimum probabilities in the experiments (especially for the ex-post values), for each network (other than \texttt{com-Youtube}) we considered the largest weakly connected component.
We exclude \texttt{greedy} and the method of Tsang et al.\ from the further experiments as they are not efficient enough to deal with instances of this size. We also restrict to the set-based method from our two methods as the results of our two methods are very similar. The results are reported in the following figures.

\begin{figure}[ht]
    \centering
    \includegraphics[trim={0 1.5cm 0 0}, clip, width=.49\linewidth]{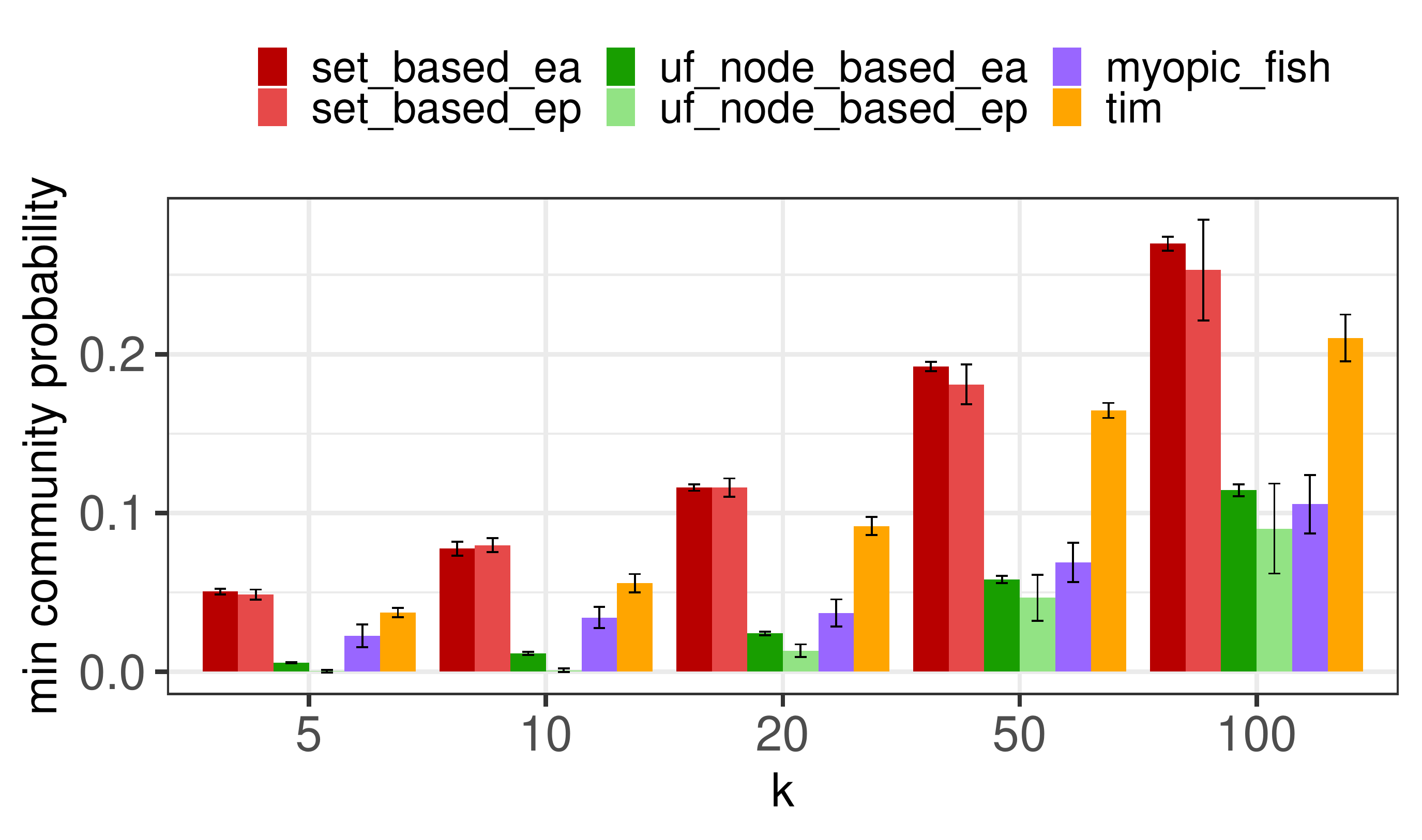}
    \includegraphics[trim={0 1.5cm 0 0}, clip, width=.49\linewidth]{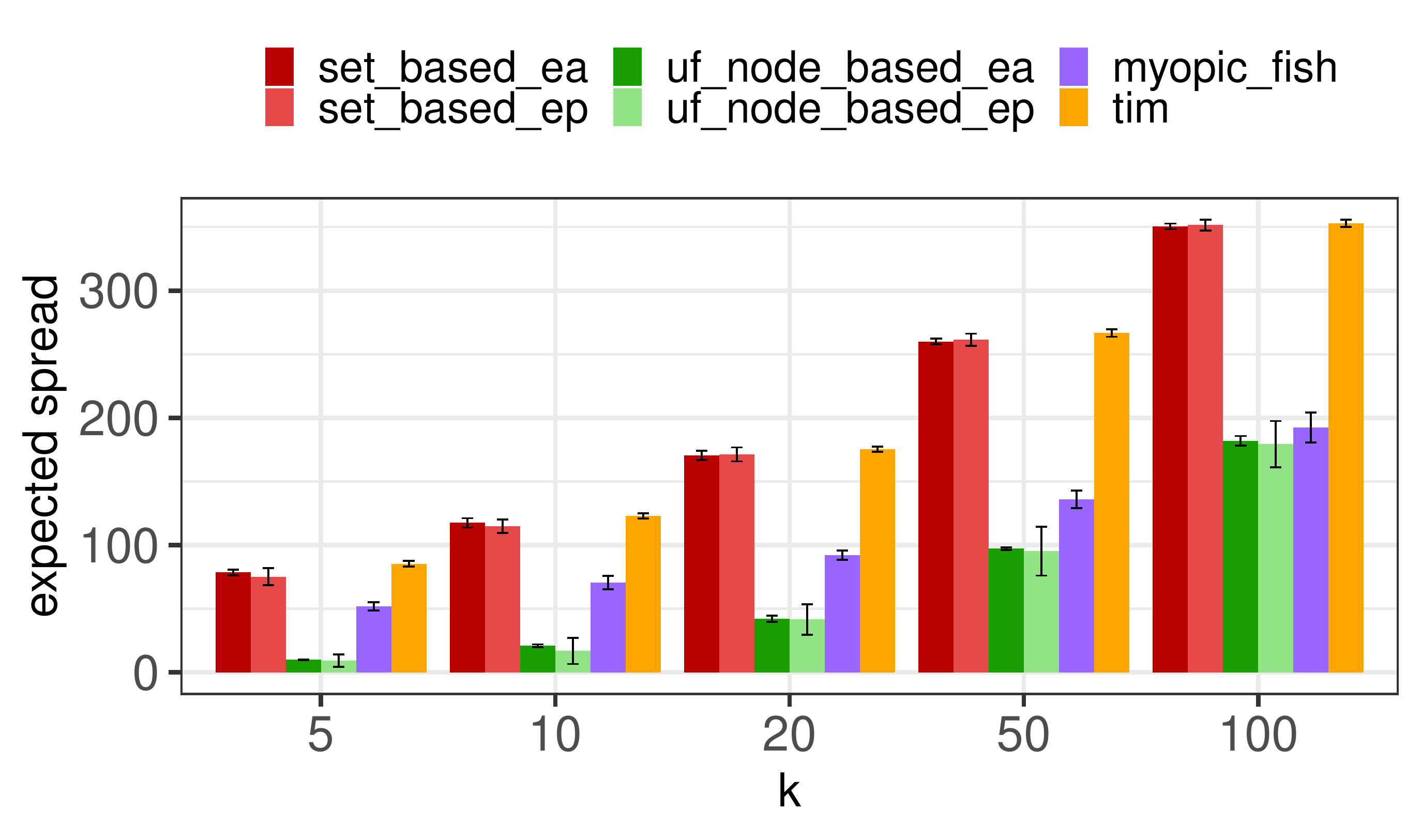}

    \includegraphics[trim={0 0 0 3.5cm}, clip, width=.49\linewidth]{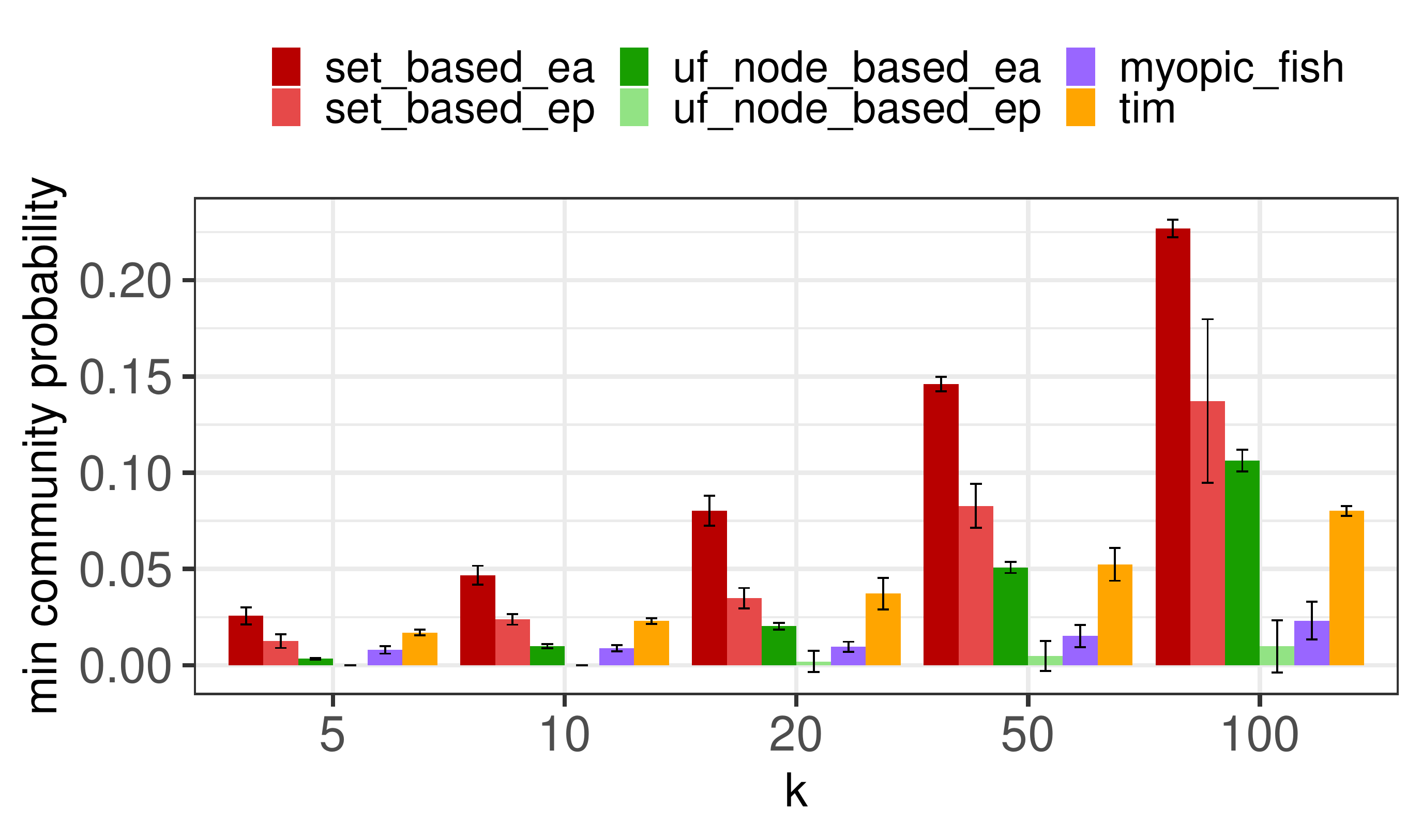}
    \includegraphics[trim={0 0 0 3.5cm}, clip, width=.49\linewidth]{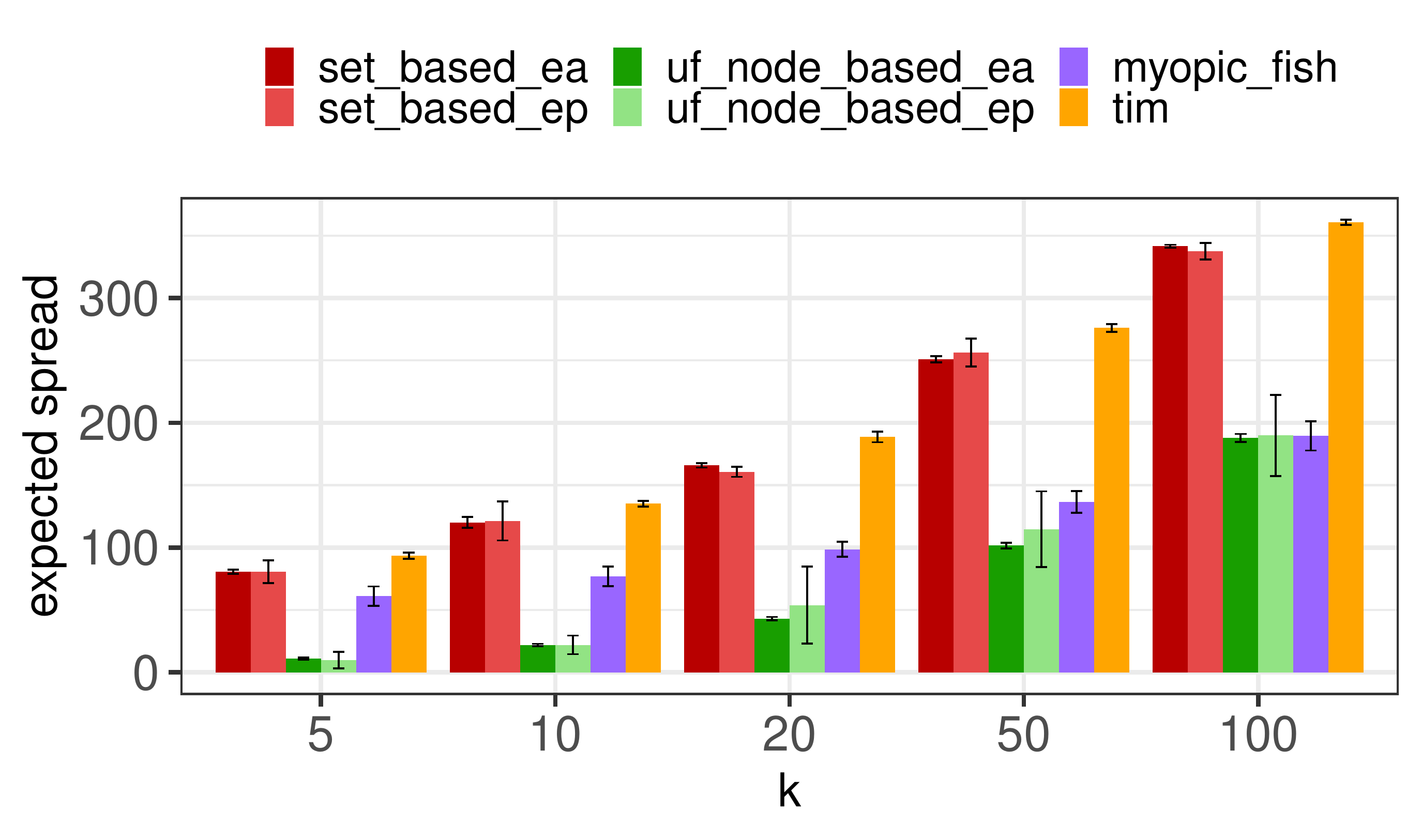}

    \caption{Results for the \texttt{Arenas} network for increasing $k = 5, 10 , 20, 50, 100$. The minimum community probability is shown on the left, while expected spread is shown on the right. BFS community structure with 
    (1) $10$ communities,
    (2) $n/10$ communities.}%
    \label{fig: arenas}%
\end{figure}

\begin{figure}[ht]
    \centering
    \includegraphics[trim={0 1.5cm 0 0}, clip, width=.49\linewidth]{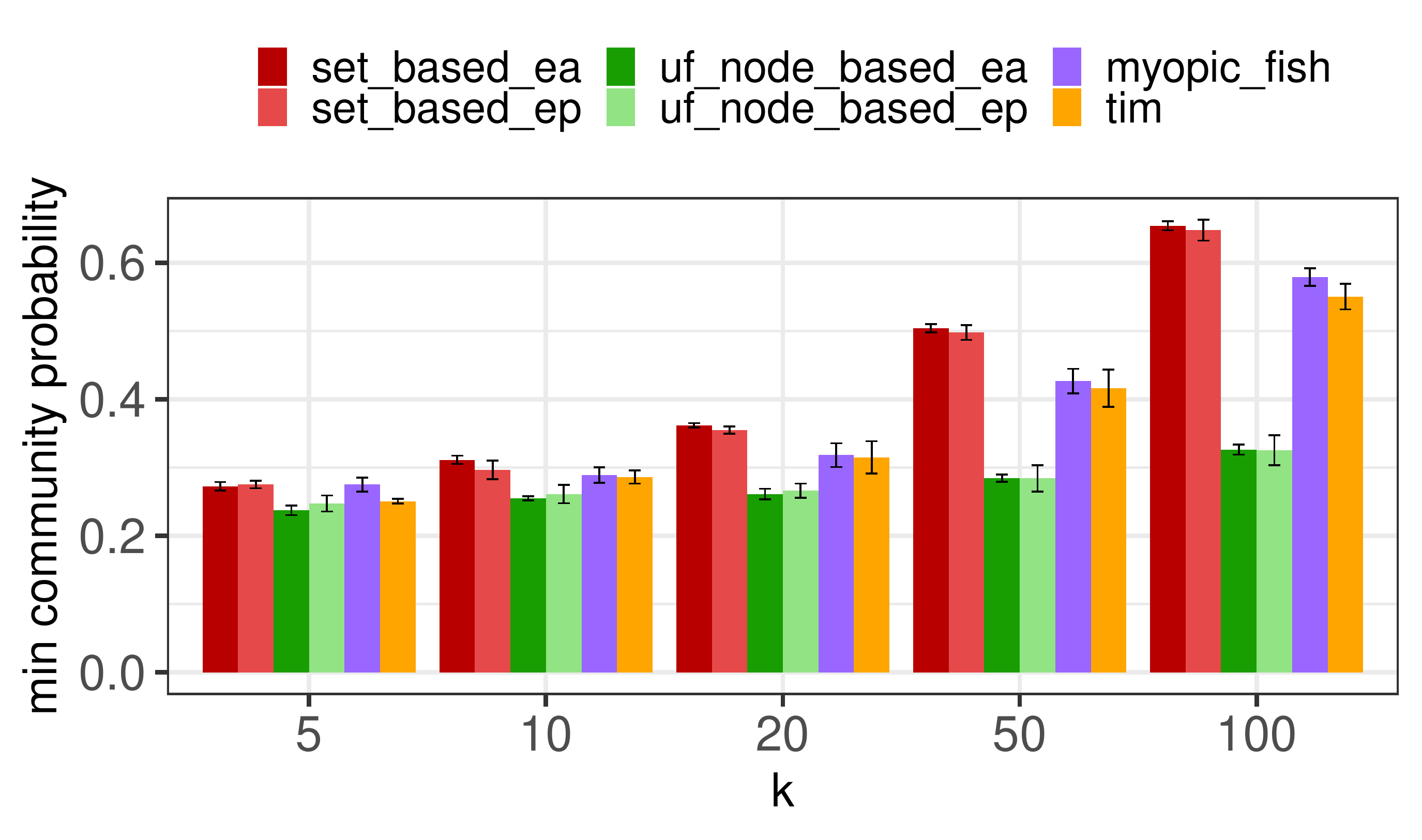}
    \includegraphics[trim={0 1.5cm 0 0}, clip, width=.49\linewidth]{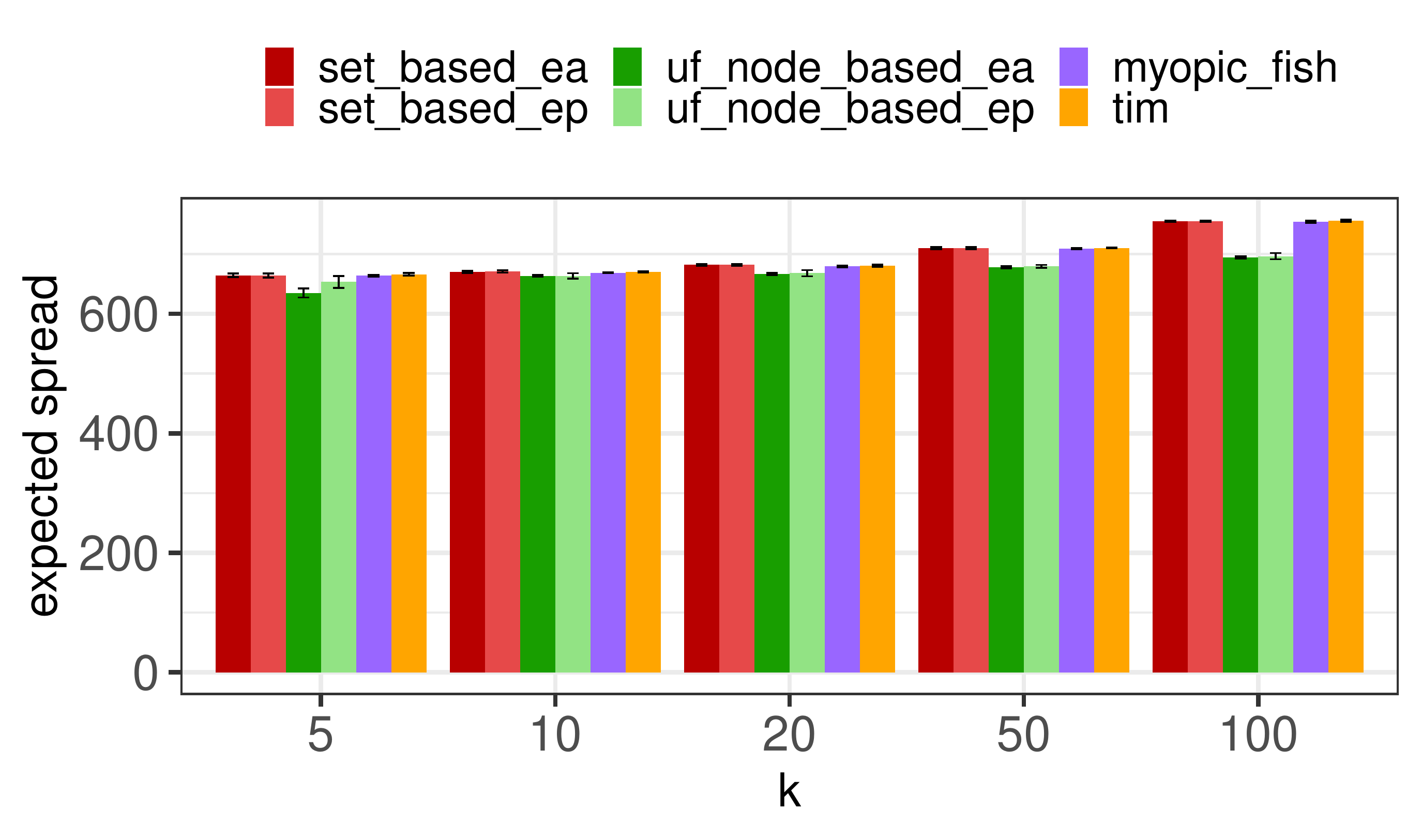} 
    
    \includegraphics[trim={0 1.5cm 0 3.5cm}, clip, width=.49\linewidth]{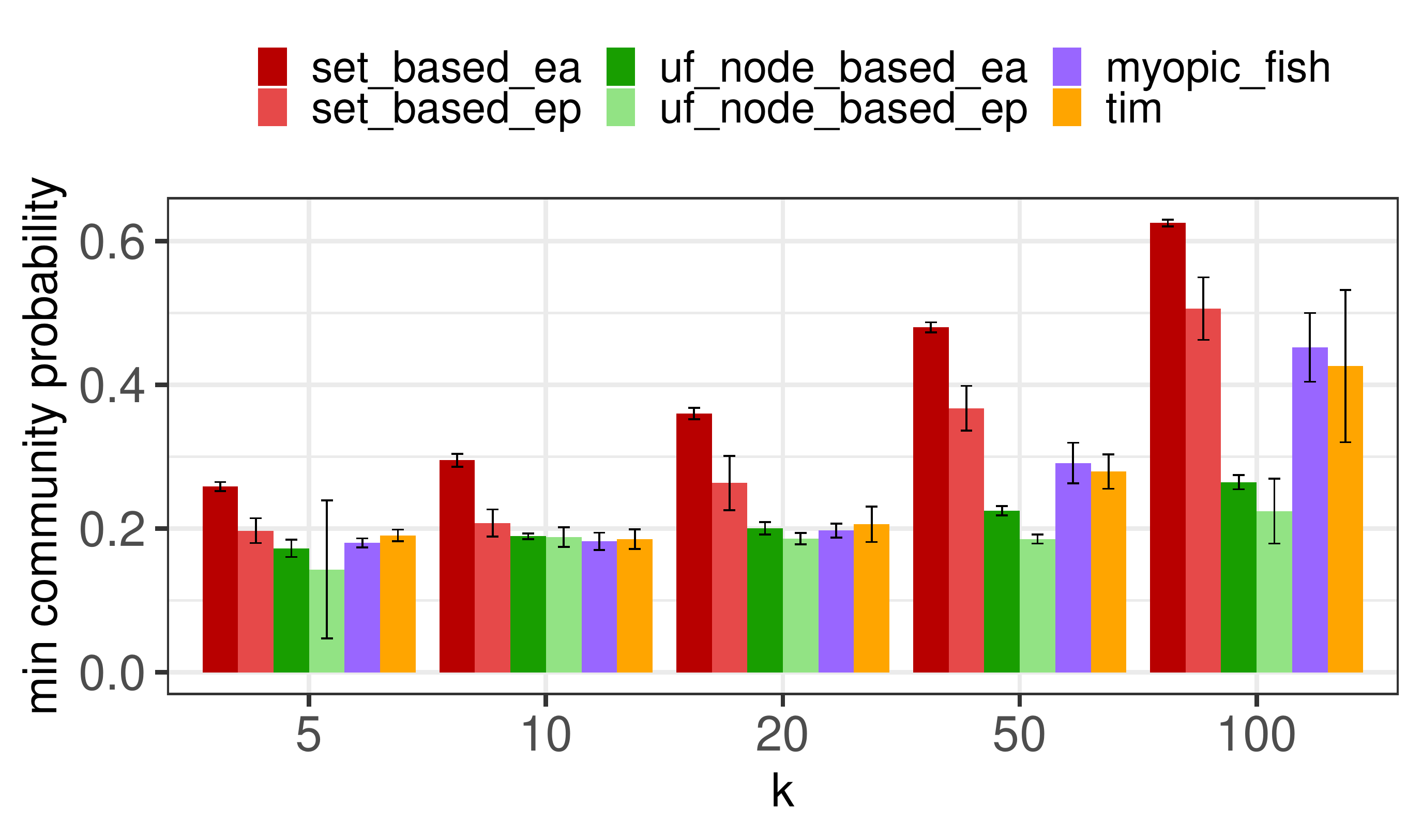}
    \includegraphics[trim={0 1.5cm 0 3.5cm}, clip, width=.49\linewidth]{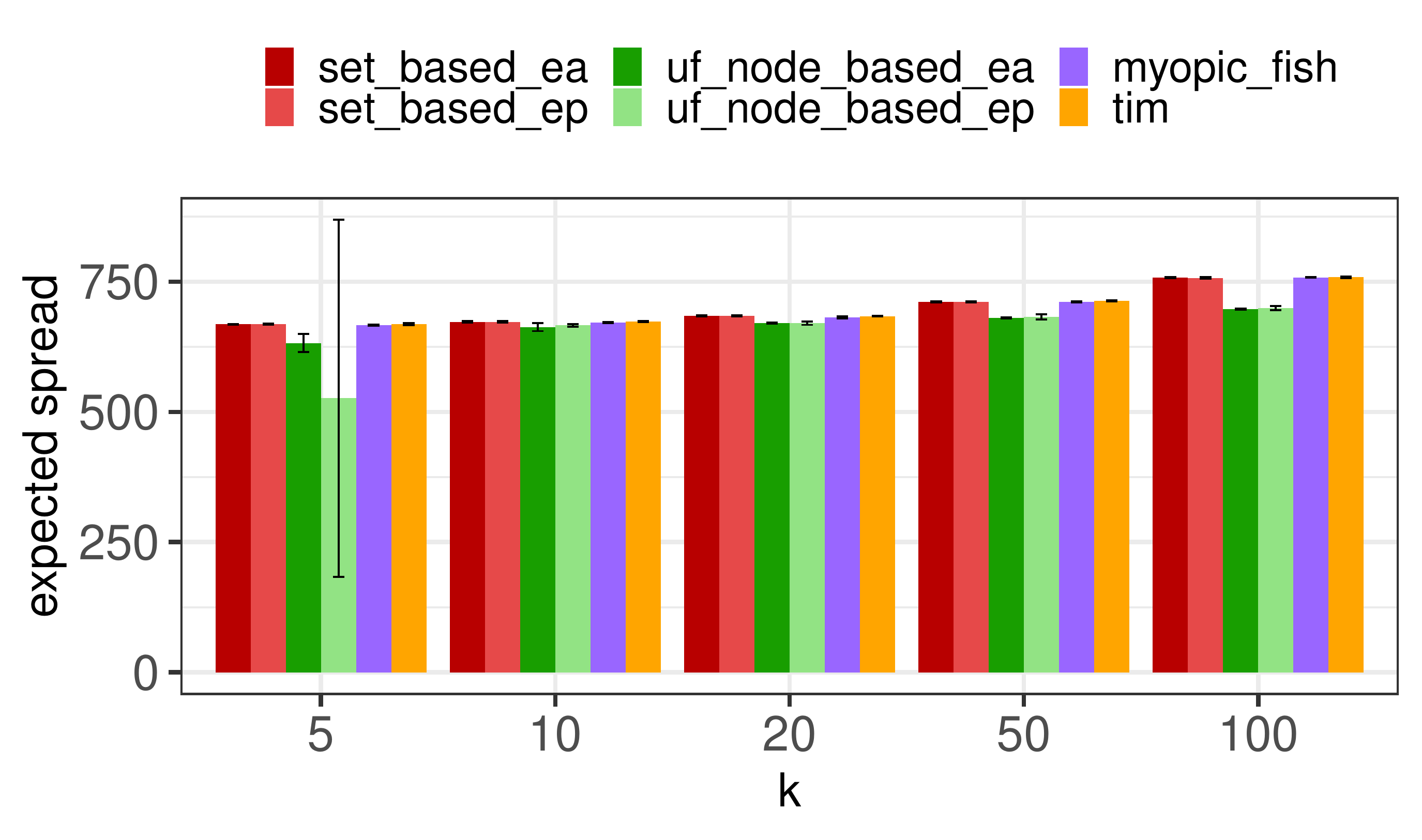}
    
    \includegraphics[trim={0 0 0 3.5cm}, clip, width=.49\linewidth]{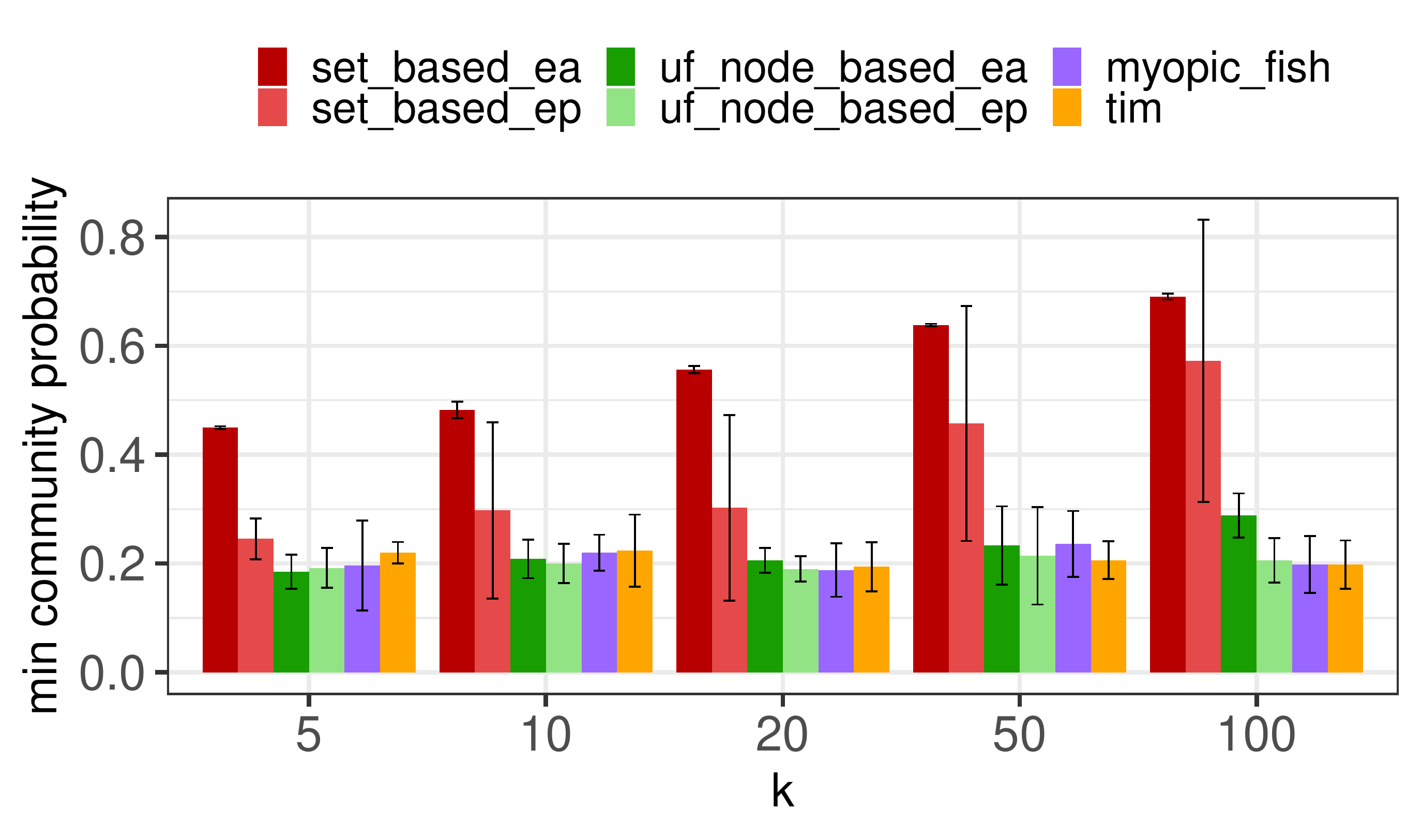}
    \includegraphics[trim={0 0 0 3.5cm}, clip, width=.49\linewidth]{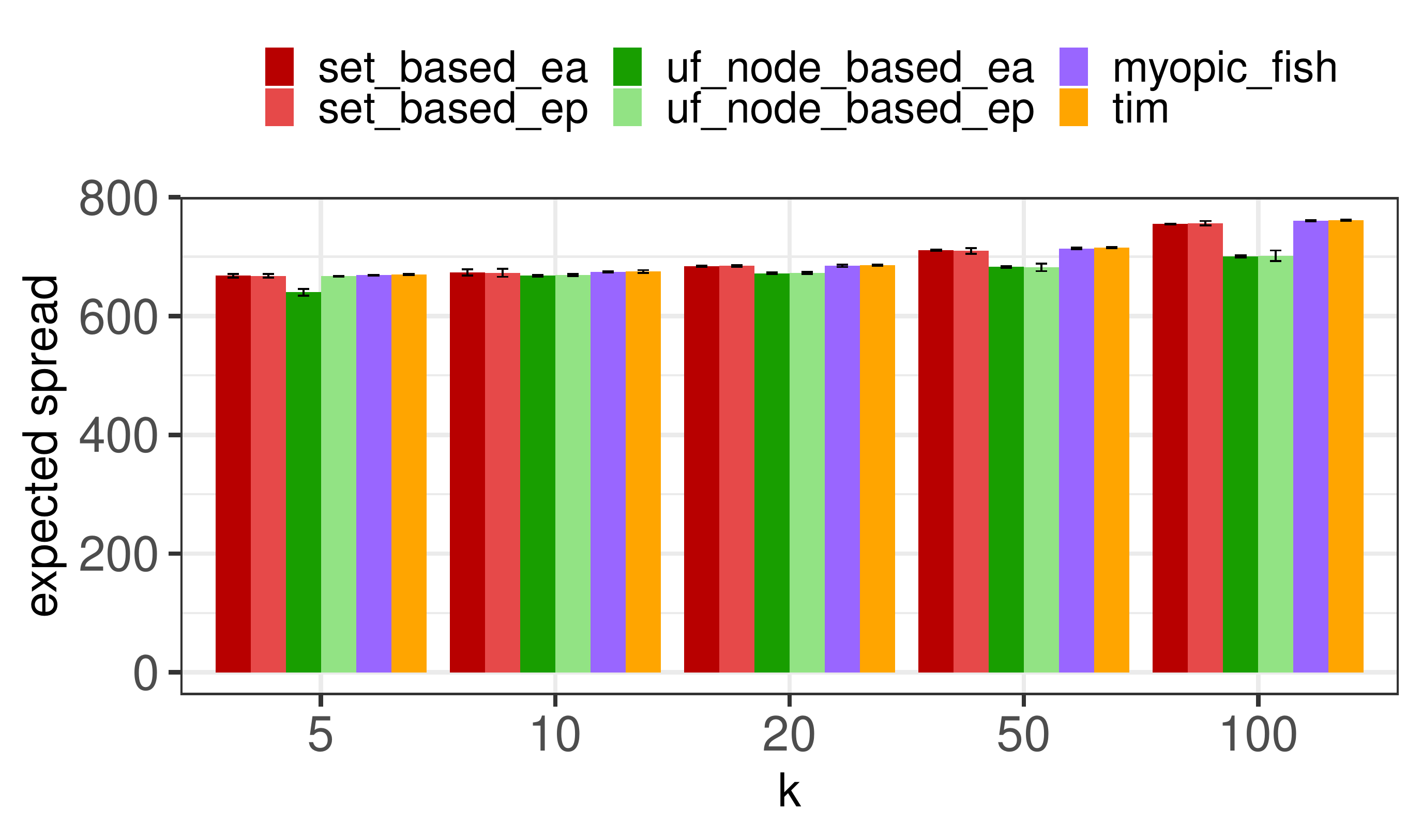}
    
    \caption{Results for \texttt{email-Eu-core} network for increasing $k = 5, 10 , 20, 50, 100$. The minimum community probability is shown on the left, while expected spread is shown on the right.
    (1) BFS community structure with $10$ communities, 
    (2) BFS community structure with $n/10$ communities, 
    (3) community structure induced by departments.}%
    \label{fig: eu-core}%
\end{figure}

\subparagraph{Evaluation on Networks used by Fish et al.}
We start with the networks that were also used in the study of Fish et al.~\cite{fish2019gaps}. We experiment with different community structures, both the BFS community structures with different community sizes and the random imbalanced community structure. The results can be found in Figure~\ref{fig: fish}. We omit the results for the BFS community structure with only 2 communities as they are very similar to the case of $10$ communities. We observe that in all cases the ex-ante value of our algorithm is dominating over all other values. In some cases, the values achieved by TIM are comparable but these are instances where all algorithms perform very close to each other and the minimum community probabilities are rather high anyways. Furthermore, on several instances, e.g., in the case of 10 BFS communities, the ex-post values of our algorithm are significantly better than the ex-post values of all other methods.

\subparagraph{Fairness vs.\ Efficiency on Email-Networks.}
We proceed by focusing on the email networks \texttt{Arenas} and \texttt{email-Eu-core} and comparing the fairness achieved by the different algorithms with the efficiency, i.e., expected spread. We again use the BFS community structures with different community sizes.
In the case of the \texttt{email-Eu-core} dataset, we evaluate the different algorithms on the community structure induced by the departments as well. We evaluate the algorithms for increasing values of $k = 5, 10 , 20, 50, 100$. We observe that \texttt{set\_based} performs best in terms of fairness among all ex-ante as well as ex-post values both on the \texttt{Arenas} dataset as well as on the \texttt{email-Eu-core} dataset. In terms of efficiency, we observe that on the \texttt{Arenas} dataset, \texttt{tim} and \texttt{set\_based} perform similarly good and much better than the other competitors. For the \texttt{email-Eu-core} dataset, we see that in terms of efficiency all algorithms (even \texttt{uf\_node\_based} for small values of $k$) perform almost identical.




\begin{figure}[htp]
    \centering
    \includegraphics[trim={0 1.5cm 0 0}, clip, width=.49\linewidth]{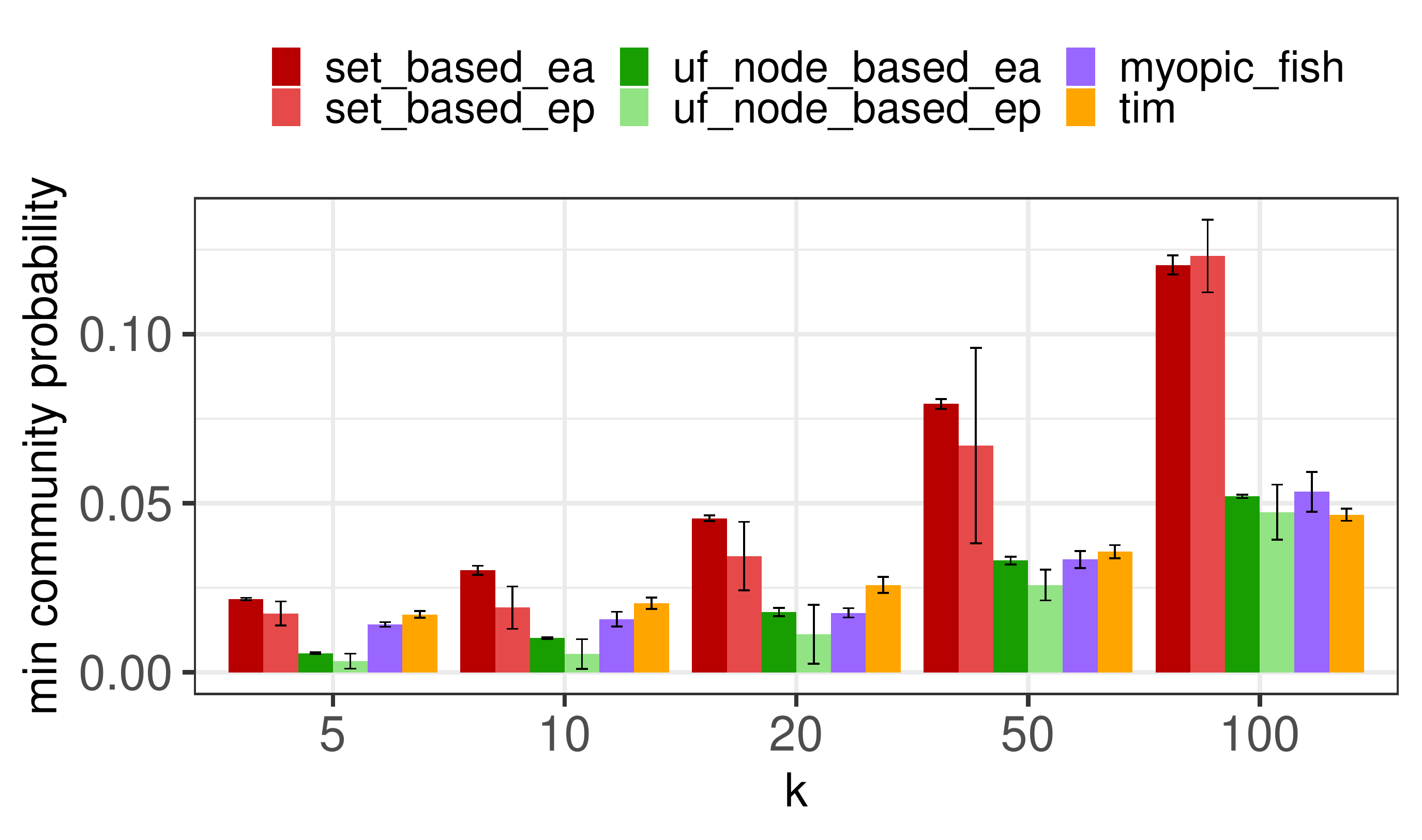}
    \includegraphics[trim={0 1.5cm 0 0}, clip, width=.49\linewidth]{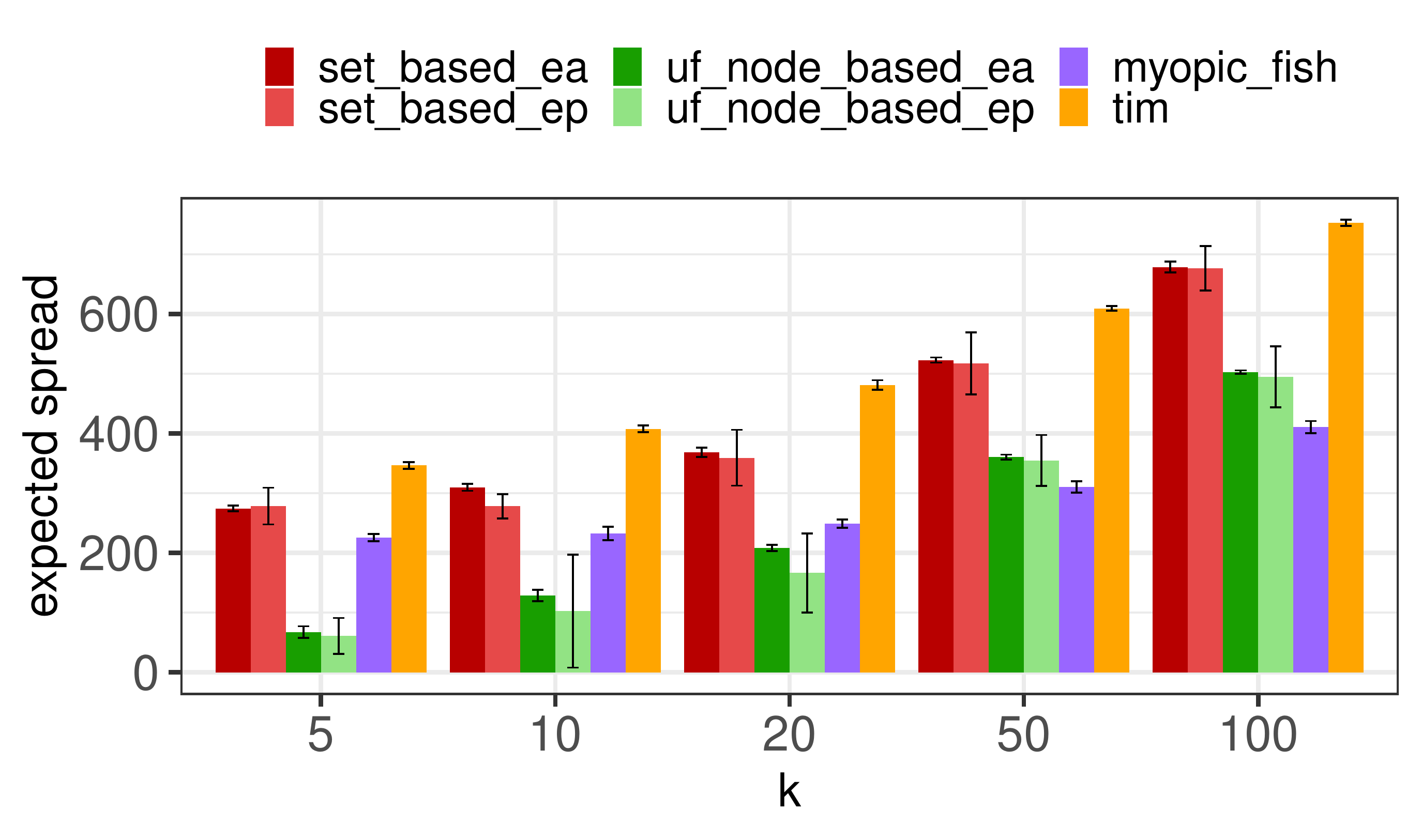}

    \includegraphics[trim={0 1.5cm 0 3.5cm}, clip, width=.49\linewidth]{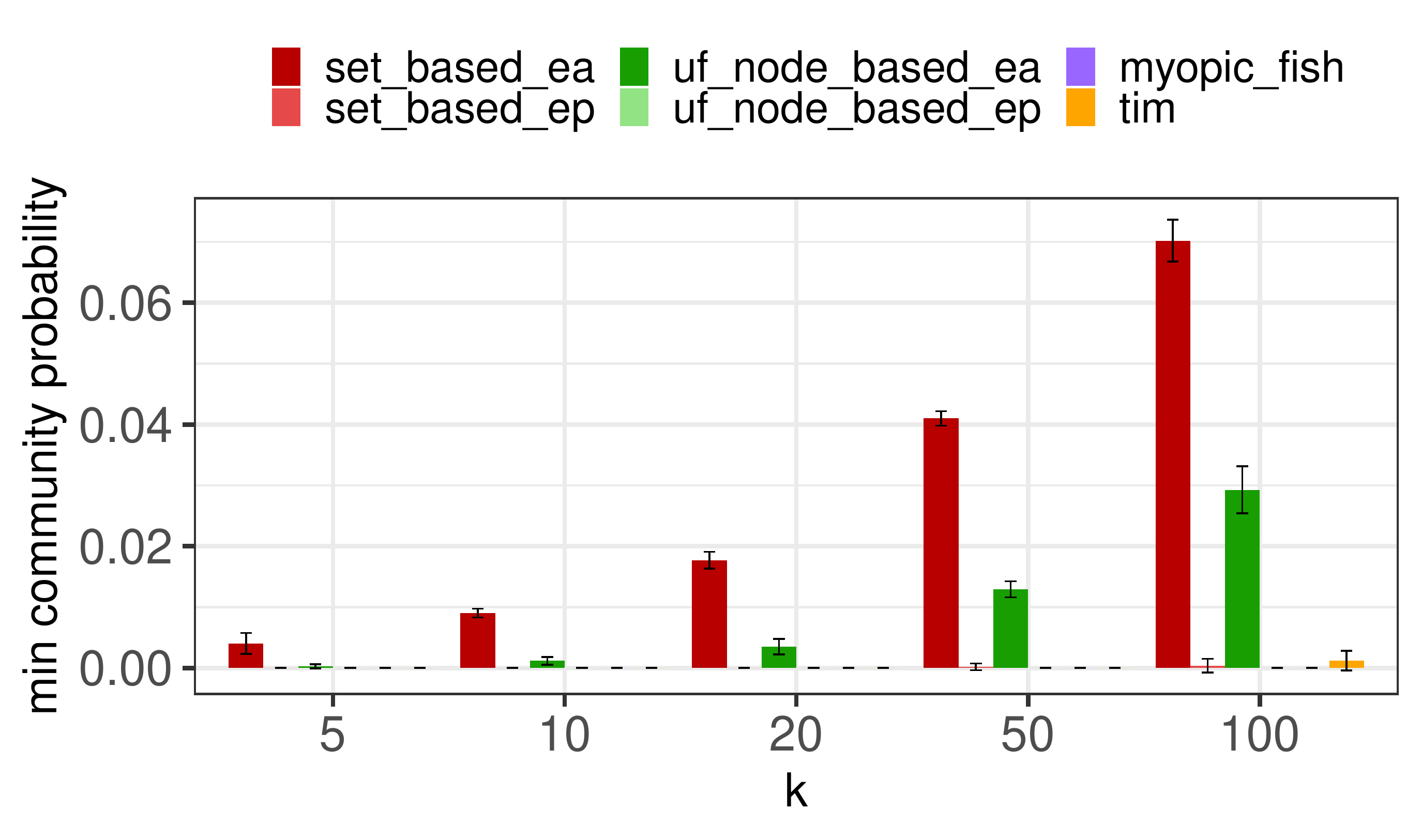}
    \includegraphics[trim={0 1.5cm 0 3.5cm}, clip, width=.49\linewidth]{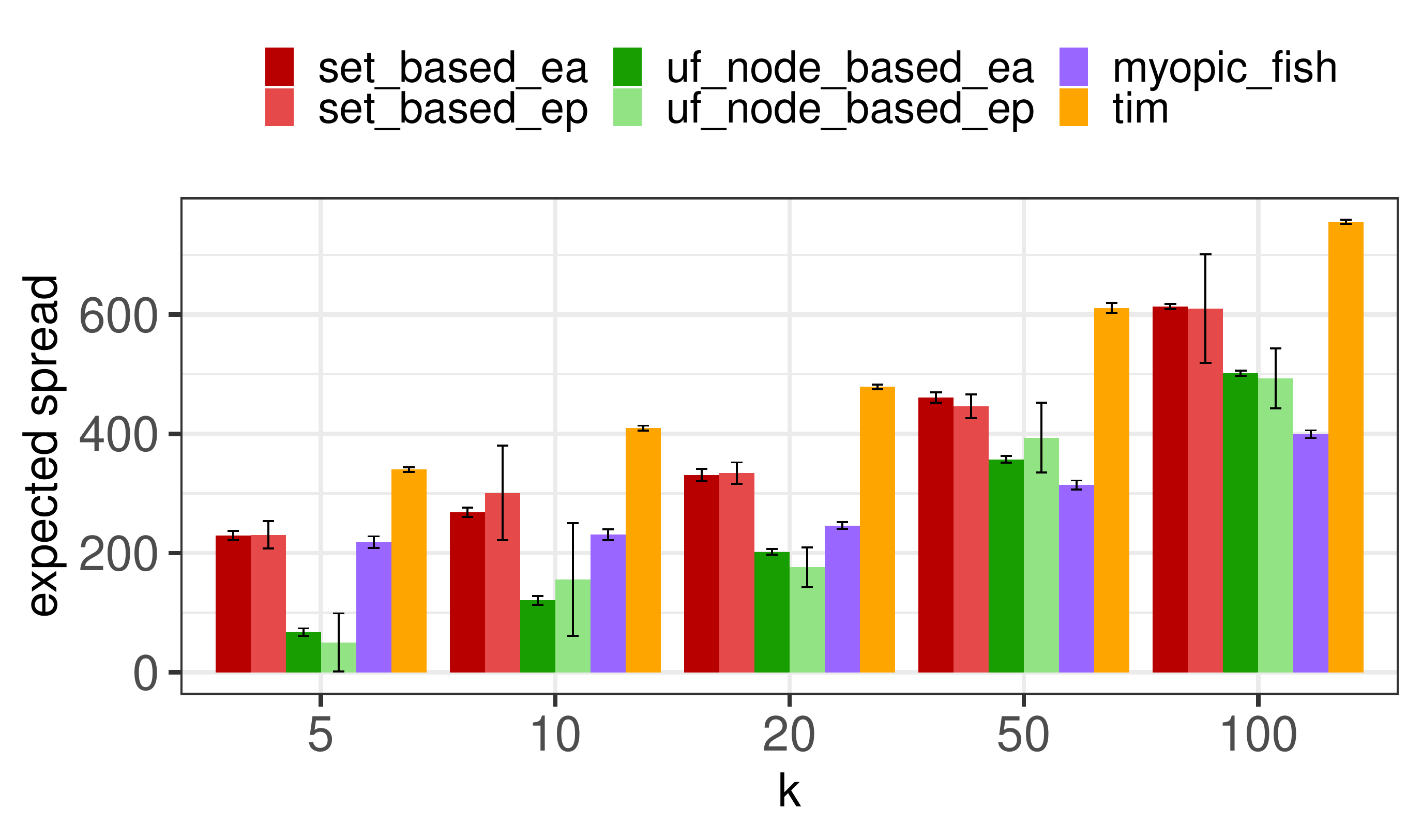}
    
    \includegraphics[trim={0 1.5cm 0 3.5cm}, clip,width=.49\linewidth]{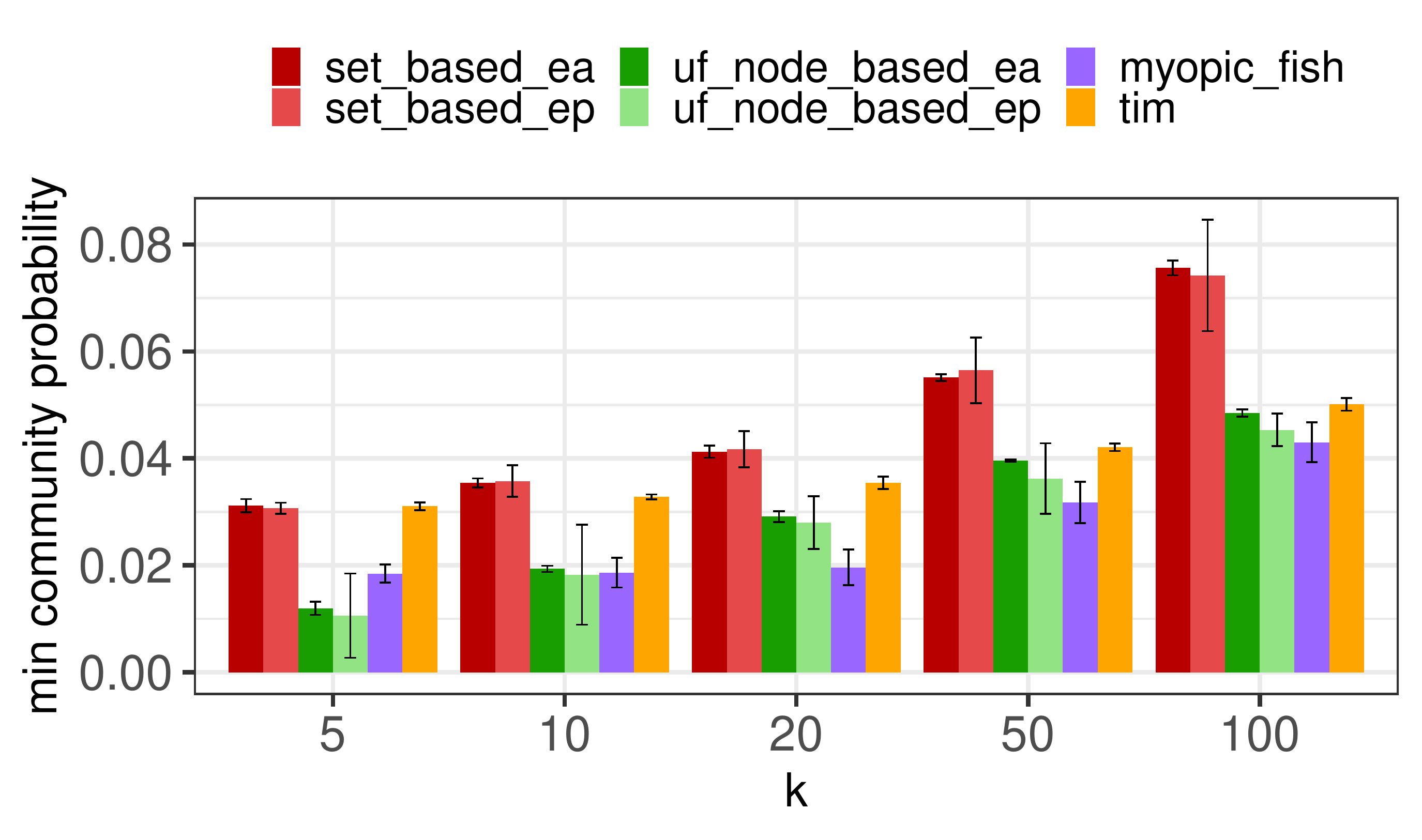}
    \includegraphics[trim={0 1.5cm 0 3.5cm}, clip,width=.49\linewidth]{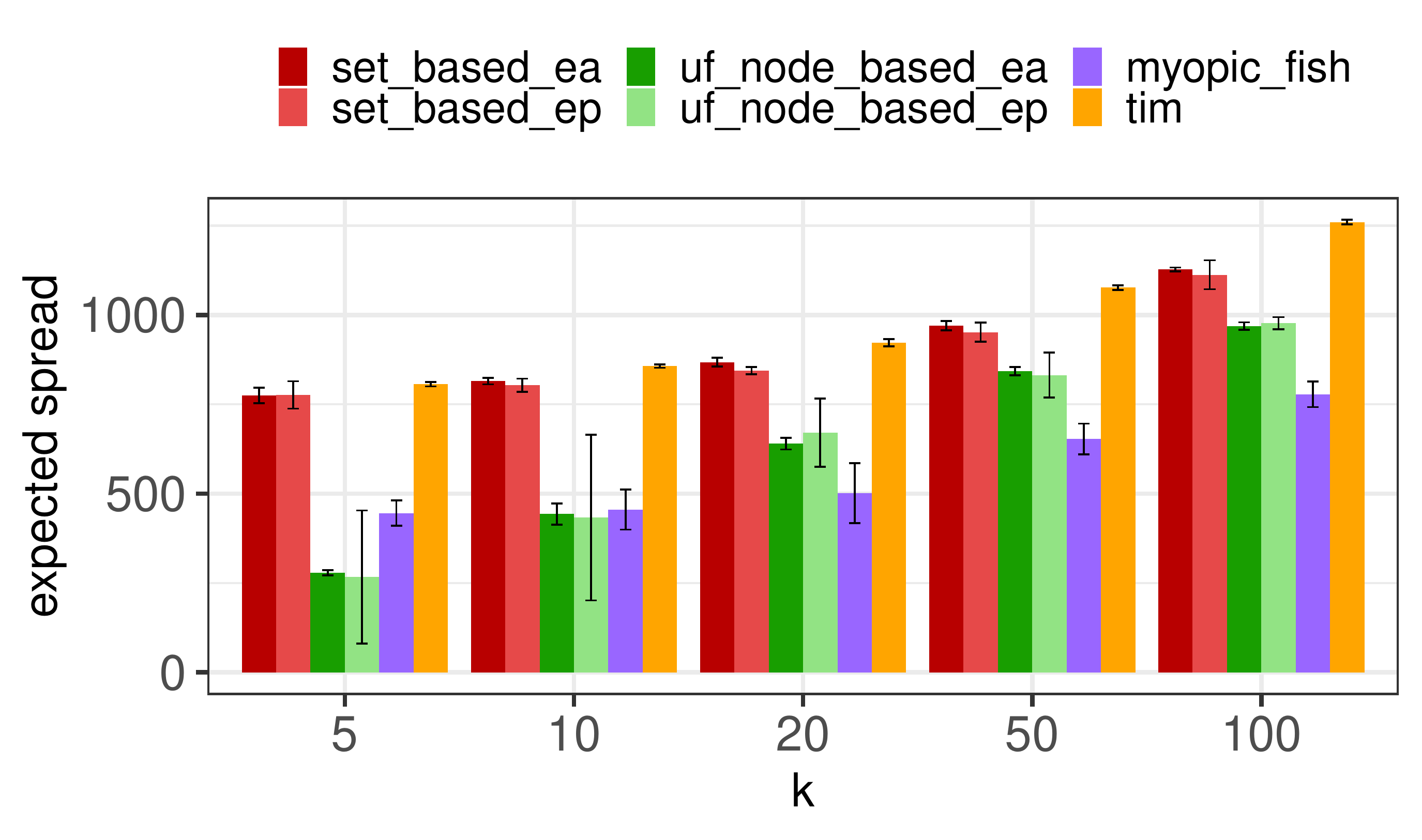}

    \includegraphics[trim={0 0 0 3.5cm}, clip,width=.49\linewidth]{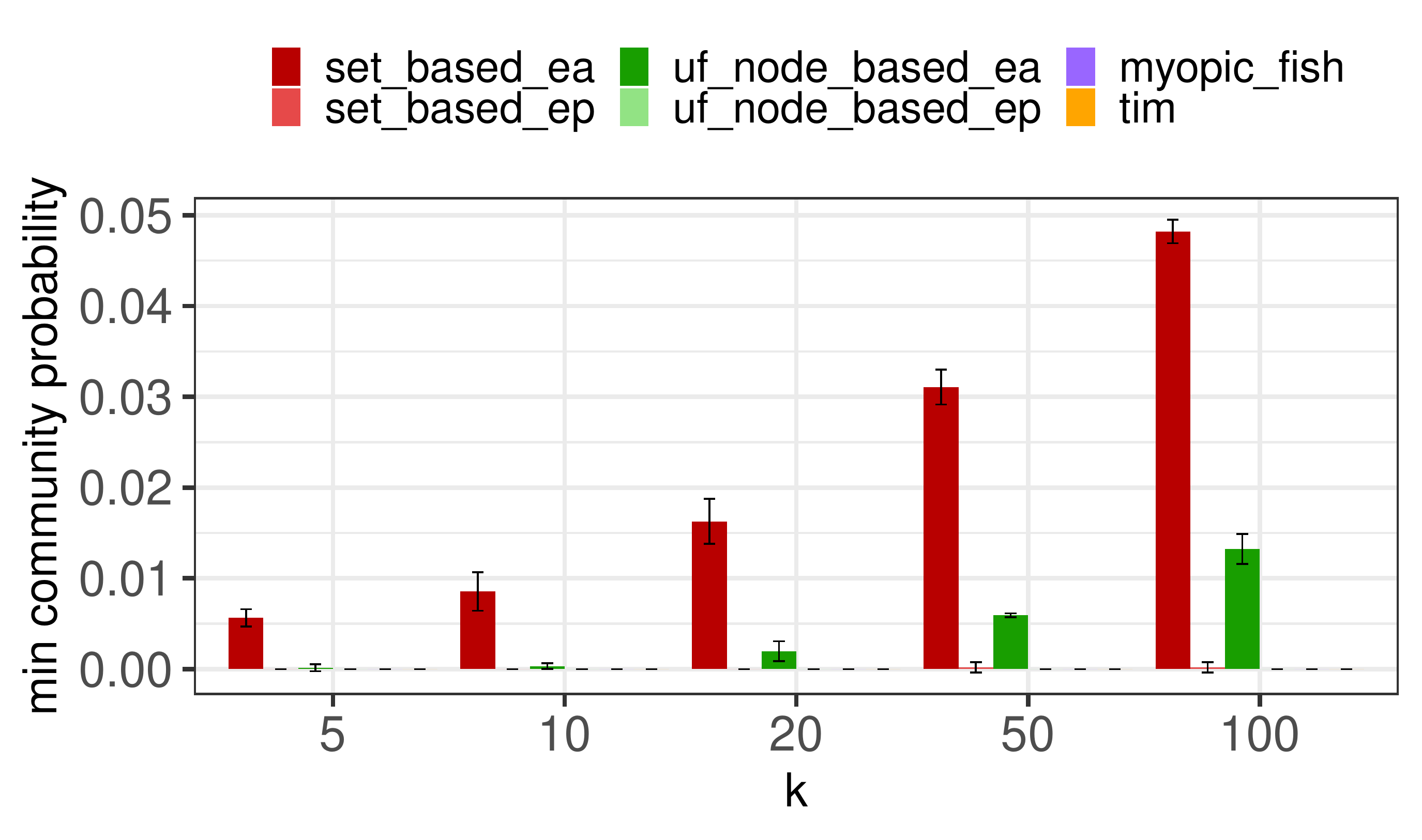}
    \includegraphics[trim={0 0 0 3.5cm}, clip,width=.49\linewidth]{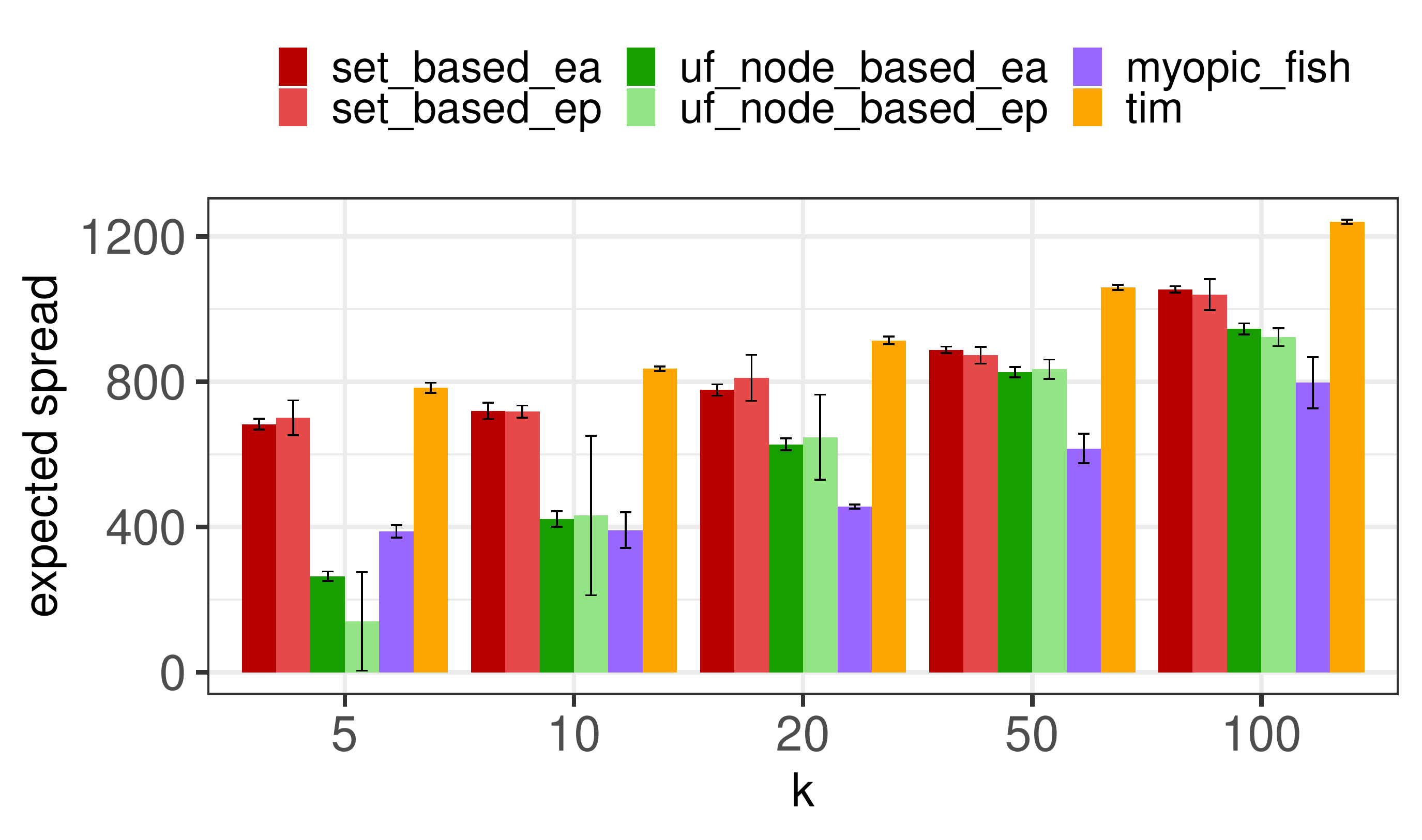}

    \caption{Results for co-authorship networks for increasing $k= 5, 10 , 20, 50, 100$. The minimum community probability is shown on the left, while expected spread is shown on the right. BFS community structure
    (1) \texttt{ca-GrQc} with $10$ communities,
    (2) \texttt{ca-GrQc} with $n/10$ communities,
    (3) \texttt{ca-HepTh} with $10$ communities
    (4) \texttt{ca-HepTh} with $n/10$ communities.}
    \label{fig: co-author}
\end{figure}

\subparagraph{Fairness vs.\ Efficiency on Co-Authorship Networks.}
We turn to the two co-authorship datasets \texttt{ca-GrQc} and \texttt{ca-HepTh} and evaluate the fairness and efficiency achieved by the different algorithms. Focusing on the fairness values first, we observe that in a setting with $n/10$ communities, no algorithm achieves a positive ex-post value. Instead the two randomized algorithms \texttt{set\_based} and \texttt{uf\_node\_based} do achieve a significantly non-zero ex-ante value, the results of \texttt{set\_based} being more than twice as high compared to the values of \texttt{uf\_node\_based}. For $10$ communities, we again end up in a setting where both the ex-ante and ex-post values of \texttt{set\_based} dominate over all other algorithms. The discrepancy between the ex-post value achieved by \texttt{set\_based} and the other algorithms appears to become more and more pronounced with increasing values of $k$. In terms of efficiency, we observe that again \texttt{tim} and \texttt{set\_based} perform the best, while there is a bigger advantage for \texttt{tim} in this case than with most other instances tested. Note however that \texttt{set\_based} does achieve significantly better fairness values as compared to \texttt{tim} in all settings where it falls behind \texttt{tim} in terms of efficiency.

\subparagraph{Fairness vs.\ Efficiency on \texttt{com-Youtube} Network.}
We conclude with the \texttt{com-Youtube} network and evaluate the different algorithms in terms of fairness and efficiency. In this network we choose edge weights uniformly at random in the interval $[0, 0.1]$.
In the left plot, we observe that the ex-ante fairness values achieved by \texttt{set\_based} are significantly better than the values of all other algorithms, especially with increasing values of $k$. The ex-post values of all algorithms are much smaller and close to each other. In terms of efficiency, we observe that the results of all algorithm are very similar. Note that \texttt{set\_based} performs almost the same as \texttt{tim} in the expected spread while being significantly better than \texttt{tim} in terms of fairness.

\begin{figure}[H]
    \centering
    \includegraphics[width=.49\linewidth]{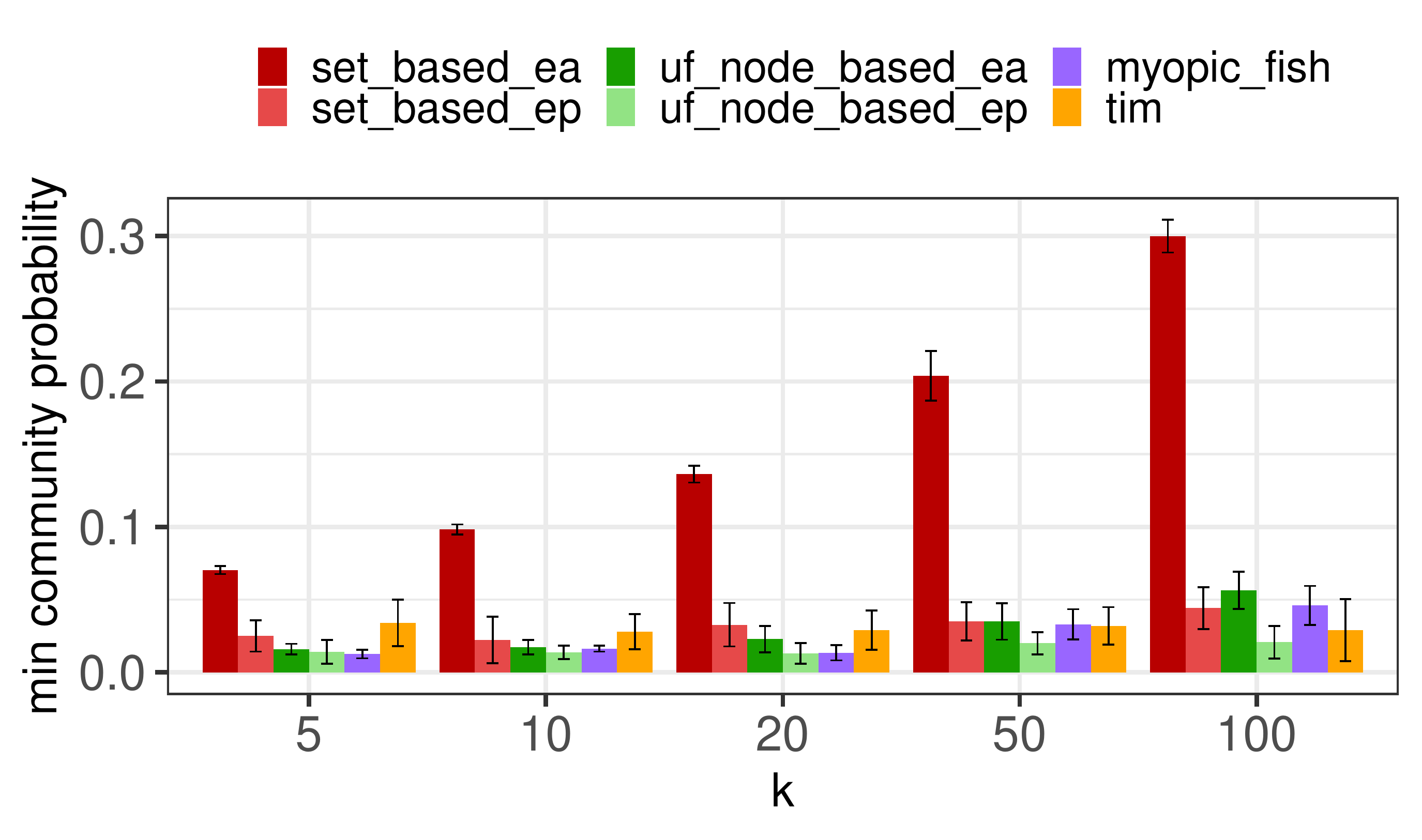}
    \includegraphics[width=.49\linewidth]{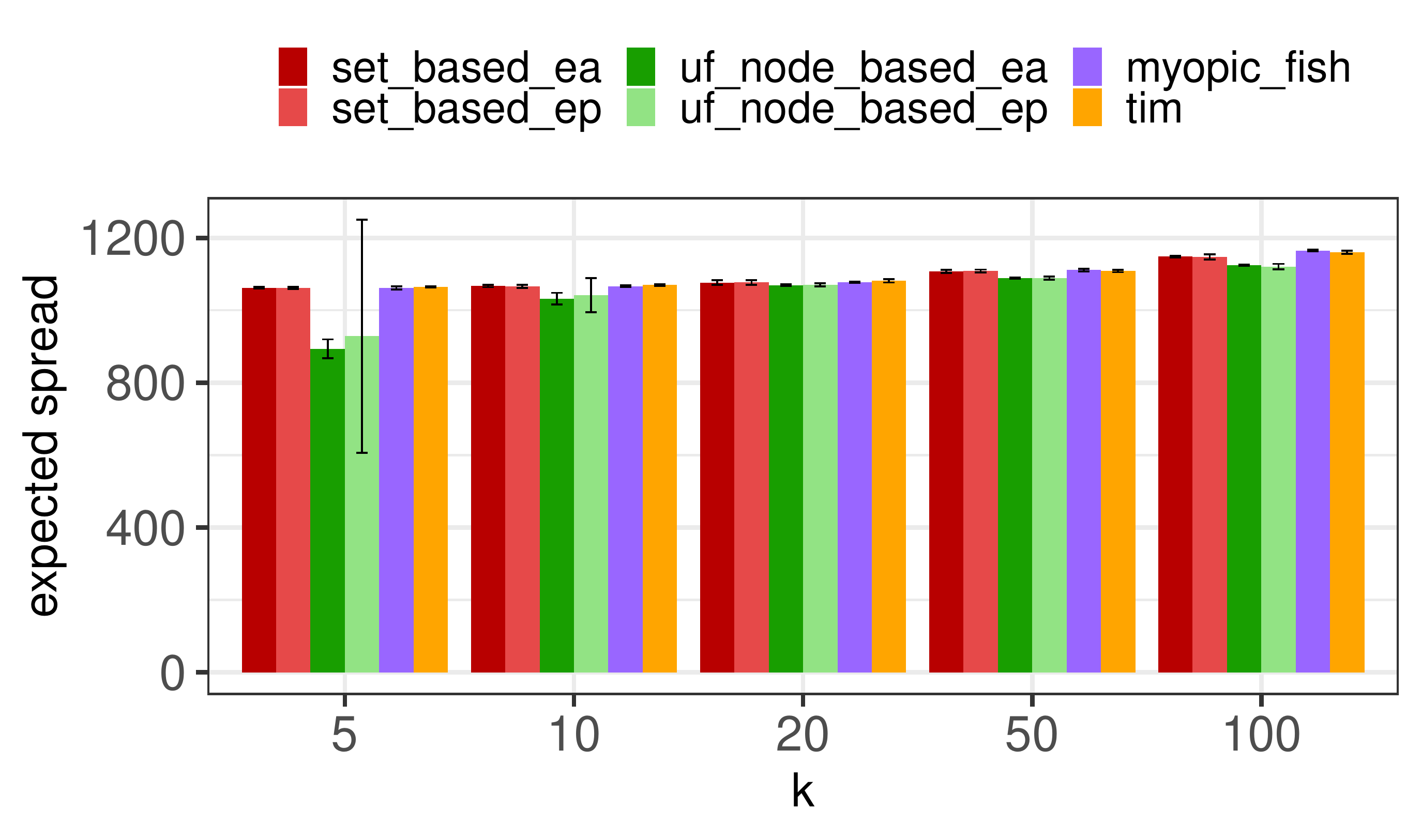}

    \caption{Results for the \texttt{com-Youtube} network for increasing $k = 5, 10 , 20, 50, 100$. The minimum community probability is shown on the left, while expected spread is shown on the right.}%
    \label{fig: com-youtube}%
\end{figure}

\section{Conclusion}
In this paper, we studied the problem of determining key seed nodes for influence maximization in social networks in an efficient \emph{and} fair manner. Notably, we have designed approximation algorithms achieving a constant multiplicative factor of $1-1/e$ (plus an additive arbitrarily small error) for the objective of maximizing the maximin influence received by a community. For one of the two variants of our problem, we have shown that this approximation factor is best possible (up to the arbitrarily small error term).
We achieved our algorithmic result by using randomized strategies, thus enlarging the solution set and enabling us to find fairer solutions ex-ante. Our detailed experimental study confirms the increase in ex-ante fairness achieved over previous methods~\cite{fish2019gaps,tsang2019group}, indicating that randomness as source of fairness in influence maximization is very promissing to be further explored. Interestingly, our study shows that even the ex-post fairness achieved by our methods frequently outperforms the fairness achieved by other tested methods. While our theoretical results predict that the price of fairness in the studied setting can be very bad, our experimental evaluation instead indicates that the loss in efficiency when using our methods is very limited.

Several directions are conceivable as future work. Improving our approximation guarantees for the set-based problem or providing a matching approximation hardness result seems a challenging direction of exploration. 
Tightening the result on the gap between node-based and set-based problem is certainly an interesting question. 
Further engineering the implementations of our methods 
could yield further gains in efficiency, particularly in terms of runtime. 
Lastly, we believe that the idea of using randomization to increase the fairness of solutions for influence maximization may be used for other fairness criteria as, e.g., the group rational criterion of Tsang et al.~\cite{tsang2019group}.

\bibliography{references}
\end{document}